\newtheorem{theorem}{Theorem}[section]
\newtheorem{corollary}[theorem]{Corollary}
\newtheorem{lemma}[theorem]{Lemma}
\newtheorem{proposition}[theorem]{Proposition}
\newtheorem{remarkth}[theorem]{Remark}
\newenvironment{remark}{\begin{remarkth}\upshape}{\hfill$\diamond$\end{remarkth}}
\theoremstyle{definition}
\newtheorem{definition}[theorem]{Definition}
\newtheorem{example}[theorem]{Example}
\newcommand{\llbracket}{\lbrack\! \lbrack}
\newcommand{\rrbracket}{\rbrack\! \rbrack}
\def\subM{{\mbox{\tiny{$\mathcal{M}$}}}}
\def\subQ{{\mbox{\tiny{$Q$}}}}
\def\subP{{\mbox{\tiny{$P$}}}}
\def\red{{\mbox{\tiny{red}}}}
\def\Hor{{\mbox{H\tiny{or}}}}
\def\subB{{\mbox{\tiny{$B$}}}}
\def\nh{{\mbox{\tiny{nh}}}}
\def\g{\mathfrak{g}}
\def\M{\mathcal{M}}
\def\C{\mathcal{C}}
\def\vecOm{\boldsymbol{\Omega}}
\title{Nonholonomic momentum map reduction and a Chaplygin-type foliation} 
\author{Paula Balseiro\textsuperscript{1,a} and Danilo Machado Tereza\textsuperscript{1,b}}
\date{}
\begin{document}
\maketitle

\let\thefootnote\relax\footnotetext{\textsuperscript{1.} Universidade Federal Fluminense, Instituto de Matemática e Estatística, Rua Mario Santos Braga S/N, 24020-140, Niterói, Rio de Janeiro, Brasil. \\ \textsuperscript{a.} \href{mailto:pbalseiro@id.uff.br}{pbalseiro@id.uff.br}; and \textsuperscript{b.} \href{mailto:danilomt@id.uff.br}{danilomt@id.uff.br}}

\begin{abstract}    
    This paper presents a set-up for momentum map reduction of nonholonomic systems with symmetries, extending previous constructions in \cite{PaulaGarciaToriZuccalli, RecCortes2002}, based on the existence of certain conserved quantities and making essential use of the nonholonomic momentum bundle map of \cite{Bloch1996}. We show that the reductions of the momentum level sets carry an almost symplectic form codifying the reduced dynamics. These reduced manifolds are the leaves of the foliation associated with an almost Poisson bracket obtained by a (dynamically compatible) gauge transformation of the nonholonomic bracket. We show that each leaf is a {\em Chaplygin-type leaf}, in the sense that it is isomorphic to a cotangent bundle with the canonical symplectic form plus a ``magnetic term''.      
\end{abstract}

\setcounter{tocdepth}{2} 
\tableofcontents 

\newpage
\section{Introduction} \label{S:Intro}

In this paper we study a momentum map reduction for nonholonomic systems with symmetries. Such systems are described in terms of almost Poisson structures, so many of the usual properties of Hamiltonian actions do not hold in this setting. To overcome the difficulties posed by this fact, upon the assumption of existence of special conserved quantities we consider modifications of the nonholonomic brackets by suitable {\it gauge transformations by 2-forms} which allow us to effectively use the so-called ``nonholonomic momentum map'', originally defined in \cite{Bloch1996}, in the reduction procedure. The resulting reduced spaces are almost symplectic manifolds of ``Chaplygin-type'', in the sense that they can be identified with cotangent bundles equipped with the canonical symplectic form plus a (non-closed) magnetic term, and they can be seen as leaves of a foliation of an almost Poisson structure that describes the reduced nonholonomic dynamics. The results of the present paper allow us to understand the role of the nonholonomic momentum map in relating nonholonomic brackets to symmetries.
 
The usual setting for Hamiltonian reduction of Poisson manifolds
(see e.g. \cite{PoissonMarsdenRatiu}) is that of a $G$-invariant Poisson manifold $(P,\{\cdot,\cdot\}_{\subP})$ with a momentum map $J : P\rightarrow \mathfrak{g}^{\ast}$ \cite{MR1999}, i.e.,
$J$ is $Ad^{\ast}$-equivariant and, for all $\eta\in\mathfrak{g}$,  
\begin{equation} \label{Eq:Mom-Def}
    \{\cdot,J_\eta \}_\subP = \eta_\subP,
\end{equation}
where $\eta_\subP$ is the infinitesimal generator of the action at $\eta$. Then, with usual regularity conditions,
the quotient $J^{-1}(\mu)/G_{\mu}$ is a manifold that inherits a unique Poisson structure $\{\cdot,\cdot\}_{\mu}$ defined, for $f,h\in C^\infty(J^{-1}(\mu)/G_\mu)$, by 
\begin{equation}\label{Eq:Marsden-Ratiu-Red1}
   \{f,h\}_{{\mu}}\circ \rho_{\mu} =  \{F,H\}_{\subP}\circ \iota_{\mu},
\end{equation}
where $F,G$ are smooth extensions to $P$ of the functions $f\circ \rho_{\mu}$ and $g\circ \rho_{\mu}$ defined on $J^{-1}(\mu)$ (with differentials vanishing on tangent spaces to the $G$-orbits) and with $\iota_{\mu} : J^{-1}(\mu) \rightarrow P$ and $\rho_{\mu} : J^{-1}(\mu) \rightarrow J^{-1}(\mu)/G_{\mu}$ the inclusion map and the orbit projection, respectively.  

Note that condition \eqref{Eq:Mom-Def} implies the following (inter-related) properties:

\begin{enumerate}
    \item[$(i)$]  The vertical distribution (tangent to the $G$-orbits) belongs to the characteristic distribution of the Poisson bracket.  
    
    \item[$(ii)$] For a $G$-invariant Hamiltonian, the functions $J_\eta$ are conserved by the dynamics. 
    
    \item[$(iii)$] The infinitesimal generators of the $G$-action are Hamiltonian vector fields: $\eta_P = X_{J_\eta}$.  
\end{enumerate}

In the nonholonomic setting these three conditions are not necessarily satisfied and we have to understand how infinitesimal generators, conserved quantities and momentum map interact. 

Recall that a nonholonomic system on a manifold $Q$ is a mechanical system with constraints in the velocities, that is, the permitted velocities define a (nonintegrable) subbundle $D\subset TQ$ \cite{Bloch1996,Bloch2003,RecCortes2002,Cushman2015}. Classical examples are mechanical systems with rolling constraints such as a disk rolling on a plane or, more generally, solids --such as spheres, ellipsoids, snakeboards-- rolling without sliding on different smooth surfaces, see e.g., \cite{Bloch2003,BorisovMamaev2002}.  
From a geometric perspective, a nonholonomic system is determined by a triple $(\M,\{\cdot, \cdot\}_\nh,H_\subM)$ where $(\mathcal{M},\{\cdot, \cdot\}_{\nh})$ is an almost Poisson manifold, with $\M$ a submanifold of $T^*Q$ induced by $D$, and $H_\subM$ the corresponding Hamiltonian function (of mechanical type). The almost Poisson bracket $\{\cdot, \cdot\}_{\nh}$ is called the {\it nonholonomic bracket} \cite{IbLMM99, Marle98, Schaft-Maschke-1994} and it has a non-integrable characteristic distribution denoted by $\C$. The nonholonomic dynamics is given by the integral curves of the nonholonomic vector field $X_\nh$ on $\M$ given by $X_\nh = \{\cdot, H_\subM\}_{\nh}$.

We consider nonholonomic systems admitting symmetries given by the (free and proper) action of a Lie group $G$ on $Q$. The lifted action to $T^*Q$ leaves $\M$ and the bracket $\{\cdot, \cdot\}_\nh$ invariant.  The usual scenario for $G$-invariant nonholonomic system is to consider the {\it dimension assumption} \cite{Bloch1996}, that is, denoting by $\mathcal{V}$ the vertical distribution on $\M$, we assume that 
$$
    T\M= \mathcal{C} + \mathcal{V}.
$$
Note that the dimension assumption goes in the opposite direction of condition $(i)$ for Hamiltonian reduction above, since the vertical spaces complement the characteristic distribution.  
Following \cite{Bloch1996} we define the  (usually not integrable) distribution $\mathcal{S}:=\mathcal{C}\cap \mathcal{V}$ and the bundle $\mathfrak{g}_S\to Q$ with fibers given by $(\mathfrak{g}_S)|_q :=\{\xi_q\in \mathfrak{g}: (\xi_q)_\subM(m)\in \mathcal{S}_{m} \ \mbox{for} \ \tau_\subM(m)=q\}$. The nonholonomic momentum bundle map $J_{\nh} :\M\to \mathfrak{g}^*_S$ is the restriction of the canonical momentum map to $\M$.   

The novelty of the present paper is to address the issue of reduction of $G$-invariant nonholonomic system making use of the nonholonomic momentum bundle map.   

In contrast with properties $(i)$, $(ii)$ and $(iii)$ for Poisson reduction,  we have the following.

\begin{enumerate}
    \item[$(i')$] Property $(i)$ is replaced by the fact that only $\mathcal{S}\subset \mathcal{V}$ belongs to the characteristic distribution of $\{\cdot, \cdot\}_\nh$.
    
    \item[$(ii')$]  Instead of considering momentum map components $J_\eta$ corresponding to Lie algebra elements $\eta\in \mathfrak{g}$ as in $(ii)$, one should consider functions $J_\xi = \langle J_\nh, \xi\rangle$, for $\xi$ a section of the bundle $\mathfrak{g}_S$. A key difference is that such functions are not necessarily first integrals of the nonholonomic system (in spite of the $G$-invariance of $H$).  The existence of special sections $\xi$ for which $J_\xi$ is a first integral is in general an open problem, and first integrals of this form are called {\it horizontal gauge momenta} \cite{BatesGraumann96,FGS2005,Fasso2008}. 
        
    \item[$(iii')$] The fact that there is a section $\xi$ that generates a first integral $J_\xi$ does not necessarily mean that the infinitesimal generator $\xi_\subM$ is a Hamiltonian vector field of $\{\cdot, \cdot\}_\nh$. Following \cite{Balseiro2021,LuisG2016}, and assuming the existence of $k:=\textup{rank}(\mathcal{S})$ horizontal gauge momenta, we may consider a new bracket $\{\cdot, \cdot\}_\subB$ through a {\it (dynamical) gauge transformation} by a 2-form $B$. With respect to this new bracket, infinitesimal generators (defining first integrals) are Hamiltonian vector fields $\{\cdot , J_\xi \}_\subB= \xi_\subM$,  and the new bracket still codifies the nonholonomic dynamics in the sense that $\{\cdot , H_\subM \}_\subB= X_\nh$.  
\end{enumerate}

Throughout this work, as in \cite{Balseiro2021}, {\em we assume the existence of exactly $k:=\textup{rank}(\mathcal{S})$ $G$-invariant and functionally independent horizontal gauge momenta} (see $(ii')$). We stress that this assumption is satisfied in many examples of interest.
Once we fix $Ad$-invariant sections $\{\xi_1,\dots,\xi_k\}$ of $\mathfrak{g}_S$ generating the horizontal gauge momenta, we take
the dual $Ad^*$-invariant frame  $\{\mu^1, \dots , \mu^k\}$ of $\mathfrak{g}^*_S$ and consider momentum levels with respect to $\mu = c_i\mu^i$, for $c_i \in \mathbb{R}$ (rather than just elements in $\mathfrak{g}^*$).
Theorem~\ref{T:NH_momentum_red} states that the submanifold $J_\nh^{-1}(\mu)$ is $G$-invariant and $J_\nh^{-1}(\mu)/G$ admits an almost Poisson bracket  $\{\cdot, \cdot \}_\mu$ analogous to \eqref{Eq:Marsden-Ratiu-Red1}, that is, for $f,g\in C^\infty(J_\nh^{-1}(\mu)/G)$, 
\begin{equation*}
    \{f,h\}_{{\mu}}\circ \rho_{\mu} =  \{F,H\}_{\subB}\circ \iota_{\mu},
\end{equation*}
where $F,H$ are appropriate extensions of $\rho_\mu^*f$ and $\rho^*_\mu h$.  
In this case, we show that the almost Poisson bracket $\{\cdot, \cdot \}_\mu$ is nondegenerate, so it is equivalent to an almost symplectic 2-form $\omega_\mu^\subB$. It follows that the reduced spaces $(J_\nh^{-1}(\mu)/G, \omega_\mu^\subB)$ give rise to an almost symplectic foliation on $\M/G$, whose leaves integrate the characteristic distribution of the reduced bracket on $\M/G$ induced by $\{\cdot , \cdot\}_\subB$ (this gives an alternative proof to the fact that this bracket is twisted Poisson, which was the main result in \cite{Balseiro2021}). 

We verify that each reduced space $(J_{\nh}^{-1}(\mu)/G, \omega_\mu^\subB)$ is diffeomorphic to $(T^{\ast}(Q/G),\omega_{\mbox{\tiny{can}}}+\hat{\mathcal{B}}_{\mu})$, where 
$\hat{\mathcal{B}}_{\mu}$ is a semi-basic 2-form (with respect to the bundle $T^*(Q/G)\to Q/G$) that can be thought of as a ``{\it magnetic term}''. This result, found Theorem~\ref{T:PB-mu-identification}, bears resemblance with the one in \cite[Sec.~2.3]{MarMiPerRat:Book} for symplectic manifolds but in our case the magnetic term $\mathcal{B}_\mu$ is not necessarily closed and it depends exclusively on the 2-form $B$ defining the gauge transformation.  
Therefore, we conclude that a nonholonomic system with $k$ ($G$-invariant) horizontal gauge momenta can be reduced to a {\it Chaplygin-type} foliation, i.e., a foliation whose leaves look like Chaplygin systems, even if the original system is not Chaplygin.  



We remark that we express our reduction procedure in terms of (almost) Dirac structures \cite{DiracCourant1990}, even though the results can be exposed without mentioning them.  The issue is that when we want to ``pull back'' the almost Poisson bracket $\{\cdot, \cdot\}_\subB$ to the submanifold $J_\nh^{-1}(\mu)$,  we lose the bracket structure but obtain an almost Dirac structure $L_\mu^\subB$ which can be then ``pushed forward'' to the 2-form $\omega^\subB_\mu$ on $J_\nh^{-1}(\mu)/G$:

\begin{equation*}
    \xymatrix{ (J^{-1}_\nh(\mu), L_\mu^\subB, H_{\mu}) \ar[r]^{\ \ \iota_\mu} \ar[d]^{\rho_\mu} & (\M,\{\cdot, \cdot\}_\subB, H_{\subM}) \\
(J^{-1}_\nh(\mu)/G, \omega_{\mu}^\subB,H_\red^{\mu}) & }
\end{equation*}

We also initiate the study of systems with conserved quantities that are not horizontal gauge momenta.  In Theorem \ref{T:general0-level}, we observe that the common 0-level set of the conserved quantities is also identified with a Chaplygin system.  As a direct application of this result, we recover, from an intrinsic geometric view point, that the 0-level set of the conserved quantity of the mechanical system  described by a (inhomogeneus) ball rolling on a fixed sphere \cite{BorisovFedorov1995, BorisovMamaev2002,Jovanovic2010} is diffeomorphic to a cotangent bundle endowed with the canonical form plus a ``magnetic'' semi-basic 2-form.  

Comparing with previous versions of nonholonomic reduction in the literature, we note the following:
\begin{itemize}
\item The systems studied here are not of Chaplygin-type, not even decomposed into a Chaplygin system with an extra symmetry, as the ones treated in \cite{PaulaGarciaToriZuccalli}. In fact, with our more general framework we study the homogeneous ball rolling on a surface of revolution (see e.g. \cite{Balseiro2021, BorisovKilinMamaev2002}), which cannot be handled in \cite{PaulaGarciaToriZuccalli}.  

\item In the nonholonomic momentum map reduction in \cite{RecCortes2002}, the relation between infinitesimal generators and Hamiltonian vector fields is not taken into account, and no (almost) symplectic/Poisson manifold are present in the reduced spaces. 

\end{itemize}

The paper is organized as follows: In Section \ref{S:Dirac} we present the basic formalism for Dirac structures and the concept of a {\it momentum bundle map} with a suitable reduction.  In Section \ref{S:NHSystems} we study nonholonomic systems and the nonholonomic momentum bundle map reduction. The identification with a Chaplygin-type foliation is in Section \ref{S:Identification}. Section \ref{S:general-0level} contains the study of the 0-level set of first integrals that are {\it not} horizontal gauge momenta, while Section \ref{S:Examples} has the examples.

\noindent {\bf Acknowledgements:} The authors would like to thank to David Iglesias Ponte and Nicola Sansonetto for useful conversations. P.B. thanks CNPq (Brasil) and D.M.T. thanks
Faperj (Brasil) and CAPES (Brasil) for financial support.

\section{Dirac structures with symmetries} \label{S:Dirac}

In this section, we present the framework of Dirac structures \cite{DiracCourant1990} and {\it momentum bundle maps} in order to proceed with a  momentum map reduction. We will recover, in particular, the case studied in \cite{PoissonMarsdenRatiu}. 

The starting geometric structure, when studying nonholonomic systems is an almost Poisson bracket. However in the reduction process using a momentum map we have to pass through (almost) Dirac structures. This is our main motivation to study such general structures and the interaction with momentum bundle maps. In this section we follow the framework and notation of \cite{DiracIntro2011,BursztynCrainic2005,DiracMorita2004}. 

\subsection{Dirac structures} \label{Ss:DiracStructures}

Consider a smooth manifold $P$ and the bundle $\mathbb{T}P := TP\oplus T^{\ast}P$ on $P$ with two additional structures 
\begin{enumerate}
    \item[$(i)$] a nondegenerate, symmetric fibrewise bilinear form $\langle\cdot,\cdot\rangle : \mathbb{T}P\times \mathbb{T}P \rightarrow \mathbb{R}$ given, at each $x\in P$, $X,Y\in T_{x}P$ and $\alpha,\beta\in T_{x}^{\ast}P$, by
    \begin{equation*}
        \langle (X,\alpha),(Y,\beta)\rangle = \beta(X)+\alpha(Y);
    \end{equation*}
    
    \item[$(ii)$] a bracket $\llbracket \cdot,\cdot\rrbracket : \Gamma(\mathbb{T}P)\times\Gamma(\mathbb{T}P)\rightarrow \Gamma(\mathbb{T}P)$ called the {\it Courant bracket} \cite{DiracCourant1990}, defined by
    \begin{equation*}
        \llbracket (X,\alpha),(Y,\beta)\rrbracket = ([X,Y], \mathcal{L}_{X}\beta-{\bf i}_{Y}d\alpha).
    \end{equation*}
\end{enumerate}
An \textit{almost Dirac structure} on $P$ is a maximal isotropic subbundle $L$ of $\mathbb{T}P$ with respect to $\langle\cdot,\cdot\rangle$; if in addition $L$ is involutive with respect to the Courant bracket (i.e., $\llbracket \Gamma(L),\Gamma(L)\rrbracket \subset \Gamma(L)$), $L$ is referred as a \textit{Dirac structure} on $P$, \cite{DiracCourant1990}. Let us equip the bundle $\mathbb{T}P$ with their natural projections
$$
    pr_{T} : \mathbb{T}P\rightarrow TP\quad\textup{and}\quad pr_{T^{\ast}} : \mathbb{T}P\rightarrow T^*P. 
$$ 
An almost Dirac structure $L$ is \textit{regular} if the so-called {\it characteristic distribution} $pr_{T}(L)\subset TP$ on $P$ is regular. If $L$ is Dirac, then the distribution $pr_{T}(L)$ is integrable and it defines a presymplectic foliation. 
The {\it null distribution} of $L$ is the distribution $K_L$ on $P$ defined by 
\begin{equation*}
    K_{L} := pr_{T}((TP\oplus \{0\})\cap L)\subset TP.  
\end{equation*}

\begin{remark}\label{R:Dirac:BasicExamples}
    Given a $2$-form $\omega \in \Omega^{2}(P)$ and a bivector field $\pi\in \Gamma(\Lambda^{2}TP)$ with their respective bundle maps $\omega^{\flat} : TP\rightarrow T^{\ast}P$ such that $\omega^\flat(X)= {\bf i}_{X} \omega$ and $\pi^{\sharp} : T^{\ast}P\rightarrow TP$ so that $\pi^\sharp(\alpha)={\bf i}_{\alpha} \pi$, then $L_{\omega} := \textup{graph}(\omega^{\flat})$  and $L_{\pi} := \textup{graph}(\pi^{\sharp})$ are almost Dirac structures, which become Dirac when $\omega$ is presymplectic and $\pi$ Poisson. 
    Conversely, if $L$ is an almost Dirac structure so that $L\cap (\{0\}\oplus T^{\ast}P) = \{0\}$, then $L$ is the graph of a 2-form and $K_L=\textup{Ker}(\omega)$.  Respectively, if $K_L = \{0\}$, then $L$ is the graph of a bivector field. For details see e.g., \cite{DiracIntro2011}. 
\end{remark}

\begin{remark}\label{R:Bracket-Bivector}
    We recall here that there is a one-to-one correspondence between almost Poisson brackets $\{\cdot, \cdot\}$ on $C^\infty(P)$ and bivector fields $\pi$ on $P$ given, at each $f,g\in C^\infty(P)$, by $\{f,g\} = \pi(df,dg)$.  The Jacobi identity of $\{\cdot, \cdot\}$ is equivalent to ask that $[\pi,\pi]=0$, where $[\cdot, \cdot]$ is the Schouten bracket, see e.g. \cite{Marsden_1992}.
\end{remark}
    
Following \cite{LuisN2012,BursztynCrainic2005}, there is a one-to-one correspondence between regular almost Dirac structures $L$ on $P$ and pairs $(F,\omega_{F})$, where $F=pr_T(L)$ is a regular distribution on $P$ and $\omega_F$ is a 2-section on $F$ so that 
\begin{equation}\label{Eq:L_(F,w)}  
    L = \left\{(X,\alpha)\in \mathbb{T}P\,:\, X\in F,\, {\bf i}_{X}\omega_{F} = -\left.\alpha\right|_{F}\right\}.
\end{equation} 
We may also associate to $L$ a pair $(F, \omega)$ where $\omega$ is a 2-form on $P$ such that $\omega|_F = \omega_F$ (but, in this case, $\omega$ is not unique).  
If the 2-section $\omega_F$ is nondegenerate then $L$ is the graph of a bivector field $\pi$ defined by $\pi^{\sharp}(\alpha) = -X$ if and only if ${\bf i}_{X}\omega_{F} = \alpha|_{F}$ for $X\in \Gamma(F),\,\alpha\in T^*P$ and $F$ is the distribution generated by Hamiltonian vector fields.  

\subsection{Backward and Forward images of Dirac structures} \label{Ss:DiracBF}

In this section we formalize the notions of {\it pushforward} and {\it pullback} of almost Dirac structures, having in mind that these structures simultaneously encode almost Poisson structures and 2-forms; here we follow \cite{DiracIntro2011}.
In the case of nonholonomic systems we are particular interested in pulling back (almost) Poisson brackets.

\medskip

Let $P_1, P_2$ smooth manifolds and $\varphi : P_{1}\rightarrow P_{2}$ be a smooth map. 
First recall that the pull back of the bundle $\tau: TP_2\to P_2$ by $\varphi$ is the bundle $P_{1}\times_{P_{2}}TP_2$ on $P_1$. 
Analogously, the \textit{pullback of $\mathbb{T}P_{2}$ by $\varphi$} is defined as the vector bundle $P_{1}\times_{P_{2}}\mathbb{T}P_{2}$ on $P_{1}$ whose total space is
\begin{equation*}
	P_{1}\times_{P_{2}}\mathbb{T}P_{2} = \{(m,(X,\alpha))\in P_{1}\times \mathbb{T}P_{2}\,:\, \varphi(m) = \tau_{2}(X,\alpha)\},
\end{equation*}
where $\tau_{2} : \mathbb{T}P_{2}\rightarrow P_{2}$ is the bundle projection. 

Let us denote by  $T\varphi : TP_{1}\rightarrow P_{1}\times_{P_{2}}TP_{2}$ the tangent map (covering the identity). 
Following \cite{DiracIntro2011}, if $L_2 \subset \mathbb{T}P_2$ is an almost Dirac structure, then the \textit{backward image of $L_{2}$ by $\varphi$}, denoted by $\varphi^{\ast}L_2$, is the Lagrangian distribution on $\mathbb{T}P_1$ given, at each $x\in P_{1}$, by
\begin{equation*}
	({\varphi}^{\ast}L_{2})_{x} := \{(X, \varphi^{\ast}\beta)\in \mathbb{T}_{x}P_{1}\,:\, (T\varphi(X),\beta)\in (L_{2})_{\varphi(x)}\}.
\end{equation*}
If $L_1 \subset \mathbb{T}P_1$ is an almost Dirac structure, the \textit{forward image of $L_{1}$ by $\varphi$}, denoted by $\varphi_{\ast}L_1$, is the Lagrangian distribution on $P_{1}\times_{P_{2}}\mathbb{T}P_2$ given, at each $x\in P_{1}$, by
\begin{equation*}
	({\varphi}_{\ast}L_{1})_{x} := \{(T\varphi(Y), \alpha) \in \mathbb{T}_{\varphi(x)}P_2\,:\, (Y,\varphi^{\ast}\alpha)\in (L_{1})_{x}\}.
\end{equation*}
Moreover, $L_{1}$ is said to be \textit{$\varphi$-invariant} if $({\varphi}_{\ast}L_{1})_{x} = ({\varphi}_{\ast}L_{1})_{y}$ for all $x,y\in P_{1}$ such that $\varphi(x) = \varphi(y)$ (when $\varphi(x) = \varphi(y)$, $({\varphi}_{\ast}L)_{x}$ and $({\varphi}_{\ast}L)_{y}$ belong to $\mathbb{T}_{\varphi(x)}P_2 = \mathbb{T}_{\varphi(y)}P_2$).

In order to guarantee the smoothness of ${\varphi}^{\ast}L_{2}$ and ${\varphi}_{\ast}L_{1}$  and define them as almost Dirac structures on $P_{1}$ and $P_{2}$ respectively, we need to check the so-called {\it clean intersection condition} \cite[Sec.~5]{DiracBatesWeinstein1997} (see also \cite{DiracIntro2011})  for both structures and furthermore the $\varphi$-invariance of $L_1$. 

\begin{proposition}[\textit{Clean-intersection condition, \cite{DiracBatesWeinstein1997}}]\label{Prop:DiracCleanCond}
    Let $L_{i}\subset \mathbb{T}P_{i}$, for $i = 1,2$, be a (almost) Dirac structure and $\varphi : P_{1}\rightarrow P_{2}$ a smooth map with $(T\varphi)^{\ast} : P_{1}\times_{P_{2}}T^{\ast}P_{2}\rightarrow T^{\ast}P_{1}$ the dual map of the tangent map $T\varphi : TP_{1}\rightarrow P_{1}\times_{P_{2}}TP_{2}$. 
    \begin{enumerate}
            \item[$(i)$] \label{Prop:DiracCleanCondBW} If $(\{0\}\oplus \textup{Ker}((T\varphi)^{\ast})) \cap (P_{1}\times_{P_{2}}L_{2})$ has constant rank, then the backward image ${\varphi}^{\ast}L_{2} \subset \mathbb{T}P_{1}$ defines a (almost) Dirac structure on $P_{1}$. 
        
            \item[$(ii)$] \label{Prop:DiracCleanCondFW} If $\varphi : P_{1}\rightarrow P_{2}$ is a surjective submersion such that $\textup{Ker}(T\varphi)\cap K_{L_{1}}$ has constant rank and $L_1$ is $\varphi$-invariant, then the forward image ${\varphi}_{\ast}L_{1}$ is a (almost) Dirac structure on $P_{2}$. 
    \end{enumerate}
\end{proposition}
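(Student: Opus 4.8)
The plan is to prove both items by separating a purely pointwise linear-algebra fact from a smoothness argument: at every point the backward (resp.\ forward) image of a Lagrangian subspace of $\mathbb{T}P$ is again Lagrangian, \emph{with no hypothesis at all}, and the clean-intersection conditions are exactly what is needed to upgrade these pointwise constructions to \emph{smooth} subbundles; the Courant-involutivity in the ``Dirac'' case will then follow from naturality of the Courant bracket under $\varphi$-related sections, as in \cite{DiracBatesWeinstein1997,DiracIntro2011}.

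\smallskip
\noindent\emph{Item $(i)$ (backward image).} I would first fix $x\in P_1$, write $R:=T_x\varphi\colon T_xP_1\to T_{\varphi(x)}P_2$ with dual $R^{\ast}\colon T^{\ast}_{\varphi(x)}P_2\to T^{\ast}_xP_1$, so that $(\varphi^{\ast}L_2)_x=\{(X,R^{\ast}\beta):(RX,\beta)\in(L_2)_{\varphi(x)}\}$. Isotropy is the one-line check $\langle(X,R^{\ast}\beta),(X',R^{\ast}\beta')\rangle=(R^{\ast}\beta')(X)+(R^{\ast}\beta)(X')=\beta'(RX)+\beta(RX')=\langle(RX,\beta),(RX',\beta')\rangle=0$, using that $L_2$ is isotropic; and $\dim(\varphi^{\ast}L_2)_x=\dim P_1$ is the standard fact that backward images of Lagrangians under linear maps are Lagrangian, valid at every $x$. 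For smoothness I would present $\varphi^{\ast}L_2$ as the image, under the vector-bundle morphism $\mathrm{id}_{TP_1}\oplus(T\varphi)^{\ast}$, of the set $W:=\{(X,\beta)\in TP_1\oplus(P_1\times_{P_2}T^{\ast}P_2):(T\varphi\,X,\beta)\in P_1\times_{P_2}L_2\}$, i.e.\ the preimage of a smooth subbundle under a smooth bundle map. The kernel of $(\mathrm{id}\oplus(T\varphi)^{\ast})|_W$ is, fibrewise over $x$, $\{(0,\beta):\beta\in\mathrm{Ker}(R^{\ast}),\ (0,\beta)\in(L_2)_{\varphi(x)}\}$, which is precisely the fibre of $(\{0\}\oplus\mathrm{Ker}((T\varphi)^{\ast}))\cap(P_1\times_{P_2}L_2)$; since the image $(\varphi^{\ast}L_2)_x$ has constant dimension $\dim P_1$, we get $\dim W_x=\dim P_1+\dim\big((\{0\}\oplus\mathrm{Ker}((T\varphi)^{\ast}))\cap(P_1\times_{P_2}L_2)\big)_x$. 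Hence the clean-intersection hypothesis forces both $W$ and this kernel to be smooth subbundles of constant rank, and the constant-rank theorem for vector-bundle maps gives that $\varphi^{\ast}L_2$ is a smooth maximal isotropic subbundle of $\mathbb{T}P_1$, i.e.\ an almost Dirac structure. If $L_2$ is a Dirac structure, I would deduce involutivity of $\varphi^{\ast}L_2$ by choosing locally a spanning family of its sections that are $\varphi$-related to sections of $L_2$ and invoking naturality of the Courant bracket together with involutivity of $L_2$, exactly as in \cite[Sec.~5]{DiracBatesWeinstein1997}.

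\smallskip
\noindent\emph{Item $(ii)$ (forward image).} The argument is dual. For $x\in P_1$, $(\varphi_{\ast}L_1)_x=\{(RY,\alpha):(Y,R^{\ast}\alpha)\in(L_1)_x\}$ is isotropic by the same one-line computation and maximal isotropic of dimension $\dim P_2$ (forward images of Lagrangians under linear maps are Lagrangian). Now $\varphi_{\ast}L_1$ is the image, inside $P_1\times_{P_2}\mathbb{T}P_2$ over $P_1$, of $\widetilde{W}:=\{(Y,\alpha)\in TP_1\oplus(P_1\times_{P_2}T^{\ast}P_2):(Y,(T\varphi)^{\ast}\alpha)\in L_1\}$ under $(Y,\alpha)\mapsto(T\varphi\,Y,\alpha)$; the kernel of this last map on $\widetilde{W}$ is fibrewise $\{(Y,0):T\varphi\,Y=0,\ (Y,0)\in L_1\}=(\mathrm{Ker}(T\varphi)\cap K_{L_1})_x$, so as before $\dim\widetilde{W}_x=\dim P_2+\dim(\mathrm{Ker}(T\varphi)\cap K_{L_1})_x$. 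The constant-rank hypothesis thus makes $\widetilde{W}$ and the kernel smooth subbundles of constant rank, and the constant-rank theorem produces a smooth maximal isotropic subbundle of $P_1\times_{P_2}\mathbb{T}P_2$. The $\varphi$-invariance of $L_1$ says precisely that this subbundle is constant along the fibres of $\varphi$, so, $\varphi$ being a surjective submersion, it descends (using local sections of $\varphi$) to a smooth maximal isotropic subbundle $\varphi_{\ast}L_1\subset\mathbb{T}P_2$, i.e.\ an almost Dirac structure; and if $L_1$ is Dirac, so is $\varphi_{\ast}L_1$, again by naturality of the Courant bracket under the submersion $\varphi$, as in \cite{DiracBatesWeinstein1997}.

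\smallskip
The step I expect to be the main obstacle is the smoothness bookkeeping in the two middle arguments: one must handle the pullback bundles $P_1\times_{P_2}TP_2$ and $P_1\times_{P_2}T^{\ast}P_2$ carefully, verify that $W$ and $\widetilde{W}$ are genuinely preimages of smooth subbundles under smooth bundle maps (so that the dimension identities above really do turn the clean-intersection hypotheses into constant rank, hence smoothness), and then apply the constant-rank theorem for vector-bundle maps correctly — in item $(ii)$ also checking that $\varphi$-invariance together with the submersion property legitimately descends the resulting subbundle of $P_1\times_{P_2}\mathbb{T}P_2$ to $\mathbb{T}P_2$. The involutive (``Dirac'') half carries its own subtlety, since not every local section of the image is $\varphi$-related to a section of $L_i$ and one must argue with spanning families; this is exactly the content of \cite[Sec.~5]{DiracBatesWeinstein1997}, which I would cite rather than reprove.
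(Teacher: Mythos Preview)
The paper does not prove this proposition at all: it is stated as a known result with a citation to \cite{DiracBatesWeinstein1997} (and implicitly to \cite{DiracIntro2011}), and no argument is given. Your proposal is a correct and careful rendition of the standard proof from that literature --- pointwise Lagrangian linear algebra, then smoothness via the constant-rank theorem for bundle maps applied to the auxiliary bundles $W$ and $\widetilde{W}$, then Courant-involutivity by naturality under $\varphi$-related sections --- so there is nothing to compare against in the paper itself.
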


\begin{remark}\label{R:Dirac:BF-Examples}
    From the assumptions of Proposition \ref{Prop:DiracCleanCond} and using the Remark \ref{R:Dirac:BasicExamples}, it is straightforward to see that: 
    \begin{enumerate}
        \item[$(i)$] Let $\iota : M\to P$ be the inclusion map of a submanifold $M$ of $P$ and $L\subset \mathbb{T}P$ the graph of a 2-form $\omega$ on $P$, then ${\iota}^{\ast}L$ is the almost Dirac structure on $M$ given by the graph of the 2-form $\iota^{\ast}\omega$. However, if $L$ is the graph of a bivector field $\pi$ with nontrivial kernel, then the clean intersection condition is equivalent to have that $TM^\circ \cap \textup{Ker}(\pi^{\sharp})$ has constant rank, where $TM^{\circ}$ denotes the annihilator of $TM$.  In this case ${\iota}^{\ast}L$ is an almost Dirac structure that might not be associated to a bivector field nor a 2-form.  
        
        \item[$(ii)$] Let $\varphi : P_{1}\rightarrow P_{2}$ be a surjective submersion and $L_{1}\subset \mathbb{T}P_1$ be the graph of a $\varphi$-invariant bivector field $\pi_{1}$ on $P_{1}$, then ${\varphi}_{\ast}L_{1}$ represents the bivector field $\pi_2$ on $P_2$ which is the projection of $\pi_1$. However, if $L_{1}$ is the graph of a $\varphi$-invariant 2-form $\omega$ with nontrivial kernel, then the clean intersection condition means  that $\textup{Ker}(T\varphi) \cap \textup{Ker}(\omega)$ has constant rank. In this case, ${\varphi}_{\ast}L_{1}$ is an almost Dirac structure that might not be associated with a bivector nor a 2-form. 
    \end{enumerate}
\end{remark}

Let $L_{i}\subset \mathbb{T}P_{i}$, for $i = 1,2$, be a (almost) Dirac structure on the manifold $P_{i}$ and a smooth map $\varphi : P_{1}\rightarrow P_{2}$ such that $L_{1}$ is $\varphi$-invariant. The map $\varphi : P_{1}\rightarrow P_{2}$ is a {\it backward-Dirac map} (or a \textit{$b$-Dirac map}) if $L_{1} = {\varphi}^{\ast}L_{2}$, and it is a {\it forward-Dirac map} (or a \textit{$f$-Dirac map}) if ${\varphi}_{\ast}L_{1} = L_{2}$, see \cite{Bursztyn-Radko-2003}. Observe here that the forward and backward images are not inverse procedures, i.e.,  $\varphi^{\ast}(\varphi_{\ast}L_1)$ (resp. $\varphi_{\ast}(\varphi^{\ast}L_2)$) might not be $L_1$ (resp. $L_2$).  Furthermore, if $\varphi : P\rightarrow P$ is a diffeomorphism  and $L$ is an almost Dirac structure on $P$, then $\varphi$ is a $f$-Dirac map if and only if it is a $b$-Dirac map. In this case $\varphi$ is called a \textit{Dirac diffeomorphism} defining a diffeomorphism $\Phi := (T\varphi,(T\varphi^{-1})^{\ast}) : \mathbb{T}P\rightarrow \mathbb{T}P$ so that $\Phi(L) = L$.

We end this section with a Proposition that clarifies the relation between regular almost Dirac structures through backward and forward images. Let $L_{i}\subset \mathbb{T}P_{i}$, for $i = 1,2$, be regular (almost) Dirac structures on $P_{i}$ and let $\varphi : P_{1}\to P_{2}$ be a $f$-Dirac map with $L_1$ $\varphi$-invariant. Following \eqref{Eq:L_(F,w)} we identify each structure $L_{i}$ with the pair $(F_{i},\omega_{i})$, where $\omega_i$ is a 2-form on $P_i$ such that $\omega_i|_{F_i} = \omega_{F_i}$. Inspired in \cite{Bates1993}, we define the distribution $\mathcal{U}_\varphi$ on $P_1$ given by 
\begin{equation*}
    \mathcal{U}_{\varphi} := \textup{span}\left\{X\in F_{1} \,:\, (X,\alpha)\in L_{1},\, T\varphi(X)\in F_{2},\,\alpha|_{\textup{Ker}(T\varphi)} = 0 \right\}. 
\end{equation*}
In particular, $T\varphi(\mathcal{U}_{\varphi}) = F_{2}\subset T\varphi(F_{1})$ since $\varphi_*(L_1) = L_2$. 

\begin{remark}\label{R:UDist_of_Bates}
    When $\varphi:P\to P/G$ is an orbit projection with respect to a free and proper $G$-action and $L\subset \mathbb{T}P$ is the graph of a $G$-invariant bivector field $\pi$ with associated pair $(F,\omega_{F})$, the distribution $\mathcal{U}_\varphi$ is just $\mathcal{U}_\varphi := \{X\in F\,:\, {\bf i}_{X}\omega_{F}|_{\mathcal{S}} \equiv 0\}$ where $\mathcal{S}:= F\cap\textup{Ker}(T\varphi)$.  For this special case, this distribution was defined in \cite{Bates1993}, in the context of nonholonomic reduction independently of Dirac structures. 
\end{remark}

\begin{proposition}\label{Prop:DiracFormsRltm}
    For each $i=1,2,3$, let $L_{i}\subset \mathbb{T}P_{i}$ be a regular (almost) Dirac structure on the manifold $P_{i}$ described by the pair $(F_{i},\omega_i)$, where $\omega_i$ is a 2-form on $P_i$ such that $\omega_i|_{F_i} = \omega_{F_i}$. Let $\varrho : P_{1}\to P_{3}$ be a $b$-Dirac map and $\varphi : P_{1}\to P_{2}$ be a $f$-Dirac map with $L_1$ $\varphi$-invariant. If  $F_{1}\cap \textup{Ker}(T\varphi) \subset K_{L_{1}}$ for $K_{L_{1}}$ the null distribution of $L_{1}$, then
    \begin{equation}\label{Eq:DiracFormsRltm}
        \varphi^*\omega_2|_{\mathcal{U}_{\varphi}} = \varrho^*\omega_3|_{\mathcal{U}_{\varphi}}.
    \end{equation}
\end{proposition}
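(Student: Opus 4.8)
The plan is to show that \emph{both} restrictions in \eqref{Eq:DiracFormsRltm} coincide with $\omega_1|_{\mathcal{U}_\varphi}$ — equivalently, with the restriction to $\mathcal{U}_\varphi$ of the 2-section $\omega_{F_1}$ of $L_1$, which is the natural common object through which the otherwise unrelated maps $\varrho$ and $\varphi$ can be compared. The argument consists of two parallel pointwise computations, one for the $b$-Dirac map $\varrho$ and one for the $f$-Dirac map $\varphi$, each of which just unwinds the characterization \eqref{Eq:L_(F,w)} of a regular (almost) Dirac structure in terms of its pair $(F,\omega_F)$ together with the definitions of backward and forward images.

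First I would settle the $\varrho$-side. From $L_1=\varrho^\ast L_3$ one reads off at the level of characteristic distributions that $F_1=pr_T(\varrho^\ast L_3)=(T\varrho)^{-1}(F_3)$, so $T\varrho(F_1)\subseteq F_3$, and in particular $T\varrho(\mathcal{U}_\varphi)\subseteq F_3$. Fix $x\in P_1$ and $X,Y\in(\mathcal{U}_\varphi)_x\subseteq (F_1)_x$. Using \eqref{Eq:L_(F,w)} for $L_3$, pick $\beta\in T^\ast_{\varrho(x)}P_3$ with ${\bf i}_{T\varrho X}\omega_{F_3}=-\beta|_{F_3}$; then $(T\varrho X,\beta)\in L_3$, hence $(X,\varrho^\ast\beta)\in\varrho^\ast L_3=L_1$, and \eqref{Eq:L_(F,w)} for $L_1$ yields ${\bf i}_X\omega_{F_1}=-(\varrho^\ast\beta)|_{F_1}$. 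Evaluating on $Y$ and using $T\varrho Y\in F_3$,
\[
  (\varrho^\ast\omega_3)(X,Y)=\omega_{F_3}(T\varrho X,T\varrho Y)=-\beta(T\varrho Y)=-(\varrho^\ast\beta)(Y)=\omega_{F_1}(X,Y).
\]
Since $\omega_3|_{F_3}=\omega_{F_3}$ and $\omega_1|_{F_1}=\omega_{F_1}$, this gives $(\varrho^\ast\omega_3)|_{\mathcal{U}_\varphi}=\omega_1|_{\mathcal{U}_\varphi}$ (in fact the identity holds on all of $F_1$).

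Next I would treat the $\varphi$-side. Given $X,Y\in(\mathcal{U}_\varphi)_x$, the very definition of $\mathcal{U}_\varphi$ lets me choose $\alpha_X$ with $(X,\alpha_X)\in L_1$, $T\varphi X\in F_2$ and $\alpha_X|_{\textup{Ker}(T\varphi)}=0$; the last condition means $\alpha_X=\varphi^\ast\beta_X$ for some $\beta_X\in T^\ast_{\varphi(x)}P_2$. Then $(X,\varphi^\ast\beta_X)\in L_1$, so by the definition of the forward image, the $\varphi$-invariance of $L_1$ and $\varphi_\ast L_1=L_2$ one gets $(T\varphi X,\beta_X)\in L_2$, whence ${\bf i}_{T\varphi X}\omega_{F_2}=-\beta_X|_{F_2}$ by \eqref{Eq:L_(F,w)}. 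Using $T\varphi X,T\varphi Y\in F_2$,
\[
  (\varphi^\ast\omega_2)(X,Y)=\omega_{F_2}(T\varphi X,T\varphi Y)=-\beta_X(T\varphi Y)=-(\varphi^\ast\beta_X)(Y)=-\alpha_X(Y)=\omega_{F_1}(X,Y),
\]
the last equality again from \eqref{Eq:L_(F,w)} applied to $L_1$ (and $Y\in F_1$). Thus $(\varphi^\ast\omega_2)|_{\mathcal{U}_\varphi}=\omega_1|_{\mathcal{U}_\varphi}$, and combining with the previous step proves \eqref{Eq:DiracFormsRltm}.

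The delicate point — and where the hypothesis $F_1\cap\textup{Ker}(T\varphi)\subseteq K_{L_1}$ is meant to be used — is making sure the two restrictions genuinely agree on \emph{all} of $\mathcal{U}_\varphi$ rather than on some smaller set: one has to verify that the number $\omega_{F_1}(X,Y)$ produced on the $\varphi$-side is independent of the choices of $\alpha_X$ and $\beta_X$ above and really equals the $\varrho$-side value. The condition $F_1\cap\textup{Ker}(T\varphi)\subseteq K_{L_1}$ says exactly that $\omega_{F_1}$ (equivalently $\varrho^\ast\omega_3$ restricted to $F_1$) annihilates $F_1\cap\textup{Ker}(T\varphi)$; this is what makes $\omega_{F_1}|_{F_1}$ $\varphi$-projectable along the fibre directions of $\varphi$ and forces $\omega_{F_2}$ on $F_2=T\varphi(\mathcal{U}_\varphi)$ to be its projection, ruling out any mismatch. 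I expect this compatibility check to be the main (if mild) obstacle; once it is in place, the remainder is the routine bookkeeping with \eqref{Eq:L_(F,w)} and with the backward/forward images described above.
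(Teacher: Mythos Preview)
Your proof is correct, and in fact slightly cleaner than the paper's. The paper argues directly that $\varphi^*\omega_2(X,Y)=\varrho^*\omega_3(X,Y)$: it starts from an arbitrary $\alpha$ with $(T\varphi X,\alpha)\in L_2$, pulls this back via $\varphi_\ast L_1=L_2$ to obtain $(X+Z,\varphi^*\alpha)\in L_1$ with $Z\in F_1\cap\textup{Ker}(T\varphi)$, and then invokes the hypothesis $F_1\cap\textup{Ker}(T\varphi)\subseteq K_{L_1}$ to discard $Z$ and conclude $(X,\varphi^*\alpha)\in L_1$. Your route instead factors both sides through $\omega_{F_1}(X,Y)$, and on the $\varphi$-side you start from the element $(X,\alpha_X)\in L_1$ with $\alpha_X|_{\textup{Ker}(T\varphi)}=0$ that is already supplied by the definition of $\mathcal{U}_\varphi$, and push it \emph{forward} to $L_2$; no lifting ambiguity arises, so the hypothesis is never invoked.

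This means your last paragraph is misplaced: your argument is complete as it stands, and the hypothesis $F_1\cap\textup{Ker}(T\varphi)\subseteq K_{L_1}$ is actually superfluous for the conclusion \eqref{Eq:DiracFormsRltm}. There is no hidden consistency check to perform, because both of your displayed computations terminate in the intrinsic quantity $\omega_{F_1}(X,Y)$, which is independent of any choices. What your approach buys is a mild strengthening of the proposition; what the paper's approach buys is a more symmetric treatment that makes transparent where the hypothesis would enter if one argued by lifting from $L_2$ rather than by pushing from $L_1$.
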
 
\begin{proof}
    By assumption $L_{1} = \varrho^{\ast}L_{3}$ and $L_{2} = \varphi_{\ast}L_{1}$, then $T\varrho(\mathcal{U}_{\varphi}) \subset F_{3}$ and $T\varphi(\mathcal{U}_{\varphi}) = F_{2}$, which means that both hand sides of (\ref{Eq:DiracFormsRltm}) are well-defined. Moreover, if $X\in \mathcal{U}_{\varphi}$, then there exist $\alpha\in T^{\ast}P_{2}$ and $\beta\in T^{\ast}P_{1}$ such that: $(X,\beta)\in L_{1}$, $(T\varphi(X),\alpha)\in L_{2}$ and $(X+Z,\varphi^{\ast}\alpha)\in L_{1}$ for some $Z\in F_1\cap\textup{Ker}(T\varphi)$. Then $Z\in K_{L_1}$ and hence $(Z,0)\in L_1$. Thus $(X, \varphi^*\alpha)\in L_1$ and therefore we conclude that $\beta|_{F_1} = \varphi^{\ast}\alpha|_{F_1}$. On the other hand, there exists $\gamma\in T^{\ast}P_{3}$ such that $\varrho^{\ast}\gamma|_{F_{1}} = \varphi^{\ast}\alpha|_{F_{1}}$ and $(T\varrho(X),\gamma)\in L_{3}$. Finally, for $Y\in \mathcal{U}_{\varphi}$ we have: 
    \begin{align*}
        \varphi^*\omega_2(X,Y) & = \omega_{F_{2}}(T\varphi(X),T\varphi(Y)) = -\alpha(T\varphi(Y)) = -(\varphi^{\ast}\alpha)(Y) = -(\varrho^{\ast}\gamma)(Y) = \\
        & = -\gamma(T\varrho(Y)) = \omega_{F_{3}}(T\varrho(X),T\varrho(Y)) = \varrho^*\omega_3(X,Y).
    \end{align*}
\end{proof}
\begin{corollary}\label{cr:pullbackforms}
    If $L_{2}$ and $L_{3}$ are given, respectively, by the graph of 2-forms $\omega_{2}$ and $\omega_{3}$, then $\varphi^{\ast}\omega_{2} = \varrho^{\ast}\omega_{3}$ if and only if $\textup{Ker}(T\varphi) \subseteq K_{L_{1}}$. 
\end{corollary}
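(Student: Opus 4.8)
The plan is to first turn the hypothesis on $L_3$ into an explicit description of $L_1$, after which both implications follow quickly from Proposition~\ref{Prop:DiracFormsRltm} together with a trivial observation about pullbacks of $2$-forms. The preliminary point is that the backward image of the graph of a $2$-form is again the graph of a $2$-form: since $\varrho$ is a $b$-Dirac map we have $L_1 = \varrho^{\ast}L_3$, and the computation behind Remark~\ref{R:Dirac:BF-Examples}$(i)$ --- which works for an arbitrary smooth map, not just an inclusion --- shows that $\varrho^{\ast}(\textup{graph}(\omega_3^{\flat})) = \textup{graph}((\varrho^{\ast}\omega_3)^{\flat})$. Hence $L_1 = \textup{graph}((\varrho^{\ast}\omega_3)^{\flat})$, so its characteristic distribution is $F_1 = pr_T(L_1) = TP_1$ and its null distribution is $K_{L_1} = \{X\in TP_1 : {\bf i}_X(\varrho^{\ast}\omega_3) = 0\} = \textup{Ker}(\varrho^{\ast}\omega_3)$. (One should keep an eye on the sign relating $L_{\omega}=\textup{graph}(\omega^{\flat})$ to the $2$-section appearing in \eqref{Eq:L_(F,w)}, but it affects $L_2$ and $L_3$ in the same way and cancels in the final equality.)

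Granting this, the direction ``$\varphi^{\ast}\omega_2=\varrho^{\ast}\omega_3\Rightarrow\textup{Ker}(T\varphi)\subseteq K_{L_1}$'' is immediate, since any pullback $2$-form satisfies $\textup{Ker}(T\varphi)\subseteq\textup{Ker}(\varphi^{\ast}\omega_2)$, and by the previous paragraph $\textup{Ker}(\varrho^{\ast}\omega_3)=K_{L_1}$.

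For the converse I would proceed as follows. Assume $\textup{Ker}(T\varphi)\subseteq K_{L_1}$. Because $F_1=TP_1$, the hypothesis $F_1\cap\textup{Ker}(T\varphi)=\textup{Ker}(T\varphi)\subseteq K_{L_1}$ of Proposition~\ref{Prop:DiracFormsRltm} is satisfied, which yields $\varphi^{\ast}\omega_2|_{\mathcal{U}_{\varphi}}=\varrho^{\ast}\omega_3|_{\mathcal{U}_{\varphi}}$. It then remains only to check that $\mathcal{U}_{\varphi}=TP_1$, so that this identity extends to all tangent vectors. For that, I observe that $L_2$ is a $2$-form graph, hence $F_2=TP_2$ and the requirement ``$T\varphi(X)\in F_2$'' in the definition of $\mathcal{U}_{\varphi}$ is automatic; and for every $X\in TP_1=F_1$ the unique $\alpha$ with $(X,\alpha)\in L_1$ is $\alpha={\bf i}_X(\varrho^{\ast}\omega_3)$, which vanishes on $\textup{Ker}(T\varphi)$ because $\textup{Ker}(T\varphi)\subseteq\textup{Ker}(\varrho^{\ast}\omega_3)$. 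Therefore every $X\in TP_1$ belongs to $\mathcal{U}_{\varphi}$, so $\mathcal{U}_{\varphi}=TP_1$ and $\varphi^{\ast}\omega_2=\varrho^{\ast}\omega_3$.

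The only genuinely substantive step, and the one I expect to demand the most care, is the preliminary identification $L_1=\textup{graph}((\varrho^{\ast}\omega_3)^{\flat})$: it is exactly this (equivalently, the fact that $F_1=TP_1$) that makes $\mathcal{U}_{\varphi}$ exhaust $TP_1$ and thereby promotes the restricted equality of Proposition~\ref{Prop:DiracFormsRltm} to the desired global one. The remaining arguments are bookkeeping.
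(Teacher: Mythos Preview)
Your proof is correct and is precisely the argument the paper leaves implicit: the corollary has no proof in the paper, and the intended derivation is exactly the one you give---identify $L_1$ with $\textup{graph}((\varrho^{\ast}\omega_3)^{\flat})$ so that $F_1=TP_1$ and $K_{L_1}=\textup{Ker}(\varrho^{\ast}\omega_3)$, then observe that under the hypothesis $\textup{Ker}(T\varphi)\subseteq K_{L_1}$ the distribution $\mathcal{U}_{\varphi}$ is all of $TP_1$, so Proposition~\ref{Prop:DiracFormsRltm} upgrades to a global equality. Your treatment of both directions and of the sign convention is accurate.
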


\subsection{Momentum map reduction} \label{Ss:MomMapDirac}

Let $P$ be a manifold with a Lie group $G$ acting freely and properly on $P$,  $\psi : G \times P \to P$. Let $L\subset \mathbb{T}P$ be a regular (almost) Dirac structure so that $L$ is invariant, i.e.,  for each $g\in G$, ${\psi}_{g} : P \to P$ is a Dirac diffeomorphism.  Let $\mathcal{V}$ be the vertical distribution on $P$ defined, at each $x\in P$, by $\mathcal{V}_x = T_x(Orb_G(x))$. During this article we will assume that $\mathcal{V}$ does not necessarily belong  to the characteristic distribution of $L$.  
Instead,  let us define the distribution $\mathcal{S}$ on $P$ by 
\begin{equation}\label{Eq:DefS}
    \mathcal{S} := pr_T(L) \cap \mathcal{V} \subset TP,
\end{equation}
which is $G$-invariant but, in general, not integrable.  
Inspired in the nonholonomic case, we assume that 
\begin{enumerate}
    \item[(A1)] $\mathcal{S}$ has constant rank and it admits a global basis of $G$-invariant vector fields on $P$; 
    \item[(A2)] $K_L \cap \mathcal{S} = \{0\}$, where $K_{L}$ is the null distribution of $L$.
\end{enumerate}

\begin{remark}
    Assumption (A2) ensures that $\rho : P\to P/G$ is a $f$-Dirac map. Furthermore, in the literature, condition (A2) is called the {\it non-degeneracy condition} and guarantees that $\rho : P\to P/G$ is a \textit{Dirac realization}. For details see, e.g. \cite{BursztynCrainic2005}. 
\end{remark}

As a consequence of assumption (A1), the distribution $\mathcal{S}$ defines a subbundle $\mathfrak{g}_{\mathcal{S}} \to P$ of the action bundle $P\times\mathfrak{g}\to P$ with fiber given, on each $p\in P$, by
\begin{equation}\label{Eq:gs}
    \mathfrak{g}_{\mathcal{S}}|_{x} :=\left\{(x,\xi)\in P\times\mathfrak{g} \ : \ \xi_{\subP}(x) \in \mathcal{S}_{x}\right\}, 
\end{equation}
where $\xi_\subP(x)$ denotes the infinitesimal generator of $\xi\in \mathfrak{g}$ at $x\in P$. A section $\xi$ of the bundle $\g_{\mathcal{S}} \to P$, denoted by $\Gamma(\g_{\mathcal{S}})$, can be thought as a $\g$-valued function on $P$, where $\xi(x) = (x,\xi) \in {{\mathfrak{g}}_{\mathcal{S}}}|_x$ and $\xi_\subP(x) = (\xi(x))_\subP(x)$. Observe that, $\textup{rank}(\mathfrak{g}_{\mathcal{S}}) = \textup{rank}(\mathcal{S}) =: k$. Furthermore, $\mathfrak{g}_{\mathcal{S}}$ admits a global basis of $Ad$-invariant sections
$$
\mathfrak{B}=\{\xi_1,\dots,\xi_k\}, 
$$
so that $\{\mathcal{Y}_1= (\xi_1)_\subP ,\dots,\mathcal{Y}_k= (\xi_k)_\subP\}$ is a global basis of $\Gamma(\mathcal{S})$ given by $G$-invariant vector fields on $P$.  

\begin{definition}\label{Def:MomentumBundle}
    A bundle map $J:P\to \mathfrak{g}_{\mathcal{S}}^{\ast}$, that covers the identity, is said to be a {\it momentum bundle map of $L$ associated to the basis $\mathfrak{B}=\{\xi_1,\dots,\xi_k\}$} if  the induced functions $J_i\in C^\infty(P)$ given by $J_{i}(x) := \left\langle J(x), \xi_{i}(x) \right\rangle$ satisfy that
    \begin{equation}\label{Eq:MomentumCond}
        \left(-{(\xi_{i})}_{\subP}, dJ_{i}\right)\in L\quad \mbox{for each}\quad i = 1,\dots, k. 
    \end{equation}
\end{definition}

\begin{example}[\textit{nonholonomic particle}]\label{ex:NHParticle}
    Consider \  the\  manifold \ $P = \mathbb{R}^{5}$\  with  \ coordinates \ $p=(x,y,z,p_{x},p_{y})$ and the bivector field $\pi = (\partial_{x}+y\partial_{z})\wedge \partial_{p_{x}}+\partial_{y}\wedge \partial_{p_{y}}+yf(y)p_{x}\partial_{p_{x}}\wedge\partial_{p_{y}}$, where $f(y) = (1+y^{2})^{-1}$. The action of $G = \mathbb{R}^{2}$ by translation on the first and third coordinates leaves $\pi$ invariant. Considering $L=\textup{graph}(\pi^\sharp)$, condition (A1) follows from the fact that $\mathcal{S} = span\{Y=\partial_{x}+y\partial_{z}\}$, and (A2) since $K_L=0$. 
    Observe that $\mathfrak{g}_{\mathcal{S}}|_p = \textup{span}\{(1,y)\in \mathfrak{g}\}$ since $(1,y)_P = Y$.  The bundle map $J : P\to \mathfrak{g}_{\mathcal{S}}^{\ast}$ defined at each $p\in \mathbb{R}^5$, by  $\langle J(p), (1,y)\rangle = p_x$ is not a momentum bundle map for $\pi$ associated to the basis $\mathfrak{B}=\{(1,y)\}$  since $\pi^\sharp(dp_x) \neq -Y$. However, it is a momentum bundle map for the basis $\mathfrak{B}=\{\xi := \sqrt{f(y)}(1,y)\}$ since $J_\xi =  \sqrt{f}p_x$ and $\pi^\sharp(dJ_\xi) = -\sqrt{f} Y = -\xi_\subP$.    
    In Sec.~\ref{S:Examples} we will see more examples of momentum maps associated to a basis.  
\end{example}

From now on, let $P$ be a manifold with a free and proper $G$-action and let $L$ a $G$-invariant regular (almost) Dirac structure on $P$  so that  conditions (A1) and (A2) are satisfied. Let us denote by  $\mathfrak{B} = \{\xi_{1},\dots,\xi_{k}\}$ a global basis of $Ad$-invariant sections of the bundle $\g_{\mathcal{S}}\to P$.  Let us assume that $J:P\to \mathfrak{g}_{\mathcal{S}}^{\ast}$ is a momentum bundle map of $L$ associated to $\mathfrak{B}$.
Condition (A2) guarantees that the independence of the sections $\{\xi_1,\dots,\xi_k\}$ implies that the corresponding functions $\{J_1,\dots,J_k\}$ are functionally independent. Moreover, 

\begin{lemma}\label{L:MomMap-Ad}
    The functions $J_{i}$ are $G$-invariant on $P$ for all $i=1,\dots,k$, if and only if the bundle map $J:P\to \g_{\mathcal{S}}^{\ast}$ is $Ad^{\ast}$-equivariant, that is, for $x\in P$ and $g\in G$, $J(\psi_g(x)) = Ad_{g^{-1}}^{\ast}J(x)$.
\end{lemma}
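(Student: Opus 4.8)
The plan is to unwind both conditions in terms of the infinitesimal generators and the defining property \eqref{Eq:MomentumCond} of the momentum bundle map. First I would recall the standard relation between the $G$-action on $P$ and the adjoint/coadjoint actions: since the basis $\mathfrak{B} = \{\xi_1,\dots,\xi_k\}$ is $Ad$-invariant by hypothesis (i.e. $\xi_i(\psi_g(x)) = Ad_g(\xi_i(x))$ after the usual identification of sections of $\g_{\mathcal{S}}$ with $\g$-valued functions, equivalently the vector fields $\mathcal{Y}_i = (\xi_i)_\subP$ are $G$-invariant), the pairing $J_i(x) = \langle J(x),\xi_i(x)\rangle$ transforms as
\begin{equation*}
    J_i(\psi_g(x)) = \langle J(\psi_g(x)), \xi_i(\psi_g(x))\rangle = \langle J(\psi_g(x)), Ad_g(\xi_i(x))\rangle = \langle Ad_g^*\big(J(\psi_g(x))\big), \xi_i(x)\rangle.
\end{equation*}
Hence $J_i(\psi_g(x)) = \langle Ad_g^* J(\psi_g(x)), \xi_i(x)\rangle$ for all $i$, and since $\{\xi_1(x),\dots,\xi_k(x)\}$ is a basis of $(\g_{\mathcal{S}})|_x$, the functions $J_i$ are all $G$-invariant if and only if $Ad_g^* J(\psi_g(x)) = J(x)$, i.e. $J(\psi_g(x)) = Ad_{g^{-1}}^* J(x)$. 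This already gives the equivalence directly from the $Ad$-invariance of $\mathfrak{B}$, without needing \eqref{Eq:MomentumCond} — so the core of the argument is really the bookkeeping of how the pairing interacts with the two dual actions, together with the fact that a collection of pairings against a frame determines the covector.

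The one point that needs a little care, and which I expect to be the main (minor) obstacle, is making the identification ``section of $\g_{\mathcal{S}}$ as $\g$-valued function'' precise enough that the transformation law $\xi_i(\psi_g(x)) = Ad_g(\xi_i(x))$ is the correct reading of ``$Ad$-invariant section''. Concretely: the bundle $\g_{\mathcal{S}} \to P$ sits inside the action bundle $P\times\g\to P$, and $G$ acts on $P\times\g$ by $g\cdot(x,\eta) = (\psi_g(x), Ad_g\eta)$; $Ad$-invariance of the section $\xi_i$ means exactly that its graph is $G$-invariant, which translates to $\xi_i(\psi_g(x)) = Ad_g(\xi_i(x))$ under the identification $\xi_i(x) = (x,\xi_i)$. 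I would state this equivalence (it is equivalent to $G$-invariance of $\mathcal{Y}_i = (\xi_i)_\subP$ via the standard identity $(Ad_g\eta)_\subP = (\psi_g)_*(\eta_\subP)$) and then the computation above closes both implications simultaneously. Finally I would note for completeness that since $\{\xi_i\}$ is a frame, $G$-invariance of every $J_i$ is equivalent to $G$-invariance of the whole bundle map $J$ read through the frame, which is what we want.
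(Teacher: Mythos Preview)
Your proposal is correct and follows essentially the same approach as the paper: both compute $J_i(\psi_g(x)) = \langle Ad_g^* J(\psi_g(x)), \xi_i(x)\rangle$ using the $Ad$-invariance of the frame $\mathfrak{B}$, and then conclude the equivalence from the fact that $\{\xi_i(x)\}$ is a basis of $(\g_{\mathcal{S}})|_x$. Your extra care in unpacking what ``$Ad$-invariant section'' means (and your observation that \eqref{Eq:MomentumCond} is not actually used) is accurate and matches the paper's implicit use of these facts.
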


\begin{proof} 
    For $x\in P$ observe that $J_i(x) = \langle J(x),\xi_i(x)\rangle$.  Using that the basis $\mathfrak{B}$ is $Ad$-invariant, we also obtain, for $g\in G$, that
    $$
        J_i(\psi_g(x)) = \langle J(\psi_g(x)), \xi(\psi_g(x))\rangle = \langle J(\psi_g(x)),  Ad_g\,\xi_i(x)\rangle =\langle Ad_{g}^*J(\psi_g(x)), \xi_i(x)\rangle. 
    $$
    Then it is straightforward to see that if $J:\M\to \g_{\mathcal{S}}^*$ is $Ad^*$-invariant then the $J_i$ are $G$-invariant. The converse is proven using the linearity of the map $J(m):\g_{\mathcal S}|_m \to \mathbb{R}$.
\end{proof}

The $Ad$-invariant basis $\mathfrak{B}$ induces a dual $Ad^{\ast}$-invariant basis $\mathfrak{B}^{\ast}=\{\mu^{1},\dots,\mu^{k}\}$ of sections of $\mathfrak{g}_{\mathcal{S}}^{\ast}$, i.e., each $\mu^{i} \in \mathfrak{B}^{\ast}$ can be seen as $\mathfrak{g}^{\ast}_{\mathcal{S}}$-valued function on $P$ given, at each $x\in P$, by 
\begin{equation}\label{Eq:DefMu}
    \langle \mu^{i},\xi_{j}\rangle (x) = \langle \mu^{i}(x), \xi_{j}(x)\rangle = \delta_{ij}.
\end{equation}

For $\mu = c_{1}\mu^{1} + \dots + c_{k}\mu^{k}$ with $c_{i}$ constants in $\mathbb{R}$, we define the level set $J^{-1}(\mu)\subset P$ by
\begin{equation*}
    J^{-1}(\mu) := \left\{x\in P\, : \,J(x) = \mu(x)\right\}. 
\end{equation*}

\begin{proposition}\label{Prop:Jsubmfld} 
    If the momentum bundle map $J:P\to \g^*_{\mathcal{S}}$ is $Ad^*$-equivariant, then, for each $\mu = c_{1}\mu^{1} + \dots + c_{k}\mu^{k}$, $J^{-1}(\mu)$ is a $G$-invariant submanifold of codimension $k$ on $P$ and in particular, 
    \begin{equation}\label{Eq:LevelSetIdent}
        J^{-1}(\mu) = \bigcap_{i=1}^{k}J_{i}^{-1}(c_{i}).
    \end{equation}
    Moreover, the union of the connected components of the manifolds $J^{-1}(\mu)$ for $\mu \in span_{\mathbb{R}}\mathfrak{B}^{\ast}$ defines  a foliation of $P$.
\end{proposition}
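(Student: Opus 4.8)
The plan is to prove the three assertions of Proposition~\ref{Prop:Jsubmfld} in order: first that $J^{-1}(\mu)$ is $G$-invariant, then that it is a submanifold of codimension $k$ coinciding with $\bigcap_i J_i^{-1}(c_i)$, and finally that the connected components of these level sets, as $\mu$ ranges over $\mathrm{span}_{\mathbb{R}}\mathfrak{B}^{\ast}$, assemble into a foliation of $P$.

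For the $G$-invariance, I would argue directly from $Ad^{\ast}$-equivariance. Since $J(\psi_g(x)) = Ad_{g^{-1}}^{\ast}J(x)$ and, because $\mu = c_i\mu^i$ is built from the $Ad^{\ast}$-invariant frame $\mathfrak{B}^{\ast}$, one has $\mu(\psi_g(x)) = Ad_{g^{-1}}^{\ast}\mu(x)$ as well; hence $x\in J^{-1}(\mu)$ forces $J(\psi_g(x)) = Ad_{g^{-1}}^{\ast}\mu(x) = \mu(\psi_g(x))$, so $\psi_g(x)\in J^{-1}(\mu)$. For the identification \eqref{Eq:LevelSetIdent}, I would use the duality \eqref{Eq:DefMu}: for $x\in P$, the condition $J(x) = \mu(x)$ in $\mathfrak{g}_{\mathcal{S}}^{\ast}|_x$ is equivalent, by evaluating against the basis $\{\xi_1(x),\dots,\xi_k(x)\}$ of $\mathfrak{g}_{\mathcal{S}}|_x$, to $\langle J(x),\xi_i(x)\rangle = \langle\mu(x),\xi_i(x)\rangle = c_i$ for all $i$, i.e. $J_i(x) = c_i$; this gives $J^{-1}(\mu) = \bigcap_i J_i^{-1}(c_i)$. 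Since the $J_i$ are functionally independent (noted just before Lemma~\ref{L:MomMap-Ad}, as a consequence of (A2) and the independence of the $\xi_i$), $c = (c_1,\dots,c_k)$ is a regular value of the map $(J_1,\dots,J_k):P\to\mathbb{R}^k$, so the common level set is a submanifold of codimension $k$; invariance was just established.

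For the foliation statement, the point is that the family $\{J^{-1}(\mu)\}_{\mu}$ partitions $P$ (every $x$ lies in exactly one level set, namely $\mu = \sum_i J_i(x)\,\mu^i$) into codimension-$k$ submanifolds, and these fit together smoothly: near any $x_0$, the map $\Phi := (J_1,\dots,J_k):P\to\mathbb{R}^k$ is a submersion by functional independence, so there are submersion charts in which the level sets $\{\Phi = c\}$ are precisely the plaques $\{x : J_i(x) = \text{const}\}$. Taking connected components yields the leaves of a foliation $\mathcal{F}$ whose tangent distribution is the common kernel $\bigcap_i \ker dJ_i$, which has constant rank $\dim P - k$; this is exactly the data of a regular foliation. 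I expect the only genuine subtlety — and the main thing to be careful about — is that $\Phi$ may fail to be \emph{globally} a submersion with connected fibers, so the statement must be read as "the connected components form the leaves," and one should note that functional independence of the $J_i$ is a \emph{global} hypothesis here (not just at one point), so that $\ker d\Phi$ is a genuine subbundle of $TP$ everywhere; the local submersion charts then glue in the standard way to give the foliation.
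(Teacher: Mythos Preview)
Your proposal is correct and follows essentially the same route as the paper: identify $J^{-1}(\mu)$ with $\bigcap_i J_i^{-1}(c_i)$ by pairing against the basis $\{\xi_i\}$, invoke functional independence of the $J_i$ to get a codimension-$k$ submanifold and the foliation via the submersion $(J_1,\dots,J_k):P\to\mathbb{R}^k$, and establish $G$-invariance from $Ad^*$-equivariance. The only cosmetic difference is that the paper deduces $G$-invariance of $J^{-1}(\mu)$ from the $G$-invariance of the functions $J_i$ (obtained via Lemma~\ref{L:MomMap-Ad}) together with the identification \eqref{Eq:LevelSetIdent}, whereas you argue it directly from the equivariance of $J$ and of the section $\mu$; both are equivalent one-line arguments.
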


\begin{proof} 
    It is straightforward to see that, for $x\in J^{-1}(\mu)$ means that $J(x)= c_i\mu^i(x)$ which is equivalent to $J_{i}(x) = \langle J(x), \xi_{i}(x)\rangle = c_{i}$ for all $i=1,\dots,k$, then (\ref{Eq:LevelSetIdent}) holds. Let $c := (c_{1},\dots,c_{k}) \in \mathbb{R}^{k}$ and $\mathcal{J} := (J_{1},\dots,J_{k}) : P\rightarrow \mathbb{R}^{k}$, since $\{J_{1},\dots,J_{k}\}$ are functionally independent, then $\mathcal{J}^{-1}(c) = \cap_{i}J_{i}^{-1}(c_{i})$ is a submanifold of codimension $k$ on $P$ that foliates $P$ through the connected components of $\mathcal{J}^{-1}(c)$ for all $c\in \mathbb{R}^{k}$. The $G$-invariance of $J^{-1}(\mu)$ follows directly from the $G$-invariance of $\{J_1,\dots,J_k\}$ by Lemma \ref{L:MomMap-Ad}. 
\end{proof}
 
Since $J^{-1}(\mu)$ is an invariant submanifold on $P$, we denote by $\mathcal{V}_{\mu}$ the vertical space of the induced $G$-action on $J^{-1}(\mu)$ and it is straightforward to see, for each $x\in J^{-1}(\mu)$, that ${\mathcal{V}_{\mu}}|_x = \mathcal{V}|_{x}$. We also denote by  $\rho_\mu: J^{-1}(\mu) \to J^{-1}(\mu)/G$ the orbit projection, which defines a principal bundle. 

Next, we study the (almost) Dirac structure obtained as the backward image of $L$ to the level set $J^{-1}(\mu)$, and subsequently its reduction to the orbit space $J^{-1}(\mu)/G$, which will be again a (almost) Dirac structure.

\begin{theorem}\label{T:momentum-red} 
    Let $P$ be a manifold with a free and proper $G$-action and let $L$ be a $G$-invariant regular (almost) Dirac structure on $P$ satisfying conditions \textup{(A1)} and \textup{(A2)}. If $J: P\to \mathfrak{g}_{\mathcal{S}}^{\ast}$ is an $Ad^*$-equivariant momentum bundle map of $L$ associated to a global basis of $Ad$-invariant sections $\mathfrak{B} = \{\xi_{1},\dots,\xi_{k}\}$ of $\mathfrak{g}_{\mathcal{S}}$, then, for $\mu= c_i\mu^i$ with $\mu^i\in \mathcal{B}^*$ and $c_{i}$ constants in $\mathbb{R}$, we have: 
    \begin{enumerate}
        \item[$(i)$] The backward image $L_{\mu}:={\iota_{\mu}^{\ast}}L$ of $L$ by the inclusion map $\iota_{\mu} : J^{-1}(\mu) \to P$ is a $G$-invariant (almost) Dirac structure on $J^{-1}(\mu)$. 
        
        \item[$(ii)$] The forward image $L_{\emph{\red}}^{\mu} :=(\rho_{\mu})_{\ast}L_{\mu}$ of $L_{\mu}$ by the orbit projection $\rho_{\mu}: J^{-1}(\mu) \to J^{-1}(\mu)/G$ is a (almost) Dirac structure on $J^{-1}(\mu)/G$: 
        \begin{equation}\label{Diag:DiracMomRed}
            \xymatrix{ (J^{-1}(\mu), L_{\mu}:={\iota_{\mu}^{\ast}}L) \ar[r]^{\qquad\iota_{\mu}} \ar[d]^{\rho_{\mu}} & (P,L)\\
        (J^{-1}(\mu)/G, L_{\emph{\red}}^{\mu} :=(\rho_{\mu})_{\ast}L_{\mu})}
        \end{equation}
        
        \item[$(iii)$] If $(F, \omega_L)$ and $(F_{\emph\red}^\mu, \omega_{\emph\red}^\mu)$ are the associated pairs of $L$ and $L_{\emph\red}^\mu$ respectively, we obtain that
        $$
            \iota_\mu^*\omega_L\,|_{\mathcal{U}_{\rho\!_\mu}} = \rho_\mu^* \omega_{\emph\red}^\mu\,|_{\mathcal{U}_{\rho\!_\mu}},
        $$
        where ${\mathcal{U}_{\rho_\mu}}= \{X\in F\cap TJ^{-1}(\mu) \ : \ (X,\alpha)\in L_{\mu}, \ T\rho_\mu(X)\in F_{\emph\red}^\mu, \ \alpha|_{\mathcal{V}_{\mu}} = 0\}$.
    \end{enumerate}
\end{theorem}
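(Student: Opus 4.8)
The plan is to obtain each item by applying the general machinery of Section~\ref{Ss:DiracBF} to the two maps $\iota_\mu : J^{-1}(\mu)\to P$ and $\rho_\mu : J^{-1}(\mu)\to J^{-1}(\mu)/G$, after checking the relevant hypotheses (clean-intersection and $\varphi$-invariance). First I would prove $(i)$: since $J^{-1}(\mu)$ is a submanifold of $P$ by Proposition~\ref{Prop:Jsubmfld}, I need the clean-intersection condition of Proposition~\ref{Prop:DiracCleanCond}$(i)$, i.e.\ that $TJ^{-1}(\mu)^\circ\cap K_L$ (equivalently $(\{0\}\oplus\mathrm{Ker}((T\iota_\mu)^*))\cap(J^{-1}(\mu)\times_P L)$) has constant rank. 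Here the key algebraic observation is that the annihilator $TJ^{-1}(\mu)^\circ$ is spanned by $dJ_1,\dots,dJ_k$, and that the momentum-map condition \eqref{Eq:MomentumCond} says $(-(\xi_i)_\subP, dJ_i)\in L$; hence a covector $\alpha=a^i dJ_i\in TJ^{-1}(\mu)^\circ$ lies in $pr_{T^*}((TP\oplus\{0\})\cap L)$-territory precisely when $a^i(\xi_i)_\subP\in K_L\cap\mathcal{S}$, which by (A2) forces $\alpha=0$ (using functional independence of the $J_i$, itself a consequence of (A2) and independence of the $\xi_i$). So the intersection is trivial, has constant rank $0$, and $L_\mu:=\iota_\mu^*L$ is an almost Dirac structure on $J^{-1}(\mu)$; its $G$-invariance follows from the $G$-invariance of $L$ and the $G$-invariance of $\iota_\mu$ (i.e.\ $J^{-1}(\mu)$ is $G$-invariant, Proposition~\ref{Prop:Jsubmfld}).

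For $(ii)$ I would apply Proposition~\ref{Prop:DiracCleanCond}$(ii)$ to $\rho_\mu$, which is a surjective submersion with $\mathrm{Ker}(T\rho_\mu)=\mathcal{V}_\mu$. Two things must be verified: that $L_\mu$ is $\rho_\mu$-invariant, and that $\mathrm{Ker}(T\rho_\mu)\cap K_{L_\mu}$ has constant rank. The $\rho_\mu$-invariance I would deduce from the $G$-invariance of $L_\mu$ established in $(i)$ together with the fact that the $G$-action on $J^{-1}(\mu)$ is free and proper; more precisely, for $x,y$ in the same orbit, $\psi_g$ is a Dirac diffeomorphism of $(J^{-1}(\mu),L_\mu)$ intertwining the two base points, and pushing forward along $\rho_\mu$ is $\psi_g$-equivariant in the appropriate sense. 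For the constant-rank condition I would identify $K_{L_\mu}$: by Remark~\ref{R:Dirac:BF-Examples}$(i)$ and the triviality of the clean intersection just proved, $K_{L_\mu}$ is (essentially) $\iota_\mu^{-1}(K_L)$ intersected with $TJ^{-1}(\mu)$, and then $\mathcal{V}_\mu\cap K_{L_\mu}\subset \mathcal{V}\cap K_L$; since $\mathcal{V}\cap K_L\subset\mathcal{V}\cap pr_T(L)=\mathcal{S}$ and $K_L\cap\mathcal{S}=\{0\}$ by (A2), this intersection is again $\{0\}$, of constant rank. Hence $L_\red^\mu:=(\rho_\mu)_*L_\mu$ is an almost Dirac structure on $J^{-1}(\mu)/G$. (The ``Dirac'' — as opposed to merely ``almost Dirac'' — version of both statements follows verbatim: backward and forward images of Dirac structures are Dirac under the same clean-intersection hypotheses, by Proposition~\ref{Prop:DiracCleanCond}.)

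Finally, for $(iii)$ I would invoke Proposition~\ref{Prop:DiracFormsRltm} with the identifications $P_1=J^{-1}(\mu)$, $P_2=J^{-1}(\mu)/G$, $P_3=P$, the $b$-Dirac map $\varrho=\iota_\mu$ (indeed $L_\mu=\iota_\mu^*L$, so $\iota_\mu$ is $b$-Dirac), the $f$-Dirac map $\varphi=\rho_\mu$ (indeed $L_\red^\mu=(\rho_\mu)_*L_\mu$), and $L_1=L_\mu$ with its pair $(F\cap TJ^{-1}(\mu),\,\iota_\mu^*\omega_L)$ — note $\iota_\mu^*\omega_L$ restricts correctly to $F_1$ because the backward image of a graph-of-2-form piece is the pullback of that 2-form (Remark~\ref{R:Dirac:BF-Examples}$(i)$). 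The hypothesis of Proposition~\ref{Prop:DiracFormsRltm} I must check is $F_1\cap\mathrm{Ker}(T\rho_\mu)\subset K_{L_\mu}$, i.e.\ $pr_T(L_\mu)\cap\mathcal{V}_\mu\subset K_{L_\mu}$; but $pr_T(L_\mu)\cap\mathcal{V}_\mu$ is contained in $\mathcal{S}$ (restricted to $J^{-1}(\mu)$), and I want to see it lands in $K_{L_\mu}$ — this is where I expect the main subtlety. The point is that a vertical vector $X\in\mathcal{S}$ lies in $pr_T(L_\mu)$ with $(X,\alpha)\in L_\mu$ for some $\alpha$ that is a pullback under $\iota_\mu$; using the momentum condition \eqref{Eq:MomentumCond}, the generators $(\xi_i)_\subP$ of $\mathcal{S}$ pair with $L$ against the covectors $dJ_i$, which annihilate $TJ^{-1}(\mu)$, hence pull back to $0$ on $J^{-1}(\mu)$ — so $(X,0)\in L_\mu$, i.e.\ $X\in K_{L_\mu}$. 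With this inclusion in hand, \eqref{Eq:DiracFormsRltm} of Proposition~\ref{Prop:DiracFormsRltm} reads exactly $\iota_\mu^*\omega_L|_{\mathcal{U}_{\rho_\mu}}=\rho_\mu^*\omega_\red^\mu|_{\mathcal{U}_{\rho_\mu}}$, which is the claim, with $\mathcal{U}_{\rho_\mu}$ the distribution from Section~\ref{Ss:DiracBF} specialized to $\varphi=\rho_\mu$ (and, by Remark~\ref{R:UDist_of_Bates}, taking the stated form involving $\alpha|_{\mathcal{V}_\mu}=0$). The main obstacle throughout is bookkeeping the clean-intersection conditions and verifying that all the ``$\cap\mathcal{S}$'' intersections collapse to zero via (A2) — once that is organized, the three parts are direct applications of the already-established Dirac-geometric propositions.
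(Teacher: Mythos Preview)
Your arguments for $(i)$ and $(iii)$ are correct and match the paper's approach: the clean-intersection for the backward image reduces to showing $(\{0\}\oplus\mathrm{span}\{dJ_i\})\cap L=\{0\}$ via the momentum condition and (A2), and $(iii)$ is a direct application of Proposition~\ref{Prop:DiracFormsRltm} once $\mathcal{S}_\mu\subset K_{L_\mu}$ is verified, exactly as you do.

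There is, however, a genuine error in your argument for $(ii)$. You claim that $K_{L_\mu}$ is ``essentially $\iota_\mu^{-1}(K_L)\cap TJ^{-1}(\mu)$'' and hence that $\mathcal{V}_\mu\cap K_{L_\mu}=\{0\}$ by (A2). This identification of $K_{L_\mu}$ is wrong: backward images can \emph{enlarge} the null distribution. By definition, $X\in K_{L_\mu}$ whenever there exists $\alpha\in T^*P$ with $(T\iota_\mu(X),\alpha)\in L$ and $\iota_\mu^*\alpha=0$, i.e.\ $\alpha\in (TJ^{-1}(\mu))^\circ=\mathrm{span}\{dJ_i\}$. Taking $X=-(\xi_i)_\subP$ and $\alpha=dJ_i$, the momentum condition \eqref{Eq:MomentumCond} gives $(\xi_i)_\subP\in K_{L_\mu}$, so $\mathcal{S}_\mu\subset K_{L_\mu}$ and hence $\mathcal{V}_\mu\cap K_{L_\mu}\supseteq\mathcal{S}_\mu\neq\{0\}$. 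You in fact prove precisely this inclusion later in your argument for $(iii)$, contradicting your own claim in $(ii)$.

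The fix is the paper's route: show $\mathcal{V}_\mu\cap K_{L_\mu}=\mathcal{S}_\mu$ (the reverse inclusion uses the same subtraction trick with (A2) that you used in $(i)$), and then invoke (A1) to conclude this has constant rank $k$. The clean-intersection condition for $\rho_\mu$ is therefore satisfied with a rank-$k$ intersection, not a trivial one. Note that Remark~\ref{R:Dirac:BF-Examples}$(i)$ does not support your description of $K_{L_\mu}$; it only discusses when a backward image is again the graph of a $2$-form or bivector.
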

\begin{proof}
    $(i)$ In order to see that $L_\mu$ is an almost Dirac structure, we need to check the clean intersection condition (Prop. \ref{Prop:DiracCleanCond}).  In fact, observe from \eqref{Eq:LevelSetIdent}, that $\textup{Ker}((T\iota_{\mu})^{\ast}) = (T(J^{-1}(\mu)))^\circ = span\{dJ_{1},\dots,dJ_{k}\}$. Hence $(\{0\}\oplus \textup{Ker}((T\iota_{\mu})^{\ast})) \cap (J^{-1}(\mu)\times_{P}L) = \{0\}$ follows from the fact that $J:P\to\g_{\mathcal S}^*$ is a momentum bundle map and therefore the clean-intersection condition is verified.

    Recall that the $G$-invariance of $L$ implies that, for $g\in G$, $\Psi_g(L_{(p)}) = L_{\psi_g(p)}$ where $\Psi_g: \mathbb{T}P \to \mathbb{T}P$ is given by $\Psi_g = (T\psi_g, \psi^*_{g^{-1}})$.  
    To see that $L_\mu$ is $G$-invariant, first we denote by $\psi^\mu_g:J^{-1}(\mu)\to J^{-1}(\mu)$ the restriction of the $G$-action to $J^{-1}(\mu)$ and we will show that  $\Psi^\mu_g((L_\mu)_{(x)} = (L_\mu)_{\psi_g(x)}$. In fact, consider $({X}_x,  (\iota_\mu^*\alpha)_x)\in (L_\mu)_x$ where  $(T_{x}\iota_\mu({X}_x), \alpha_{\iota_\mu(x)})\in L_{\iota_\mu(x)}$.  Since $L$ is $G$-invariant, then $(T_{\iota_\mu(x)}\psi_g\circ T_{x}\iota_\mu ({X}_x), \psi_{g^{-1}}^*\alpha_{\iota_\mu(x)}) = (T_{\psi^{\mu}_{g}(x)}\iota_\mu \circ T_{x}\psi^\mu_g ({X_x}), \psi^*_{g^{-1}}\alpha_{\iota_\mu(x)}) \in L_{\psi_g \circ \iota_\mu(x)}$. Thus $(T_{x}\psi^\mu_g ({X_x}), \iota_{\mu}^{\ast}\psi^*_{g^{-1}}\alpha_{\iota_\mu(x)}) = (T_{x}\psi^{\mu}_{g} ({X_x}), (\psi^{\mu}_{g^{-1}})^{\ast}(\iota_{\mu}^{\ast}\alpha)_{x})\in (L_\mu)_{\psi^{\mu}_{g}(x)}$.  

    
    $(ii)$ The $\rho_{\mu}$-invariance of $L_\mu$ follows from its $G$-invariance, and then it remains to prove the clean-intersection condition for $L_{\mu}$ to show that the $L_{\red}^{\mu} := ({\rho_{\mu}})_{\ast}L_{\mu}$ is a (almost) Dirac structure on $J^{-1}(\mu)/G$. Since the action is free and proper on $J^{-1}(\mu)$, assumption (A1) implies that $\mathcal{S}_{\mu} = \mathcal{S}|_{J^{-1}(\mu)}$ has constant rank. Finally, since $(-(\xi_{i})_{\subP},dJ_{i})\in L$ for all $i = 1,\dots, k$, then $(-(\xi_{i})_{J^{-1}(\mu)},0)\in L_{\mu}$, which implies $\mathcal{S}_{\mu}\subset K_{L_{\mu}}$ and hence $\textup{Ker}(T\rho_{\mu})\cap K_{L_{\mu}} = \mathcal{S}_{\mu}$ has constant rank as well. 
 
    $(iii)$ As we have seen, $\mathcal{S}_{\mu}\subset K_{L_{\mu}}$, thus the desired result follows directly from Prop.~\ref{Prop:DiracFormsRltm}.
\end{proof}

When we say that $L\subset \mathbb{T}P$ is an (almost) Dirac structure associated to an (almost) Poisson bracket $\{\cdot , \cdot\}_\subP$ means that $L = \textup{graph}(\pi_\subP^\sharp)$, see Remark \ref{R:Dirac:BasicExamples}.

\begin{corollary}\label{C:MomReduction}
    Let $L$ be a (almost) Dirac structure on $P$ as in Theorem \ref{T:momentum-red}. 
    \begin{enumerate}
        \item[$(i)$] If $L$ is a (almost) Poisson structure $\{\cdot , \cdot\}_\subP$, then $L^\mu_{\emph{\red}}$ is a (almost) Poisson structure $\{\cdot, \cdot\}_\mu$ on $J^{-1}(\mu)/G$ satisfying 
        \begin{equation} \label{Eq:Marsden-Ratiu-Red}
            \{F,G\}_{\subP}\circ \iota_{\mu} = \{f,g\}_{{\mu}}\circ \rho_{\mu}
        \end{equation}
        for $f,g\in C^\infty(J^{-1}(\mu)/G)$ and $F,G\in C^\infty(P)$ ($G$-invariant) extensions of the functions $\rho_\mu^*f$ and $\rho_\mu^*g$ to $P$. 
        
        \item[$(ii)$] If $L$ corresponds to a (almost) symplectic form $\Omega$ on $P$, then $L^\mu_{\emph{\red}}$ is a (almost) symplectic form $\omega$ on $J^{-1}(\mu)/G$ such that $\iota_\mu^*\Omega=\rho_\mu^*\omega$. 
    \end{enumerate}
\end{corollary}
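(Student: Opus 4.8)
Here is how I would prove Corollary~\ref{C:MomReduction}.

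The plan is to deduce both statements from Theorem~\ref{T:momentum-red}: the reduced object $L_{\red}^{\mu}$ is a priori only an almost Dirac structure on $J^{-1}(\mu)/G$, and the content of the corollary is that it is in fact the graph of a bivector field (part $(i)$) or, in the (almost) symplectic case, of a nondegenerate $2$-form (part $(ii)$). By Remark~\ref{R:Dirac:BasicExamples}, an almost Dirac structure is the graph of a bivector exactly when its null distribution vanishes, so the main task is to compute $K_{L_{\red}^{\mu}}$. Writing $L=\textup{graph}(\pi_{\subP}^{\sharp})$, I would first record the general identity $K_{\varphi_{\ast}L_{1}}=T\varphi(K_{L_{1}})$, valid for a surjective submersion $\varphi$ and a $\varphi$-invariant $L_{1}$ (immediate from the definition of the forward image); applied to $\varphi=\rho_{\mu}$ it gives $K_{L_{\red}^{\mu}}=T\rho_{\mu}(K_{L_{\mu}})$. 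Then I would identify $K_{L_{\mu}}$: if $(X,0)\in L_{\mu}$ then $T\iota_{\mu}(X)=\pi_{\subP}^{\sharp}(\beta)$ for some $\beta$ annihilating $TJ^{-1}(\mu)$, i.e.\ $\beta\in\textup{span}\{dJ_{1},\dots,dJ_{k}\}$ by \eqref{Eq:LevelSetIdent}; since $\pi_{\subP}^{\sharp}(dJ_{i})=-(\xi_{i})_{\subP}$ by \eqref{Eq:MomentumCond}, such $T\iota_{\mu}(X)$ is a combination of the $(\xi_{i})_{\subP}$, hence vertical, so $K_{L_{\mu}}\subseteq pr_{T}(L_{\mu})\cap\mathcal{V}_{\mu}=\mathcal{S}_{\mu}$; together with $\mathcal{S}_{\mu}\subseteq K_{L_{\mu}}$ (already established in the proof of Theorem~\ref{T:momentum-red}) this gives $K_{L_{\mu}}=\mathcal{S}_{\mu}\subseteq\textup{Ker}(T\rho_{\mu})$ and hence $K_{L_{\red}^{\mu}}=0$. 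Thus $L_{\red}^{\mu}=\textup{graph}(\pi_{\mu}^{\sharp})$ for a bivector $\pi_{\mu}$, which by Remark~\ref{R:Bracket-Bivector} corresponds to an almost Poisson bracket $\{\cdot,\cdot\}_{\mu}$, and which satisfies the Jacobi identity precisely when $L$—hence $L_{\red}^{\mu}$, by Theorem~\ref{T:momentum-red}—is Dirac.

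To complete $(i)$ I would verify the Marsden--Ratiu formula \eqref{Eq:Marsden-Ratiu-Red} by unwinding the forward image. Given a $G$-invariant extension $G\in C^{\infty}(P)$ of $\rho_{\mu}^{\ast}g$, the $G$-invariance of $G$ and $\pi_{\subP}^{\sharp}(dJ_{i})=-(\xi_{i})_{\subP}$ give $\{J_{i},G\}_{\subP}=-(\xi_{i})_{\subP}(G)=0$, so $\pi_{\subP}^{\sharp}(dG)$ is tangent to $J^{-1}(\mu)$; denoting by $Y$ its restriction, $(Y,\rho_{\mu}^{\ast}dg)=(Y,\iota_{\mu}^{\ast}dG)\in L_{\mu}$, hence $(T\rho_{\mu}(Y),dg)\in L_{\red}^{\mu}$, i.e.\ $\pi_{\mu}^{\sharp}(dg)=T\rho_{\mu}\big(\pi_{\subP}^{\sharp}(dG)|_{J^{-1}(\mu)}\big)$, and similarly for $f$ and $F$. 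Pairing these and using the tangency of $\pi_{\subP}^{\sharp}(dF)$ to $J^{-1}(\mu)$,
\[
\{f,g\}_{\mu}\circ\rho_{\mu}=\langle dg,\pi_{\mu}^{\sharp}(df)\rangle\circ\rho_{\mu}=\langle\iota_{\mu}^{\ast}dG,\pi_{\subP}^{\sharp}(dF)|_{J^{-1}(\mu)}\rangle=\langle dG,\pi_{\subP}^{\sharp}(dF)\rangle\circ\iota_{\mu}=\{F,G\}_{\subP}\circ\iota_{\mu}.
\]
Running the same computation with $F$ replaced by the difference of two extensions (which is $G$-invariant and vanishes on $J^{-1}(\mu)$) shows independence of the choice of extension.

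For part $(ii)$, if $L=\textup{graph}(\Omega^{\flat})$ with $\Omega$ (almost) symplectic then $L$ is also the graph of a nondegenerate bivector, so part $(i)$ produces a bivector $\pi_{\mu}$ on $J^{-1}(\mu)/G$; moreover $pr_{T}(L)=TP$ forces $pr_{T}(L_{\mu})=TJ^{-1}(\mu)$ and hence $pr_{T}(L_{\red}^{\mu})=T\rho_{\mu}(TJ^{-1}(\mu))=T(J^{-1}(\mu)/G)$, so $\pi_{\mu}$ is nondegenerate and $L_{\red}^{\mu}=\textup{graph}(\omega^{\flat})$ for a nondegenerate $2$-form $\omega$, which is closed iff $\Omega$ is (equivalently, iff $L_{\red}^{\mu}$ is Dirac). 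Finally, in this case $\mathcal{S}=pr_{T}(L)\cap\mathcal{V}=\mathcal{V}$, so $\textup{Ker}(T\rho_{\mu})=\mathcal{V}_{\mu}=\mathcal{S}_{\mu}=K_{L_{\mu}}$, and Corollary~\ref{cr:pullbackforms} applied to the $b$-Dirac map $\iota_{\mu}$ and the $f$-Dirac map $\rho_{\mu}$ (with $L_{\red}^{\mu}=\textup{graph}(\omega^{\flat})$ and $L=\textup{graph}(\Omega^{\flat})$) yields $\rho_{\mu}^{\ast}\omega=\iota_{\mu}^{\ast}\Omega$.

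I expect the only genuine difficulty to be the computation of $K_{L_{\mu}}$ in the first paragraph—establishing $K_{L_{\mu}}=\mathcal{S}_{\mu}$, and in particular that it is entirely contained in $\textup{Ker}(T\rho_{\mu})$—since this is exactly what makes the forward image collapse to an honest bivector (resp.\ $2$-form) rather than a general almost Dirac structure; the remaining steps are bookkeeping with the backward/forward image calculus of Section~\ref{S:Dirac} and an unwinding of definitions.
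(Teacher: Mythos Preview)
Your proof is correct and follows essentially the same approach as the paper. The key computation---lifting an element of $K_{L_{\red}^{\mu}}$ to $K_{L_{\mu}}$, then using $(TJ^{-1}(\mu))^{\circ}=\textup{span}\{dJ_i\}$ together with $\pi_{\subP}^{\sharp}(dJ_i)=-(\xi_i)_{\subP}$ to conclude the lift lies in $\mathcal{S}_{\mu}$ and hence projects to zero---is exactly what the paper does; your version simply packages the first step as the general identity $K_{\varphi_{\ast}L_1}=T\varphi(K_{L_1})$ and gives more detail on the Marsden--Ratiu formula, while for part $(ii)$ you route through part $(i)$ and Corollary~\ref{cr:pullbackforms} whereas the paper works directly with $L_{\mu}=\textup{graph}(\iota_{\mu}^{\ast}\Omega)$ and the classical computation $\textup{Ker}(\iota_{\mu}^{\ast}\Omega)=\mathcal{V}_{\mu}$.
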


\begin{proof}
$(i)$ Let $X\in K_{(L_\red^\mu)}$, then there exists $Y\in K_\mu \subset T(J^{-1}(\mu))$ such that $T\rho_\mu(Y) =X$.  Since $L$ is the graph of a bivector field $\pi_\subP$, there exists a $\alpha\in T^*P$ such that $\pi^\sharp_\subP(\alpha) = T\iota_\mu(Y)$ and $\iota_\mu^*(\alpha) = 0$.  Therefore, $\alpha\in (T(J^{-1}(\mu)))^{\circ} = \textup{span}\{dJ_{1},\dots,dJ_{k}\}$ and hence $T\iota_\mu(Y)\in \mathcal{S}$. Since $T\rho_\mu(Y) = 0$ we conclude that $X=0$ implying that $L_\red^\mu$ is the graph of bivector field. Moreover, (\ref{Eq:Marsden-Ratiu-Red}) follows from the fact that $(X,df)\in L_\red^\mu$ if and only if $(T\iota_\mu Y, dF)\in L$ for $Y\in TJ^{-1}(\mu)$ such that $T\rho_\mu(Y) = X$ and $F\in C^\infty(P)$ is an extension of the function $\rho_\mu^*f$. 

$(ii)$ Note that $L_\mu=\textup{graph}(\iota_\mu^*\Omega)$ and, moreover, $K_\mu = \textup{Ker}(\iota_\mu^*\Omega) = \mathcal{V}_\mu$.  Then it is straightforward to see that $L^\mu_{\red}$ is given by the (almost) symplectic form $\omega$ on $J^{-1}(\mu)/G$ such that $\iota_\mu^*\Omega=\rho_\mu^*\omega$. 
\end{proof}

\begin{remark}
    Theorem \ref{T:momentum-red} (or more precisely Corollary \ref{C:MomReduction}) recovers (and generalizes) the Poisson reduction via a momentum map studied in \cite{PoissonMarsdenRatiu} by Marsden and Ratiu, see Sec.~\ref{S:Intro}. 
\end{remark}

\begin{example}[\textit{nonholonomic particle}]
    In order to complete Example \ref{ex:NHParticle}, consider the dual basis $\mathfrak{B}^* =\{\mu = \frac{1}{\sqrt{f(y)}}(1,0)\}$ of $\g_{\mathcal{S}}^*\to P$ and the momentum map $J : P\to \mathfrak{g}_{\mathcal{S}}^{\ast}$ defined by $J(p) = \sqrt{f} p_x. \mu$.  Then, for each $c\in \mathbb{R}$,  $ J^{-1}(c\mu) = J_{\xi}^{-1}(c) = \{p\in \mathbb{R}^5 \,:\, p_{x} = c\sqrt{1+y^{2}}\}$ is a $G$-invariant submanifold. 
    Then, $L_\mu = \textup{span}\{(-\sqrt{f}Y,0),(\partial_y,dp_y),(\partial_{p_y},dy), (0,dz-ydx)\}$ is an almost Dirac structure which is not the graph of a bivector field or a 2-form.
    Since our starting geometric structure is an almost Poisson bracket $\pi$ on $\mathbb{R}^5$, following Corollary \ref{C:MomReduction}, the almost Dirac structure $L_\red^{c\mu}$ on $J^{-1}(c\mu)/\mathbb{R}^2$, with coordinates $(y,p_y)$,  is associated to the bivector field $\pi_{c} = \partial_{y}\wedge \partial_{p_{y}}$ (that, in this case, is nondegenerate). 
\end{example}

\section{Nonholonomic systems} \label{S:NHSystems}

In this section, we study the reduction of nonholonomic systems via the nonholonomic momentum bundle map, which was originally defined in \cite{Bloch1996}. 
First we will recall the geometry underlying nonholonomic systems given by an almost Poisson bracket, following \cite{ Bates1993,Bloch1996,Bloch2003,RecCortes2002,Cushman2010,IbLMM99}. If the system is invariant by the action of a Lie group, the usual scenario for nonholonomic systems is that the vertical distribution does not necessary belong to the characteristic distribution of such bracket.  In this sense, the distribution $\mathcal{S}$ defined in \eqref{Eq:DefS} will play a fundamental role.

A nonholonomic mechanical system on a smooth manifold $Q$ is a classical mechanical system defined by a Lagrangian $L: TQ \rightarrow \mathbb{R}$ and a (constant rank) non-integrable distribution $D$ on $Q$ representing the permitted velocities. In this article, we consider Lagrangians of {\it mechanical-type}, that is, $L = \frac{1}{2}\kappa-\tau_{\mbox{\tiny{$TQ$}}}^*U$, where $\kappa$ is the \textit{kinetic energy metric} and $U: Q \rightarrow \mathbb{R}$ is the potential energy with $\tau_{\mbox{\tiny{$TQ$}}}:TQ\to Q$ the canonical projection, see \cite{Bloch2003}. Examples of nonholonomic systems are mainly mechanical systems with rolling constraints, for example solids -- as disks, spheres, ellipsoids, etc -- rolling without sliding on different smooth surfaces.  

\subsection{The nonholonomic bracket}\label{Ss:NHbracket}
The Legendre transformation $Leg = \kappa^{\flat}: TQ \rightarrow T^{\ast}Q$ (which, in this case, is a global diffeomorphism linear on the fibers) and the constraint distribution $D$ define the submanifold $\mathcal{M} := Leg(D) \subset T^{\ast}Q$ called the \textit{constraint manifold}.   Let us denote by $\tau_{\mbox{\tiny{$\mathcal{M}$}}} := \tau_{\mbox{\tiny{$T^*Q$}}} |_{\mbox{\tiny{$\mathcal{M}$}}} : \M \to Q$ the restriction to $\M$ of the canonical projection $\tau_{\mbox{\tiny{$T^*Q$}}}: T^*Q \to Q$.  Then the distribution $D$ induces a (constant rank) non-integrable distribution $\mathcal{C}$ on $\mathcal{M}$, given at each $m\in \mathcal{M}$,  by
\begin{equation*}
    \mathcal{C}_{m} := \left\{v_{m}\in T_{m}\mathcal{M}\,:\, T\tau_{\mbox{\tiny{$\mathcal{M}$}}}(v_{m})\in D_q,  q = {\tau_{\mbox{\tiny{$\mathcal{M}$}}}(m)}\right\} \subset T_{m}\mathcal{M}.
\end{equation*}

For $\iota_\subM: \M\to T^*Q$ the natural inclusion, we denote by 
 $\Omega_\subM:= \iota_\subM^*\Omega_\subQ$ the pull back of the canonical symplectic 2-form $\Omega_\subQ$ to $\M$ and $H_\subM:=\iota_\subM^*H$  the restriction of the Hamiltonian function $H$ to $\M$.
Due to the nondegenerancy of $\Omega_{\mathcal{C}}:= \left.\Omega_{\subM}\right|_{\mathcal{C}}$ (see \cite{Bates1993}), the {\it nonholonomic bracket} $\pi_{\nh}$ on $\mathcal{M}$ \cite{IbLMM99, Marle98, Schaft-Maschke-1994} is defined, at each $f\in C^\infty(\M)$, by 
\begin{equation} \label{Def:NH-Bracket} 
    \pi_\nh^\sharp(df) = -X_f \qquad \mbox{if and only if} \qquad {\bf i}_{X_{f}}\Omega_{\mathcal{C}} = \left.df\right|_{\mathcal{C}}.
\end{equation}
The {\it characteristic distribution} of $\pi_\nh$ is given by the nonintegrable distribution $\mathcal{C}$.  
The nonholonomic dynamics is described by the integral curves of the {\it nonholonomic vector field} $X_\nh$ on $\M$ defined by 
$$
    \pi_{\nh}^{\sharp}(dH_\subM) = -X_\nh,
$$
and that is why we say that the nonholonomic system is determined by the triple $(\mathcal{M},\pi_{\nh},H_\subM)$.   In the context of the previous section, we may see a nonholonomic system defined by the triple $(\M, L_\nh, H_\subM)$ where $L_\nh = \textup{graph}(\pi_\nh^\sharp)$ is an almost Dirac structure (with associated pair given by $(\mathcal{C}, \Omega_{\mathcal C})$, see \eqref{Eq:L_(F,w)}) and the dynamics is described by the pair $(-X_\nh, dH_\subM)\in L_\nh$.

\subsection{Symmetries and the nonholonomic momentum bundle map} \label{Ss:NHSymmetries}

The action of a Lie group $G$ on the configuration space $Q$ defines a \textit{$G$-symmetry} for the nonholonomic system if it is free and proper and the tangent lift leaves the Lagrangian $L$ and the constraint distribution $D$ invariant (or equivalently if the cotangent lift leaves the Hamiltonian $H$ and the constraint manifold $\mathcal{M}$ invariant). Therefore, there is a free and proper action $\psi:G\times \M\to \M$ defined on the manifold $\mathcal{M}$ and by construction the nonholonomic bivector field $\pi_\nh$ is $G$-invariant  as well.   

\subsubsection*{Reduction by symmetries} \label{Sss:NHSRed}
Since the action on $\M$ is free and proper, the orbit projection $\rho:\M\to \M/G$ defines a $G$-principal bundle.  
The nonholonomic vector field $X_{\nh}$ is $G$-invariant (i.e., $g\in G, T\psi_g(X_\nh(m)) = X_\nh(\psi_g(m))$) and hence it descends to the quotient manifold $\mathcal{M}/G$ defining the \textit{reduced nonholonomic vector field} given by $X_{\red}:= T\rho(X_{\nh})$. Moreover, the nonholonomic bivector field $\pi_\nh$ also descends to a reduced bivector field $\pi_\red$ on $\M/G$  defined, for each $f\in C^\infty(\M/G)$, as $\pi_\red^\sharp(df) = T\rho(\pi_\nh^\sharp(d\rho^*f))$ and the reduced nonholonomic vector field is also defined by 
\begin{equation}\label{Eq:RedDynamics}
    \pi_\red^\sharp(d H_\red) = -X_\red,
\end{equation}
where $H_\red\in C^\infty(\M/G)$ is the reduced Hamiltonian, $\rho^*H_\red = H_\subM$.

Consider now the $G$-action on $Q$ and denote by $V$ the {\it vertical distribution} on $Q$ given at each $q\in Q$ by $V_{q} := T_{q}(Orb_{G}(q))$.  Following \cite{Bloch1996} we say that the nonholonomic system defined by the distribution $D$ satisfies the \textit{dimension assumption} if
\begin{equation*}
    T_{q}Q = D_{q}+V_{q}, \quad \mbox{for all} \ q\in Q.
\end{equation*}
The distribution $S$ on $Q$ is given, at each $q\in Q$, by $S_{q} := D_{q}\cap V_{q}$ and observe that $V$ and $S$ have constant rank due to the freeness of the action and the dimension assumption. 

Equivalently, the dimension assumption can be  written on $\M$ (as in the Introduction), that is, for each $m\in \mathcal{M}$,  $T_{m}\mathcal{M} = \mathcal{C}_{m}+\mathcal{V}_{m}$ where $\mathcal{V}$ is the vertical distribution of the $G$-action on $\M$.  Analogously, for each $m\in \M$, we denote by $\mathcal{S}_{m} := \mathcal{C}_{m}\cap \mathcal{V}_{m}$ and note that $\textup{rank}(S)=\textup{rank}(\mathcal{S})$. 

Observe that the distribution $\mathcal{S}$ coincides with the one defined in \eqref{Eq:DefS}.  In this case, $\mathcal{S}$ satisfies automatically condition (A2), since the null distribution $K_{L_{\nh}} = \{0\}$, however condition (A1) is more subtle.

\subsubsection*{Conserved quantities induced by symmetries} \label{Sss:NHSConserved}

Following \cite{Bloch1996} we consider the subbundle $\mathfrak{g}_{S}\rightarrow Q$ of the trivial bundle $Q\times \mathfrak{g}\rightarrow Q$ with fiber, at $q\in Q$, given by $\mathfrak{g}_{{S}}|_{q} :=\left\{(q,\xi)\in Q\times\mathfrak{g} \ : \ \xi_{\subQ}(q) \in {S}_{q}\right\}$.  

\begin{remark}
    Let $\g_{\mathcal S}\to \M$ be the bundle defined in \eqref{Eq:gs}, i.e., for $m\in \M$, $\g_{\mathcal S}|_m =\{\xi_m\in \g : (\xi_m)_\subM(m) \in \mathcal{S}_m\}$, for $\mathcal{S} = \C \cap \mathcal{V}$. Then a section $\xi$ on  $\g_S\to Q$ induces a (basic) section $\widetilde \xi$  on $\g_{\mathcal S}\to \M$ given by $\widetilde \xi (m) = \xi(\tau_\subM(m))$.  
    Therefore, for a section $\xi$ on $\g_S\to Q$, we define the infinitesimal generator $\xi_\subM \in \Gamma(\mathcal{S})$ given by  $\xi_\subM(m):= (\xi_q)_\subM (m) \in \mathcal{S}_m$ for $\tau_\subM(m) = q$.  
    Since we work with lifted $G$-actions (and by Def.~\ref{Def:HGM}), from now on, we consider the bundle  $\mathfrak{g}_{S}\rightarrow Q$.
\end{remark}

The link between the reduction using a momentum map and mechanics is Noether theorem: we need to guarantee that $X_\nh$ is tangent to the level sets of the momentum map. That is why we are interested in conserved quantities of nonholonomic systems and their relation with a momentum.   However, $G$-invariant nonholonomic systems may not have conserved quantities; and even if the system admits conserved quantities arising from the symmetries, they may not be given by an element of the Lie algebra but by sections of the bundle $\g_S\to Q$, as it was observed in \cite{BatesGraumann96,FGS2005,Fasso2008}. More precisely: 

\begin{definition}\label{Def:HGM}\cite{BatesGraumann96}
    A function $J\in C^{\infty}(\mathcal{M})$ is a \textit{horizontal gauge momentum} of the vector field $X_{\nh}$ if
    \begin{enumerate}
        \item[$(i)$] $J$ is a conserved quantity of $X_{\nh}$, i.e., $X_{\nh}(J) = 0$; and
    
        \item[$(ii)$] there exists $\xi\in \Gamma(\mathfrak{g}_{S})$ such that $J = {\bf i}_{\xi_\subM} \Theta_\subM$,
    \end{enumerate}
    where $\Theta_{\subM} := \iota_{\subM}^{\ast}\Theta_{\subQ}$ and $\Theta_{\subQ}$ is the Liouville 1-form in $T^{\ast}Q$. In this case, the section $\xi\in \Gamma(\mathfrak{g}_{S})$ is a \textit{horizontal gauge symmetry} and we denote $J =: J_\xi$.
\end{definition}

Following \cite{Bloch1996}, we consider the \textit{nonholonomic momentum (bundle) map} $J_{\nh} : \mathcal{M} \rightarrow \mathfrak{g}_{S}^{\ast}$ which is the bundle map covering the identity on $Q$ defined, for each $m\in \mathcal{M}\subset T^*Q$ and $\xi\in \Gamma (\mathfrak{g}_{S})$, by
\begin{equation*} 
    \langle J_{\nh}(m), \xi(\tau_{\subM}(m))\rangle := {\bf i}_{\xi_{\mathcal{M}}}\Theta_{\subM}(m) = \langle m , \xi_{\subQ}(\tau_\subM(m))\rangle. 
\end{equation*}
Given $\xi\in \Gamma (\mathfrak{g}_{S})$, there is an induced function $J_{\xi}\in C^\infty(\mathcal{M})$ given by $J_{\xi}(m) := \langle J_\nh, \xi\rangle (m) = \langle J_{\nh}(m), \xi(\tau_\subM(m))\rangle$. In particular, the nonholonomic momentum map encodes the horizontal gauge momenta, in the sense that,  if   $J_\xi$ is a horizontal gauge momentum then $J_\xi = \langle J_{\nh} , \xi \rangle$. However for any $\zeta\in \Gamma(\g_S)$, the function $\langle J_\nh,\zeta\rangle$ may not be necessarily a horizontal gauge momentum.  As a consequence, we assume the following condition (already considered in \cite{Balseiro2021,LuisG2016}):

\medskip


\noindent {\bf Conserved quantities assumption:} A nonholonomic system $(\mathcal{M}, \pi_{\nh},H_{\subM})$ with a $G$-symmetry satisfying the dimension assumption,  verifies the \textit{conserved quantity assumption} if it admits $k = \textup{rank}(S)$ horizontal gauge momenta $\{J_{1},\dots,J_{k}\}$ that are $G$-invariant and functionally independent.

\medskip

Next we observe that the conserved quantity assumption implies condition (A1). 

\begin{lemma}\label{L:Ad-basis} 
    The conserved quantity assumption induces an $Ad$-invariant basis 
    $$
        \mathfrak{B}_{\mbox{\tiny\emph{HGS}}} := \left\{\xi_{1},\dots, \xi_{k}\right\},
    $$ 
    of sections of $\g_S\to Q$ of horizontal gauge symmetries.
\end{lemma}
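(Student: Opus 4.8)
The goal is to produce, out of the $k = \textup{rank}(S)$ horizontal gauge momenta $\{J_1,\dots,J_k\}$ supplied by the conserved quantity assumption, a genuine $Ad$-invariant \emph{global} frame $\{\xi_1,\dots,\xi_k\}$ of sections of $\g_S\to Q$. By Definition~\ref{Def:HGM}, each $J_i$ is of the form $J_i = {\bf i}_{(\zeta_i)_\subM}\Theta_\subM$ for some $\zeta_i\in\Gamma(\g_S)$, so we already have $k$ sections; the work is to (a) show they form a pointwise basis of $\g_S$, and (b) replace them by $Ad$-invariant ones without destroying (a).

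\textbf{Step 1: the $\zeta_i$ are a pointwise frame.} Fix $q\in Q$ and $m\in\M$ with $\tau_\subM(m)=q$. Suppose $\sum_i a_i\zeta_i(q)=0$ in $\g_S|_q$; I want $a_i=0$. Apply the infinitesimal-generator map at $m$: $\sum_i a_i(\zeta_i)_\subM(m)=0$ in $\mathcal S_m$. Now I claim $\sum_i a_i\, dJ_i$ is (the pullback of) the one-form that pairs with $\mathcal C_m$ as $\Omega_{\mathcal C}(\sum_i a_i(\zeta_i)_\subM,\,\cdot\,)$; indeed, from $J_i={\bf i}_{(\zeta_i)_\subM}\Theta_\subM$ together with $d\Theta_\subM=-\Omega_\subM$ and $(\zeta_i)_\subM\in\mathcal S\subset\mathcal V\cap\mathcal C$, a standard computation (Cartan's formula plus $G$-invariance of $\Theta_\subM$) gives $dJ_i|_{\mathcal C} = -{\bf i}_{(\zeta_i)_\subM}\Omega_{\mathcal C}$, i.e. in the notation of \eqref{Def:NH-Bracket}, $\pi_\nh^\sharp(dJ_i) = -(\zeta_i)_\subM$. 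Hence $\sum_i a_i(\zeta_i)_\subM(m)=0$ forces $\sum_i a_i\,dJ_i|_{\mathcal C_m}=0$; since $T_m\M=\mathcal C_m+\mathcal V_m$ and the $J_i$ are $G$-invariant (so their differentials annihilate $\mathcal V_m$), we get $\sum_i a_i\,dJ_i(m)=0$, contradicting functional independence unless all $a_i=0$. Since $\textup{rank}(\g_S)=k$, the $\zeta_i(q)$ are a basis of $\g_S|_q$, so $\{\zeta_1,\dots,\zeta_k\}$ is a global frame.

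\textbf{Step 2: averaging to make the frame $Ad$-invariant.} The $\zeta_i$ need not be $Ad$-invariant. Since $G$ acts on $Q$ freely and properly and $\g_S\to Q$ is $G$-equivariant (because $D$, $V$, hence $S$, are $G$-invariant), the space $\Gamma(\g_S)$ carries a $G$-action, and $Ad$-invariant sections are exactly the $G$-invariant ones. The natural move is to descend to the quotient: $\g_S\to Q$ is $G$-equivariant over the free proper action on $Q$, so it descends to a vector bundle $\bar{\g}_S\to Q/G$ of rank $k$, and $Ad$-invariant sections of $\g_S$ correspond bijectively to sections of $\bar{\g}_S$. A global frame of $\g_S$ by $Ad$-invariant sections thus amounts to a global frame of $\bar{\g}_S\to Q/G$. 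In general such a frame need not exist (the bundle could be nontrivial), so the honest statement — and, I expect, what the authors intend — is that one works locally over $Q/G$, or that the hypotheses in force (the conserved quantity assumption, often together with connectedness/contractibility assumptions implicit in the running examples) guarantee triviality; alternatively one invokes that the $k$ functionally independent $G$-invariant $J_i$ already descend to $\M/G$ and their horizontal-gauge-symmetry generators descend to a frame of $\bar{\g}_S$. I would state the construction: push the $\zeta_i$ down to local sections $\bar\zeta_i$ of $\bar{\g}_S$, which remain a frame by Step 1 (the frame condition is $G$-invariant), and pull back to obtain $Ad$-invariant sections $\xi_i$ of $\g_S$ forming a frame; set $\mathfrak B_{\textup{HGS}}:=\{\xi_1,\dots,\xi_k\}$.

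\textbf{Main obstacle.} The genuinely delicate point is Step 2: an arbitrary $G$-invariant choice of horizontal gauge symmetries need not be automatically $Ad$-invariant, and passing to $Ad$-invariant ones globally requires either triviality of $\bar{\g}_S\to Q/G$ or a localization. The computation in Step 1 identifying $\pi_\nh^\sharp(dJ_i)=-(\zeta_i)_\subM$ is routine (it is exactly the content behind $(iii')$ in the Introduction and is the reason $J_\nh$ restricted to each momentum level behaves like a momentum bundle map), but it is the load-bearing fact that converts "functionally independent conserved quantities" into "pointwise frame of $\g_S$", so I would present it carefully. Everything else — $G$-invariance bookkeeping, the rank count $\textup{rank}(\g_S)=\textup{rank}(\mathcal S)=\textup{rank}(S)=k$ — is immediate from the dimension assumption and freeness of the action.
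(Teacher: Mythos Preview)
Your Step 1 rests on the identity $\pi_\nh^\sharp(dJ_i)=-(\zeta_i)_\subM$, i.e.\ ${\bf i}_{(\zeta_i)_\subM}\Omega_{\mathcal C}=dJ_i|_{\mathcal C}$, which you justify by ``Cartan's formula plus $G$-invariance of $\Theta_\subM$''. This identity is \emph{false} in general, and the paper says so explicitly (see item $(iii')$ in the Introduction and the paragraph immediately following this Lemma). The reason Cartan's formula does not give it is that $(\zeta_i)_\subM$ is not the infinitesimal generator of a fixed $\eta\in\g$ but a $C^\infty(Q)$-combination $\sum_j (\tau_\subM^*f_{ij})(\eta_j)_\subM$; hence $\mathcal L_{(\zeta_i)_\subM}\Theta_\subM$ picks up the extra terms $\sum_j ({\bf i}_{(\eta_j)_\subM}\Theta_\subM)\,d(\tau_\subM^*f_{ij})$, which do not vanish on $\mathcal C$. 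Making this identity hold is precisely the purpose of the gauge transformation by $B_1$ later in the paper (Prop.~\ref{Prop:Jnh-MomMap:MBM}); you cannot use it here. The conclusion of Step 1 is nevertheless correct and is obtained in the paper much more directly: since $J_i(m)=\langle m,(\zeta_i)_\subQ(q)\rangle$ is fibrewise linear with ``coefficient'' $\zeta_i$, a functional dependence among the $\zeta_i$ over $Q$ would force one among the $J_i$, contradicting the assumption.

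Your Step 2, together with what you flag as the ``main obstacle'', is unnecessary. The sections $\zeta_i$ coming from Definition~\ref{Def:HGM} are \emph{already} $Ad$-invariant, and this is exactly what the paper proves: from $G$-invariance of $\Theta_\subM$ and of $J_i$ one gets
\[
\Theta_\subM(\psi_g(m))\big(T\psi_g((\zeta_i)_\subM(m))\big)=J_i(m)=J_i(\psi_g(m))=\Theta_\subM(\psi_g(m))\big((\zeta_i)_\subM(\psi_g(m))\big),
\]
and since $\Theta_\subM$ separates the relevant vectors (via $T\tau_\subM$ and freeness of the action) this forces $(\zeta_i)_\subQ(\psi_g(q))=T\psi_g((\zeta_i)_\subQ(q))$, i.e.\ $\zeta_i(\psi_g(q))=Ad_g\,\zeta_i(q)$. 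No averaging and no triviality of a quotient bundle $\bar{\g}_S\to Q/G$ are needed; the obstruction you worry about does not arise.
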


\begin{proof}
    First, let us observe that if the horizontal gauge momenta $\{J_{1},\dots,J_{k}\}$ are functionally independent, then the corresponding horizontal gauge symmetries $\left\{\xi_{1},\dots, \xi_{k}\right\}$ are linearly independent (and globally defined). In fact, if $f\xi_1 = \xi_2$ for $f\in C^\infty(Q)$, then by Def.~\ref{Def:HGM}$(ii)$, $J_1= (\tau_\subM^*f)J_2$.  Using that $J_i$ are linear on the fibers and that $f$ does only depend on the basis $Q$, we see that $J_1= (\tau_\subM^*f)J_2$ contradicts the linear independence of $J_1$ and $J_2$.

    In order to prove the $Ad$-invariance of the sections $\xi_i$, recall that $Y_i = (\xi_i)_\subQ$ and $\mathcal{Y}_i = (\xi_i)_\subM$ are the vector fields on $Q$ and $\M$ respectively.  Using the $G$-invariance of the Liouville 1-form $\Theta_\subM$ and the function $J_i$, we have, for $g\in G$ and $m\in \M,$ that 
    $$
        \Theta_\subM(\psi_g(m))(T\psi_g(\mathcal{Y}_i(m))) = \Theta_\subM(m)(\mathcal{Y}_i(m)) = J_i(m)=J_i(\psi_g(m)) = \Theta_\subM(\psi_g(m))(\mathcal{Y}_i(\psi_g(m))).
    $$
    Using the definition of the Liouville 1-form, we obtain, for $q=\tau_\subM(m)\in Q$ that $Y_i(\psi_g(q)) = T\psi_g(Y_i(q)) = (Ad_g((\xi_i)_q))_\subQ(\psi_g(q))$. Finally we conclude that $\xi_i(q) = Ad_g((\xi_i)_q)$ by the freeness of the $G$-action. 
\end{proof}

Lemma \ref{L:Ad-basis} defines also a $G$-invariant global basis of $S$ and $\mathcal{S}$ given by 
\begin{equation}\label{Eq:BasisY}
    S= \textup{span}\{Y_1=(\xi_1)_\subQ,\dots, Y_k=(\xi_k)_\subQ\}\quad \mbox{ and } \quad \mathcal{S}= \textup{span}\{\mathcal{Y}_1=(\xi_1)_\subM,\dots, \mathcal{Y}_k=(\xi_k)_\subM\},
\end{equation}
respectively.  We also define the dual (global $Ad^*$-invariant) basis of sections
\begin{equation}\label{Eq:BasisHGS}
    \mathfrak{B}_{\mbox{\tiny{HGS}}}^{\ast} := \left\{ \mu^{1},\dots,\mu^{k}\right\},
\end{equation}
of the bundle $\g_S^*\to Q$ as in \eqref{Eq:DefMu}.  

Following Def.~\ref{Def:MomentumBundle} we observe that the nonholonomic momentum bundle map is not necessarily a momentum map for $\pi_\nh$ associated to the basis $\mathfrak{B}_{\mbox{\tiny{HGS}}}$ since, as it was observed in \cite{Balseiro2017,LuisG2016}, $\pi_\nh^\sharp(dJ_i)$ might not coincide with the infinitesimal generator $-(\xi_i)_{\subM}$ (in the context of Sec.~\ref{S:Dirac},  $(-(\xi_i)_{\subM}, dJ_i)$ might not be in $L_{\nh}= \textup{graph}(\pi_\nh^\sharp)$).

In the next section we will deal with this problem by replacing the nonholonomic bracket by another almost Poisson bracket following \cite{Balseiro2021,LuisG2010}.

\subsection{Gauge transformations} \label{Ss:GaugeTrans}

In what follows we study a new bivector field $\pi_\subB$ that describes the dynamics $\pi_\subB^\sharp(dH_\subM) = -X_\nh$ and for which $J_\nh:\M\to \g_S^*$ is a momentum map associated to the basis $\mathfrak{B}_{\mbox{\tiny{HGS}}}$.  In order to define such bracket we ``deform'' the nonholonomic bracket by means of a 2-form.  We follow \cite{LuisN2012} where it was introduced a {\it dynamical gauge transformation} of a bivector field by a 2-form, based in the concept of {\it gauge transformations} \cite{Severa2001}.

First, recall the definition of the nonholonomic bracket $\pi_\nh$ given in \eqref{Def:NH-Bracket}.  Now, if $B$ is a 2-form on $\M$ so that $\left.(\Omega_\subM+B)\right|_{\mathcal{C}}$ is a nondegenerate 2-section on $\mathcal{C}$, then the {\it gauge transformation} of $\pi_\nh$ by $B$ is a new bivector field $\pi_\subB$ such that
\begin{equation*}
    \pi_\subB^\sharp(\alpha) = X \iff {\bf i}_X (\Omega_\subM+B)|_{\mathcal{C}} = -\alpha|_{\mathcal{C}},
\end{equation*}
and we say that $\pi_\nh$ and $\pi_\subB$ are {\it gauge related} (by the 2-form $B$), \cite{Severa2001}.

In particular, gauge related (almost) Poisson brackets share the characteristic distribution, which in this case is $\mathcal{C}$ for $\pi_\nh$ and $\pi_\subB$.  If $B$ is a semi-basic 2-form with respect to the bundle $\tau_\subM:\M\to Q$ (i.e., ${\bf i}_XB=0$ for all $X\in T\M$ such that $T\tau_\subM(X)=0$), then the 2-section $(\Omega_\subM+B)|_{\mathcal{C}}$ is automatically nondegenerated.

\begin{remark}\label{R:GaugeConsiderations}
    \begin{enumerate}
        \item[$(i)$] If $(\Omega_\subM+B)|_{\mathcal{C}}$ is degenerate, then the remaining structure is a Dirac structure.  However, we will see that in our case of study, we always work with semi-basic 2-forms $B$.

  \item[$(ii)$] In general, the (almost) Dirac structures $L_1$ and $L_2$ on $P$ are {\it gauge related} by a 2-form $B$ if $L_2 = \{(X, \alpha + {\bf i}_XB) \in \mathbb{T}P: (X,\alpha)\in L_1 \}$.  In this case, if $(F,\omega)$ is the associated pair of $L_1$, then $(F, \omega+B)$ is the pair associated to $L_2$.  
    
        \item[$(iii)$] If $\pi$ and $\tilde \pi$ are gauge related  bivector fields by a 2-form $B$ with integrable characteristic distribution,  then the corresponding 2-forms on each leaf differ from $B_\mu$, i.e., the restriction of $B$ to the leaf $\mathcal{O}_\mu$.  Even though a gauge transformation keeps the characteristic distribution invariant, gauge related brackets might have different Hamiltonian vector fields, see \cite{Severa2001}.

    \end{enumerate}
\end{remark}

Following \cite{LuisN2012}, we say that a gauge transformation of $\pi_{\nh}$ by a 2-form $B$ satisfies the \textit{dynamical condition} if
\begin{equation} \label{Eq:DynCond}
    {\bf i}_{X_{\nh}}B = 0,
\end{equation} 
and the 2-form $B$ defines a {\it dynamical gauge transformation} of $\pi_{\nh}$. In this case, the gauge related bivector field $\pi_\subB$ describes also the nonholonomic dynamics: $\pi_\subB^\sharp (dH_\subM)=-X_\nh$. 

When the nonholonomic system admits a $G$-symmetry and the 2-form $B$ is $G$-invariant, then the gauge related bivector field $\pi_\subB$ is $G$-invariant as well, and it descends to the reduced quotient manifold $\M/G$ as a reduced bivector field $\pi_\red^\subB$ defined, at each $\alpha \in T^{\ast}(\mathcal{M}/G)$, by
$$
    (\pi_{\red}^{\subB})^{\sharp}(\alpha) := T\rho(\pi_{\subB}^{\sharp}(\rho^{\ast}\alpha)),
$$
where $\rho : \mathcal{M}\rightarrow \mathcal{M}/G$ is the orbit projection.  When $B$ satisfies the dynamical condition \eqref{Eq:DynCond} then  $(\pi_\red^\subB)^\sharp(dH_\red) = -X_\red$, c.f., \eqref{Eq:RedDynamics}.

\begin{remark}
    It is straightforward to check that the almost Dirac structures $L_\red= \textup{graph}(\pi_\red^\sharp)$ and $L_\red^\subB=\textup{graph}((\pi_\red^\subB)^\sharp)$ are the forward image by $\rho : \mathcal{M}\to \mathcal{M}/G$  of  the almost Dirac structures $L_\nh = \textup{graph}(\pi_\nh^\sharp)$ and $L_\subB = \textup{graph}(\pi_\subB^\sharp)$ respectively, see Remark \ref{R:Dirac:BF-Examples}.  
\end{remark}

As it was observed in \cite{LuisN2012} (also in \cite{LuisG2010}, but with more detail in \cite{Balseiro2021} and \cite{LuisG2016}), the reduced bivector fields $\pi_\red$ and $\pi_\red^\subB$ might have different properties, for example one might be Poisson while the other not.   In particular, in \cite{Balseiro2021} (see also \cite{LuisG2016}) it is defined a 2-form $B$ so that $\pi_\red^\subB$ has an integrable characteristic distribution. 

\subsubsection*{A particular dynamical gauge transformation}

Now, we define the 2-form $B$ presented in \cite{Balseiro2021}  and here we will show that $J_\nh:\M\to \g^*_S$ is a momentum map on the almost Poisson manifold $(\M,\pi_\subB)$ associated to the basis $\mathfrak{B}_{\mbox{\tiny{HGS}}}$.

Let $(\M, \pi_\nh, H_\subM)$ be a nonholonomic system  with a $G$-symmetry satisfying the dimension and the conserved quantity assumptions. Consider now a $G$-invariant distribution $H \subset D$ on $Q$ and a $G$-invariant vertical distribution $W\subset V$ on $Q$ -- called a {\it vertical complement of the constraints} \cite{Paula2014} --  so that
\begin{equation}\label{Eq:splittingTQ}
    TQ = H\oplus S\oplus W, \quad \mbox{where } D=H\oplus S \mbox{ and } V=S\oplus W.
\end{equation}

\begin{remark} 
    The distributions $H$ and $W$ can be chosen to be $H = D\cap S^\perp$ and $W= V\cap S^\perp$. However, we remark that they do not have to be necessarily chosen in this way \cite{Paula2014}.
\end{remark}

We have an analogous splitting on $T\M$ given by
\begin{equation} \label{Eq:splittingTM}
    T\M =\mathcal{H}\oplus \mathcal{S}\oplus \mathcal{W} \quad \mbox{where} \quad \mathcal{C} =\mathcal{H}\oplus \mathcal{S} \mbox{ and } \mathcal{V}=\mathcal{S}\oplus \mathcal{W},
\end{equation}
where $\mathcal{H}=\{v\in \mathcal{C} \,:\, T\tau_\subM(v)\in H\}$ and $\mathcal{W} =\{v\in \mathcal{V} \,:\, T\tau_\subM(v)\in W\}$.

Using that $TQ=D\oplus W$, we denote by $P_{\mbox{\tiny{$D$}}} :TQ \to D$ and $P_{\mbox{\tiny{$W$}}}:TQ\to W$ the projections to the first and second factor respectively. 
Then let $A_{\mbox{\tiny{$W$}}}:TQ\to \g$ be the map given, at each $v_q\in T_qQ$, by $A_{\mbox{\tiny{$W$}}}(v_q) = \eta\in \g$ if and only if $P_{\mbox{\tiny{$W$}}}(v_q) = \eta_\subQ(q)$ and we define the 2-form $K_{\mbox{\tiny{$W$}}}$ on $Q$ so that $K_{\mbox{\tiny{$W$}}}(X,Y) = dA_{\mbox{\tiny{$W$}}}(P_{\mbox{\tiny{$D$}}}(X),P_{\mbox{\tiny{$D$}}}(Y))$, for  $X,Y\in \mathcal{X}(Q)$.
Following \cite{Paula2014}, the $\mathcal{W}$\textit{-curvature} $\mathcal{K}_{\mbox{\tiny{${\mathcal W}$}}}$ is the $\mathfrak{g}$-valued $2$-form on $\mathcal{M}$ defined by $\mathcal{K}_{\mbox{\tiny{${\mathcal W}$}}} := \tau_{\subM}^{\ast} K_{\mbox{\tiny{$W$}}}$. On the other hand, let $J : \mathcal{M}\rightarrow \mathfrak{g}^{\ast}$ be the restriction of the canonical momentum map of $T^{\ast}Q$ to $\mathcal{M}$. 
We define the $2$-form $\langle J,\mathcal{K}_{\mbox{\tiny{${\mathcal W}$}}}\rangle$ on $\mathcal{M}$ as the pairing between $J(m)\in \mathfrak{g}^{\ast}$ and $\mathcal{K}_{\mbox{\tiny{${\mathcal W}$}}}(X_m,Y_m) \in \mathfrak{g}$, for $m\in \mathcal{M}$, and $X,Y\in T_m\mathcal{M}$.

Given the {\it conserved quantity assumption}, and from \eqref{Eq:BasisY} we denote by $\{Y^1,\dots, Y^k\}$ the 1-forms satisfying that $Y^i(Y_j)=\delta_{ij}$ and $Y^i|_H = Y^i|_W = 0$, for $i=1,\dots,k.$
We define the 2-form $B_1$ on $\M$ by
\begin{equation} \label{Eq:B1}
    B_{1} := \langle J, \mathcal{K}_{\mbox{\tiny{${\mathcal W}$}}}\rangle+J_{i}\, d^{\mathcal{C}}\mathcal{Y}^{i},
\end{equation}
where $\mathcal{Y}^{i}= \tau_\subM^*Y^i$, $i=1,\dots,k$ are 1-forms on $\M$ and, for $X, Y\in T\M$,  $d^{\mathcal{C}}\mathcal{Y}^{i}(X,Y) = d\mathcal{Y}^{i} (P_{\mbox{\tiny{${\mathcal C}$}}}(X),P_{\mbox{\tiny{${\mathcal C}$}}}(Y))$ for $P_{\mbox{\tiny{${\mathcal C}$}}}: \mathcal{C}\oplus \mathcal{W}\to {\mathcal{C}}$ the corresponding projection.

Next, consider the  principal curvature $\mathcal{K}_{\mbox{\tiny{${\mathcal V}$}}}$ associated with the principal connection $\mathcal{A}_{\mbox{\tiny{${\mathcal V}$}}} :T\M\to \g$ with horizontal space $\mathcal{H}$ given in \eqref{Eq:splittingTM}. We also define $\kappa_{\mathfrak{g}}$ as the $\mathfrak{g}^{\ast}$-valued 1-form on $\mathcal{M}$ given, at each  $X\in T\mathcal{M}$ and $\eta\in \mathfrak{g}$, by $\left\langle \kappa_{\mathfrak{g}}(X), \eta \right\rangle = \kappa\left(T\tau_\subM(X),\eta_{Q}\right)$.
Finally, we define the 2-form $\mathcal{B}$ on $\mathcal{M}$ given by
\begin{equation}\label{Eq:Bcal}
    \mathcal{B} := -\langle J,\mathcal{K}_{\mathcal{V}}\rangle-\frac{1}{2}(\kappa_{\mathfrak{g}}\wedge {\bf i}_{P_{\mathcal{V}}(X_{\nh})}[\mathcal{K}_{\mbox{\tiny{${\mathcal W}$}}} + d^{\mathcal{C}}\mathcal{Y}^{i}\otimes \xi_{i}])_{\mathcal{H}}.
\end{equation}
That is, for $X,Y\in \Gamma(\C)$,  
\begin{equation*}
    \begin{split}
        \mathcal{B}(X,Y) := -\langle J,\mathcal{K}_{\mathcal{V}}\rangle(X,Y) & -\tfrac{1}{2}\langle\kappa_{\mathfrak{g}} (P_{\mbox{\tiny{$\mathcal{H}$}}}(X)), [\mathcal{K}_{\mbox{\tiny{${\mathcal W}$}}} + d^{\mathcal{C}}\mathcal{Y}^{i}\otimes \xi_{i}](P_{\mbox{\tiny{$\mathcal{V}$}}}(X_\nh), P_{\mbox{\tiny{$\mathcal{H}$}}}(Y))\rangle \\
        & + \tfrac{1}{2}\langle\kappa_{\mathfrak{g}} (P_{\mbox{\tiny{$\mathcal{H}$}}}(Y)), [\mathcal{K}_{\mbox{\tiny{${\mathcal W}$}}} + d^{\mathcal{C}}\mathcal{Y}^{i}\otimes \xi_{i}](P_{\mbox{\tiny{$\mathcal{V}$}}}(X_\nh), P_{\mbox{\tiny{$\mathcal{H}$}}}(X))\rangle.
    \end{split}
\end{equation*}
For more details on the definitions of $B_1$ and $\mathcal{B}$ see \cite{Balseiro2021}. 
\begin{remark}\label{R:B1B}  Consider the 2-forms $B_1$ and $\mathcal{B}$ defined in \eqref{Eq:B1} and \eqref{Eq:Bcal} respectively.
    \begin{enumerate}
        \item[$(i)$] The $G$-invariance of the horizontal gauge momenta $\{J_1,\dots,J_k\}$ imply the $G$-invariance of the 2-forms $B_1$ and $\mathcal{B}$, \cite{Balseiro2021}.
        
        \item[$(ii)$]  Moreover, $\mathcal{B}$ is basic with respect to the principal bundle $\rho:\M\to \M/G$, that is, there is a 2-form $\bar{\mathcal{B}}$ on $\M/G$ such that $\rho^*{\bar{\mathcal B}} = \mathcal{B}$.
    \end{enumerate}
\end{remark}

By construction, $B_{1}$ and $\mathcal{B}$ are semi-basic with respect to the bundle $\tau_\subM : \mathcal{M} \rightarrow Q$ and hence

\begin{definition}\cite{Balseiro2021} \label{Def:PiPB} 
    Consider the nonholonomic system $(\M, \pi_\nh, H_\subM)$ with a $G$-symmetry satisfying the dimension and the conserved quantity assumptions.
    \begin{enumerate}
        \item[$(i)$] Let $\pi_{1}$ be the bivector field on $\M$ obtained by the gauge transformation of $\pi_\nh$ by the 2-form $B_1$.
        
        \item[$(ii)$] Let $\pi_{\subB}$ be the bivector field on $\M$ obtained by the gauge transformation of $\pi_\nh$ by the 2-form $B: = B_{1} + \mathcal{B}$. 
    \end{enumerate} 
\end{definition}

Observe that the bivector fields $\pi_1$ and $\pi_\subB$ are gauge related by the 2-form $\mathcal{B}$.  
However, as it was proven in \cite{Balseiro2021}, only $\pi_{\subB}$ is compatible with the dynamics:
\begin{equation}\label{Eq:NHB-dynamics}
    \pi_{\subB}^{\sharp}(dH_{\subB}) = -X_{\nh},
\end{equation}
that is, the 2-form $B$ satisfies the dynamical condition \eqref{Eq:DynCond}.

\begin{proposition}\label{Prop:Jnh-MomMap:MBM} 
    The nonholonomic momentum bundle map $J_{\emph\nh}:\M\to \g^*_S$ is a momentum bundle map for the bivector fields $\pi_{1}$ and $\pi_{\subB}$ associated to the basis $\mathfrak{B}_{\mbox{\tiny{\emph{HGS}}}}$.
\end{proposition}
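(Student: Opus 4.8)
The plan is to verify Definition~\ref{Def:MomentumBundle} directly for each bivector field, i.e.\ to show that $(-(\xi_i)_\subM, dJ_i) \in L_{\nh}^{B_1} := \textup{graph}(\pi_1^\sharp)$ and $(-(\xi_i)_\subM, dJ_i) \in L_\subB := \textup{graph}(\pi_\subB^\sharp)$ for each $i = 1,\dots,k$, where $\{\xi_1,\dots,\xi_k\}$ is the $Ad$-invariant basis $\mathfrak{B}_{\mbox{\tiny{HGS}}}$ of horizontal gauge symmetries from Lemma~\ref{L:Ad-basis}. Since $\pi_1$ and $\pi_\subB$ are obtained from $\pi_\nh$ by gauge transformations by the semi-basic 2-forms $B_1$ and $B = B_1 + \mathcal{B}$ respectively, and since gauge related brackets share the characteristic distribution $\mathcal{C}$, this amounts to showing: for each $i$, the vector field $-(\xi_i)_\subM = -\mathcal{Y}_i$ lies in $\mathcal{C}$ (which is immediate, as $\mathcal{Y}_i \in \Gamma(\mathcal{S}) \subset \Gamma(\mathcal{C})$), and
\begin{equation*}
    {\bf i}_{-\mathcal{Y}_i}(\Omega_\subM + B_1)|_{\mathcal{C}} = -dJ_i|_{\mathcal{C}}, \qquad {\bf i}_{-\mathcal{Y}_i}(\Omega_\subM + B_1 + \mathcal{B})|_{\mathcal{C}} = -dJ_i|_{\mathcal{C}}.
\end{equation*}
In other words, the content reduces to the single identity ${\bf i}_{\mathcal{Y}_i}(\Omega_\subM + B)|_{\mathcal{C}} = dJ_i|_{\mathcal{C}}$ (and likewise with $B$ replaced by $B_1$), which by Remark~\ref{R:GaugeConsiderations}(ii) is exactly the statement that $(-\mathcal{Y}_i, dJ_i)$ is obtained from a pair $(-\mathcal{Y}_i, \alpha) \in L_\nh$ by adding ${\bf i}_{-\mathcal{Y}_i}B$; so the crux is to show
\begin{equation*}
    dJ_i|_{\mathcal{C}} = {\bf i}_{\mathcal{Y}_i}\Omega_\subM|_{\mathcal{C}} + {\bf i}_{\mathcal{Y}_i}B|_{\mathcal{C}}.
\end{equation*}

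The second step is to compute ${\bf i}_{\mathcal{Y}_i}\Omega_\subM$ and compare with $dJ_i$. Recall $J_i = {\bf i}_{\mathcal{Y}_i}\Theta_\subM$ by Definition~\ref{Def:HGM}(ii). Using the Cartan formula, $dJ_i = d\,{\bf i}_{\mathcal{Y}_i}\Theta_\subM = \mathcal{L}_{\mathcal{Y}_i}\Theta_\subM - {\bf i}_{\mathcal{Y}_i}d\Theta_\subM = \mathcal{L}_{\mathcal{Y}_i}\Theta_\subM + {\bf i}_{\mathcal{Y}_i}\Omega_\subM$ (with the sign convention $\Omega_\subM = -d\Theta_\subM$). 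Hence
\begin{equation*}
    dJ_i - {\bf i}_{\mathcal{Y}_i}\Omega_\subM = \mathcal{L}_{\mathcal{Y}_i}\Theta_\subM,
\end{equation*}
so what must be shown is $\mathcal{L}_{\mathcal{Y}_i}\Theta_\subM|_{\mathcal{C}} = {\bf i}_{\mathcal{Y}_i}B|_{\mathcal{C}}$, i.e.\ that the 2-form $B$ (and $B_1$) correctly reproduces the ``anomaly'' $\mathcal{L}_{\mathcal{Y}_i}\Theta_\subM$ on the characteristic distribution $\mathcal{C}$. Here I would invoke the explicit structure of $B_1$ from \eqref{Eq:B1}: the term $\langle J, \mathcal{K}_{\mbox{\tiny{${\mathcal W}$}}}\rangle$ accounts for the failure of $\mathcal{Y}_i$ to be a symmetry of $\Theta_\subM$ coming from the $W$-component, while $J_i\, d^{\mathcal{C}}\mathcal{Y}^i$ accounts for the fact that $\mathcal{Y}_i$ may have nonconstant ``length'' along $\mathcal{S}$ (cf.\ the nonholonomic particle in Example~\ref{ex:NHParticle}, where $\xi = \sqrt{f(y)}(1,y)$). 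One needs the Lie-derivative identity, for $X$ tangent to $\mathcal{C}$,
\begin{equation*}
    \mathcal{L}_{\mathcal{Y}_i}\Theta_\subM(X) = \langle J, \mathcal{K}_{\mbox{\tiny{${\mathcal W}$}}}\rangle(\mathcal{Y}_i, X) + (J_j\, d^{\mathcal{C}}\mathcal{Y}^j)(\mathcal{Y}_i, X),
\end{equation*}
which is where the bulk of the computation lives; it uses $\mathcal{Y}^j(\mathcal{Y}_i) = \delta_{ij}$, the fact that $\mathcal{Y}^j$ vanishes on $\mathcal{H}$ and $\mathcal{W}$, and the definition of the $\mathcal{W}$-curvature via the connection form $A_{\mbox{\tiny{$W$}}}$. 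Finally, since $\mathcal{Y}_i \in \Gamma(\mathcal{S}) \subset \Gamma(\mathcal{C})$ and $\mathcal{B}$ only differs from $B_1$ by a 2-form built from $\kappa_\g$ paired against $P_{\mathcal{V}}(X_\nh)$ and curvature terms in directions involving $P_{\mathcal{H}}$, one checks that ${\bf i}_{\mathcal{Y}_i}\mathcal{B}|_{\mathcal{C}} = 0$: indeed $\mathcal{B}$ is built so that contracting with a vector in $\mathcal{S}$ (which lies in both the ``$\mathcal{V}$-part'' killed by the projection $P_{\mathcal{H}}$ in the wedge factor and has $\kappa_\g(P_{\mathcal{H}}(\mathcal{Y}_i)) = 0$) produces zero; this gives the identical conclusion for $\pi_1$ and $\pi_\subB$ simultaneously.

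The main obstacle I expect is the second step, namely establishing the identity $\mathcal{L}_{\mathcal{Y}_i}\Theta_\subM|_{\mathcal{C}} = {\bf i}_{\mathcal{Y}_i}B_1|_{\mathcal{C}}$ with the correct signs and the correct handling of the two terms in $B_1$. This requires carefully unwinding the definitions of $\mathcal{K}_{\mbox{\tiny{${\mathcal W}$}}}$, $d^{\mathcal{C}}\mathcal{Y}^i$, and the splitting \eqref{Eq:splittingTM}, and tracking how $\Theta_\subM$ restricted to $\mathcal{M}$ interacts with the infinitesimal generators; it is essentially the computational heart of \cite{Balseiro2021}, so I would either reproduce the relevant lemma from there or cite it. A secondary subtlety is verifying ${\bf i}_{\mathcal{Y}_i}\mathcal{B}|_{\mathcal{C}} = 0$, which should follow cleanly from $\mathcal{Y}_i \in \Gamma(\mathcal{S})$ and the explicit form \eqref{Eq:Bcal}, but requires noting that $\mathcal{S} \subset \mathcal{V}$ makes the $P_{\mathcal{H}}$-projections in the wedge vanish against $\mathcal{Y}_i$. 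Once these two points are in hand, membership $(-\mathcal{Y}_i, dJ_i) \in L_\subB$ and $\in L_1$ follows, which is precisely the assertion that $J_\nh$ is a momentum bundle map for $\pi_\subB$ and $\pi_1$ associated to $\mathfrak{B}_{\mbox{\tiny{HGS}}}$.
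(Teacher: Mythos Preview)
Your reduction via the Cartan formula to the identity $\mathcal{L}_{\mathcal{Y}_i}\Theta_\subM|_{\mathcal{C}} = {\bf i}_{\mathcal{Y}_i}B_1|_{\mathcal{C}}$ is correct and more intrinsic than what the paper does, but you leave exactly that identity as a black box (to be cited or reproduced from \cite{Balseiro2021}). The paper instead takes a direct coordinate route: it fixes a local basis $\mathfrak{B}_{T\M} = \{\mathcal{X}_\alpha, \mathcal{Y}_i, \mathcal{Z}_a, \partial_{p_\alpha}, \partial_{p_i}\}$ adapted to $T\M = \mathcal{H}\oplus\mathcal{S}\oplus\mathcal{W}$, observes that $J_i = p_i$ in these coordinates, and computes explicitly
\[
(\Omega_\subM + B_1)|_{\mathcal{C}} = \left(\mathcal{X}^\alpha\wedge dp_\alpha + \mathcal{Y}^i\wedge dp_i - p_\alpha d\mathcal{X}^\alpha\right)|_{\mathcal{C}}.
\]
The $G$-invariance of the $\mathcal{X}_\alpha$ forces $d\mathcal{X}^\alpha|_{\mathcal{C}}$ to be a combination of $\mathcal{X}^\beta\wedge\mathcal{X}^\gamma$ only, so contracting with $\mathcal{Y}_i$ yields $dp_i|_{\mathcal{C}} = dJ_i|_{\mathcal{C}}$ immediately. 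This coordinate computation is short and self-contained; your approach trades that concreteness for a cleaner conceptual statement but defers the work.

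For the passage from $\pi_1$ to $\pi_\subB$, your argument via the explicit structure of \eqref{Eq:Bcal} and the vanishing of $P_{\mathcal{H}}(\mathcal{Y}_i)$ is correct but heavier than necessary. The paper simply invokes Remark~\ref{R:B1B}(ii): $\mathcal{B}$ is basic with respect to $\rho:\M\to\M/G$, hence ${\bf i}_{\mathcal{Y}_i}\mathcal{B} = 0$ for any vertical $\mathcal{Y}_i$, and the conclusion for $\pi_\subB$ follows in one line.
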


\begin{proof}
    It is straightforward to prove this fact for $\pi_1$ using coordinates: first,  using the basis \eqref{Eq:BasisY}, we define a local basis of sections of $TQ$ adapted to the splitting \eqref{Eq:splittingTQ}, given by 
    \begin{equation} \label{Eq:BasisTQ}
        \mathfrak{B}_{\mbox{\tiny{$TQ$}}} = \{ X_{\alpha}, {Y}_{i} := (\xi_{i})_{\subQ}, Z_{a}\} \quad \mbox{with its dual basis} \quad \mathfrak{B}_{\mbox{\tiny{$T^*Q$}}} = \{X^{\alpha}, {Y}^{i}, \epsilon^{a}\},
    \end{equation} 
    where $\{X_{\alpha}\}$ and and $\{Z_a\}$ are $G$-invariant local basis of $H \subset D$ and $W\subset V$ respectively, and the 1-forms $\epsilon^{a}$ are the constraint 1-forms on $Q$. If we denote the coordinates on $T^{\ast}Q$ induced by $\mathfrak{B}_{\mbox{\tiny{$T^{\ast}Q$}}}$ as $(p_{\alpha},p_{i},p_{a})$, then we can use coordinates $(p_{\alpha},p_{i})$ on $\mathcal{M}$, since $p_{a}$ depends linearly on $p_{\alpha}$ and $p_{i}$. Therefore, we have a local basis $\mathfrak{B}_{\mbox{\tiny{$T\M$}}}$ of $T\mathcal{M}$ adapted to the splitting $T\M =\mathcal{H}\oplus \mathcal{S}\oplus \mathcal{W}$ with dual basis $\mathfrak{B}_{\mbox{\tiny{$T^*\M$}}}$ given by
    \begin{equation}\label{Eq:localCoordTM}
        \mathfrak{B}_{\mbox{\tiny{${T\M}$}}} = \{\mathcal{X}_{\alpha}, \mathcal{Y}_{i}, \mathcal{Z}_{a},\partial_{p_{\alpha}},\partial_{p_{i}}\}\quad \textup{and}\quad \mathfrak{B}_{\mbox{\tiny{${T^{\ast}\!\M}$}}} = \{\mathcal{X}^{\alpha}, \mathcal{Y}^{i}, \tilde{\epsilon}^{a},dp_{\alpha},dp_{i}\},
    \end{equation}
    where $\mathcal{X}^{\alpha} = \tau_{\subM}^{\ast}X^{\alpha}$, $\mathcal{Y}^{i} = \tau_{\subM}^{\ast}Y^{i}$ and $\tilde{\epsilon}^{a} = \tau_{\subM}^{\ast}\epsilon^{a}$ (in particular, $\mathfrak{B}_{\mbox{\tiny{${T^{\ast}\M}$}}}$ is defined first from $\mathfrak{B}_{\mbox{\tiny{${T^{\ast}Q}$}}}$ and then $\mathfrak{B}_{\mbox{\tiny{${T\M}$}}}$ is defined as the dual basis). 
    Since that $\mathcal{Y}_i=(\xi_i)_\subM$ then each horizontal gauge momenta is given by $J_{i} = p_{i}$ for $i=1,\dots,k$. Using that $\langle J, \mathcal{K}_{\mbox{\tiny{${\mathcal W}$}}}\rangle|_{\mathcal{C}} = \imath_{\subM}^{\ast}(p_{a})d\tilde{\epsilon}^{a}|_{\mathcal{C}}$, then
    \begin{equation}\label{Eq:Omega1-Coord}
        (\Omega_{\subM}+B_1)|_{\mathcal{C}} = \left(\mathcal{X}^{\alpha}\wedge dp_{\alpha} + \mathcal{Y}^{i}\wedge dp_{i}-p_{\alpha}d\mathcal{X}^{\alpha}\right)|_{\mathcal{C}}. 
    \end{equation}
    Recall that, if $X$ is an invariant vector field on $\M$ and $Y\in \Gamma(\mathcal{V})$, then $[X,Y]\in \Gamma(\mathcal{V})$ which implies that $d\mathcal{X}^\alpha|_{\mathcal{C}} = -\frac{1}{2}C_{\beta\gamma}^\alpha \mathcal{X}^\beta\wedge \mathcal{X}^\gamma|_{\mathcal{C}}$, where $C_{\alpha\beta}^{\gamma} := {X}^{\gamma}([{X}_{\alpha},{X}_{\beta}])$ denote the structure functions relative to the basis $\mathfrak{B}_{\mbox{\tiny{$TQ$}}}$. Thus ${\bf i}_{(\xi_i)_\subM}(\Omega_{\subM}+B_1)|_{\mathcal{C}} = dp_{i}|_{\mathcal{C}}$ and hence $\pi_1^\sharp(dJ_i) = - (\xi_i)_\subM$, for each $i = 1,\dots,k$.
    
    Finally, since $\mathcal{B}$ is basic (Remark \ref{R:B1B}$(ii)$) with respect to the bundle $\M\to \M/G$ then ${\bf i}_{(\xi_i)_\subM}(\Omega_{\subM}+B)|_{\mathcal{C}} = dp_{i}|_{\mathcal{C}}$, concluding that $\pi_\subB^\sharp(dJ_i) = - (\xi_i)_\subM$ for each $i = 1,\dots,k$. 
\end{proof}

The previous proposition shows the power of considering the gauge transformation by $B_{1}$ and $B$ since the nonholonomic momentum map might not be a momentum bundle map for $\pi_{\nh}$ associated to the basis $\mathfrak{B}_{\mbox{\tiny{HGS}}}$, but it is for $\pi_{1}$ and $\pi_{\subB}$.

Due to the $G$-invariance of the bivector fields $\pi_1$ and $\pi_\subB$ we can define the reduced bivectors $\pi_{\red}^{1}$ and $\pi_{\red}^{\subB}$ on $\M/G$ given respectively, at each $\alpha\in T^*(\M/G)$, by
\begin{equation} \label{Eq:NHRedBrackets}
    (\pi_\red^1)^\sharp(\alpha) = T\rho (\pi_1^\sharp(\rho^*\alpha)) \qquad \mbox{and} \qquad (\pi_\red^\subB)^\sharp(\alpha) = T\rho (\pi_\subB^\sharp(\rho^*\alpha)).
\end{equation}
As a consequence of \eqref{Eq:NHB-dynamics}, the reduced bivector field $\pi_{\red}^{\subB}$ is compatible with the reduced dynamics: 
\begin{equation*}
    (\pi_\red^\subB)^\sharp(dH_\red) = - X_\red.
\end{equation*}
Since $\mathcal{B}$ is basic then both bivector fields $\pi_\red^1$ and $\pi_\red^\subB$ are gauge related by the 2-form $\bar{\mathcal{B}}$ defined on $\M/G$ and share the characteristic distribution (see Remark \ref{R:B1B}$(ii)$).  As a consequence of Prop.~\ref{Prop:Jnh-MomMap:MBM} and the dimension assumption, we obtain:

\begin{proposition}\label{Prop:Jnh-MomMap:RED} 
    The bivector fields  $\pi_{\emph\red}^{1}$ and $\pi_{\emph\red}^{\subB}$ on $\M/G$ have an integrable characteristic distribution given by the common level sets of the (reduced) horizontal gauge momenta $\bar{J}_i$, for $\bar{J}_i$ the functions on $\M/G$ such that $\rho^*\bar{J}_i = J_i = \langle J_{\emph\nh}, \xi_{i}\rangle$ for $i=1,\dots,k$. 
\end{proposition}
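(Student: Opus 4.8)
The plan is to combine Proposition~\ref{Prop:Jnh-MomMap:MBM}, which identifies $J_\nh$ as a momentum bundle map for $\pi_1$ and $\pi_\subB$ associated to $\mathfrak{B}_{\mbox{\tiny{HGS}}}$, with the general Dirac-theoretic reduction machinery of Theorem~\ref{T:momentum-red} and Corollary~\ref{C:MomReduction}, applied on $\M$. First I would observe that all hypotheses of Theorem~\ref{T:momentum-red} are met: $G$ acts freely and properly on $\M$; the almost Dirac structures $L_1 = \textup{graph}(\pi_1^\sharp)$ and $L_\subB = \textup{graph}(\pi_\subB^\sharp)$ are $G$-invariant (since $\pi_1,\pi_\subB$ are) and regular with characteristic distribution $\mathcal{C}$ (gauge transformations preserve the characteristic distribution); condition (A1) holds by Lemma~\ref{L:Ad-basis} and the conserved quantity assumption, which supplies the $G$-invariant global basis $\{\mathcal{Y}_1,\dots,\mathcal{Y}_k\}$ of $\mathcal{S}$; and condition (A2) holds because $K_{L_1} = K_{L_\subB} = \{0\}$ (for semi-basic $B$ the 2-section $(\Omega_\subM+B)|_\mathcal{C}$ is nondegenerate, so these are graphs of bivector fields with trivial null distribution). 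Proposition~\ref{Prop:Jnh-MomMap:MBM} provides exactly the momentum-map condition \eqref{Eq:MomentumCond}, and since the $J_i$ are $G$-invariant, Lemma~\ref{L:MomMap-Ad} gives $Ad^*$-equivariance of $J_\nh$.

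Next I would apply Theorem~\ref{T:momentum-red} to $L_1$ and $L_\subB$ separately. By part~$(ii)$ and Corollary~\ref{C:MomReduction}$(i)$, the forward images of the backward restrictions to $J_\nh^{-1}(\mu)$ descend to (almost) Poisson brackets on $J_\nh^{-1}(\mu)/G$ for every $\mu = c_i\mu^i$; moreover by Proposition~\ref{Prop:Jsubmfld} the connected components of these level sets foliate $\M$. The key point now is to relate these reductions back to the \emph{reduced} bivectors $\pi_\red^1, \pi_\red^\subB$ on $\M/G$. Here I would invoke Proposition~\ref{Prop:Jnh-MomMap:RED}'s underlying mechanism: since $\pi_\red^1$ (resp.\ $\pi_\red^\subB$) is the forward image of $\pi_1$ (resp.\ $\pi_\subB$) by $\rho:\M\to\M/G$, and the $J_i$ descend to functions $\bar J_i$ on $\M/G$ with $\rho^*\bar J_i = J_i$, the claim that $\pi_\red^1$ and $\pi_\red^\subB$ have integrable characteristic distribution with leaves the common level sets $\bigcap_i \bar J_i^{-1}(c_i)$ amounts to showing that (a) the $\bar J_i$ are Casimirs of $\pi_\red^1$ and $\pi_\red^\subB$, i.e.\ $(\pi_\red^1)^\sharp(d\bar J_i) = (\pi_\red^\subB)^\sharp(d\bar J_i) = 0$, and (b) the characteristic distribution of $\pi_\red^1$ (resp.\ $\pi_\red^\subB$) is spanned by the Hamiltonian vector fields, whose span is forced by the level sets of these Casimirs to be exactly the tangent spaces to $\bigcap_i\bar J_i^{-1}(c_i)$.

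For (a) I would compute: since $\pi_1^\sharp(dJ_i) = -(\xi_i)_\subM = -\mathcal{Y}_i$ is a vertical vector field (Proposition~\ref{Prop:Jnh-MomMap:MBM}), pushing forward by $\rho$ gives $(\pi_\red^1)^\sharp(d\bar J_i) = T\rho(\pi_1^\sharp(\rho^*d\bar J_i)) = T\rho(\pi_1^\sharp(dJ_i)) = -T\rho(\mathcal{Y}_i) = 0$ because $\mathcal{Y}_i\in\Gamma(\mathcal{V})=\Gamma(\ker T\rho)$; the identical argument works for $\pi_\subB$. So the $\bar J_i$ are $k$ functionally independent Casimirs of both reduced bivectors, forcing the characteristic distribution to lie inside the common level sets. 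For (b), a dimension count closes the argument: by the dimension assumption $T\M = \mathcal{C}+\mathcal{V}$, so $\textup{rank}(\C) + \dim\M - \dim(\M/G) \geq \dim\M$, hence $\textup{rank}(\C) \geq \dim(\M/G)$; since $\pi_1,\pi_\subB$ have characteristic distribution $\C$ exactly, and the vertical directions in $\C$ (namely $\mathcal{S}$) are killed by $T\rho$ while $T\rho(\C)$ is precisely the characteristic distribution of $\pi_\red^1$ (resp.\ $\pi_\red^\subB$), we get $\textup{rank}(\text{char.\ dist.\ of }\pi_\red^\bullet) = \textup{rank}(\C) - \textup{rank}(\mathcal{S}) = \dim(\M/G) - k$, which matches $\dim\bigl(\bigcap_i\bar J_i^{-1}(c_i)\bigr)$. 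Combined with (a), the characteristic distribution coincides with the tangent distribution to the common level sets, which is integrable; and it is the pushforward of the foliation of $\M$ from Proposition~\ref{Prop:Jsubmfld}. The main obstacle I anticipate is the careful bookkeeping in (b) — verifying that $T\rho(\C)$ is exactly the characteristic distribution of the reduced bivector (and not merely contains it), which requires using that $\mathcal{S} = \C\cap\mathcal{V}$ and that the forward-image construction of Remark~\ref{R:Dirac:BF-Examples}$(ii)$ applies; but this is precisely the content of how $\pi_\red$ is built from $\pi$, so it is routine once set up correctly.
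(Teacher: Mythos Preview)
Your step (a) is correct and is essentially the paper's starting point: $\pi_1^\sharp(dJ_i) = -(\xi_i)_\subM$ is vertical, so $(\pi_\red^1)^\sharp(d\bar J_i)=0$, and the $d\bar J_i$ lie in $\textup{Ker}((\pi_\red^1)^\sharp)$. The gap is entirely in step (b). Your claim that ``$T\rho(\mathcal C)$ is precisely the characteristic distribution of $\pi_\red^1$'' is false, and your arithmetic hides this: from the dimension assumption $T\M=\mathcal C+\mathcal V$ one gets $\textup{rank}(\mathcal C)=\dim(\M/G)+k$, not $\dim(\M/G)$, so $\textup{rank}(\mathcal C)-k=\dim(\M/G)$, i.e.\ $T\rho(\mathcal C)=T(\M/G)$ is the whole tangent bundle. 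The characteristic distribution of a pushforward bivector is $T\rho\bigl(\pi_1^\sharp(\mathcal V^\circ)\bigr)$ (only \emph{basic} covectors are in the image of $\rho^*$), which can be and here is strictly smaller than $T\rho(\textup{Im}\,\pi_1^\sharp)=T\rho(\mathcal C)$. So your dimension count does not close, and the obstacle you flagged as ``routine'' is exactly where the argument fails.

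What is actually needed is to show $\textup{Ker}((\pi_\red^1)^\sharp)=\textup{span}\{d\bar J_1,\dots,d\bar J_k\}$, not merely the inclusion $\supseteq$. The paper does this in one line: $\textup{Ker}(\pi_1^\sharp)=\mathcal C^\circ$ consists of the constraint $1$-forms, and these are \emph{not basic} --- equivalently $\mathcal C^\circ\cap\mathcal V^\circ=(\mathcal C+\mathcal V)^\circ=0$ by the dimension assumption. Hence if $(\pi_\red^1)^\sharp(\alpha)=0$ then $\pi_1^\sharp(\rho^*\alpha)\in\mathcal S$, say $=c_i\mathcal Y_i$, so $\rho^*\alpha+c_idJ_i\in\textup{Ker}(\pi_1^\sharp)\cap\mathcal V^\circ=0$, giving $\alpha=-c_id\bar J_i$. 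This is the missing ingredient; once you have it, the characteristic distribution has codimension exactly $k$ and coincides with the level-set foliation. For $\pi_\red^\subB$ the paper simply invokes that it is gauge-related to $\pi_\red^1$ by $\bar{\mathcal B}$ and hence shares the same characteristic distribution.
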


\begin{proof}
    From Prop. \ref{Prop:Jnh-MomMap:MBM}, we conclude that $\pi_1^\sharp(dJ_i) = - (\xi_i)_\subM$ for each $i=1,\dots,k$. Then it is easily checked that  $\textup{span}\{d\bar{J}_1,\dots,d\bar{J}_k\} = \textup{Ker}(\pi_\red^1)$, since $\textup{Ker}(\pi_1)$ are the constraint 1-forms (which are not basic with respect to $\rho:\M\to \M/G$). Therefore the characteristic distribution of $\pi_\red^1$ is given by the common level set the horizontal gauge momenta $\bar{J}_i$. This is also true for $\pi_\red^\subB$ since it is gauge related to $\pi_\red^1$ by $\bar{\mathcal{B}}$.
\end{proof}

The technique used in the last proofs (Props. \ref{Prop:Jnh-MomMap:MBM} and \ref{Prop:Jnh-MomMap:RED})  will be a common strategy in this paper: first we prove a certain property for $\pi_1$ or $\pi_\red^1$. Second, using that $\pi_1$ and $\pi_\subB$ are gauge related (respectively $\pi_\red^1$ and $\pi_\red^\subB$), the property is easily checked to be valid on $\pi_\subB$ (respectively on $\pi_\red^\subB$).

\subsection{Nonholonomic momentum reduction} \label{Ss:Nonholonomic_momentum_reduction}

Now we arrive to the main result of this section. After the discussion in Sec.~\ref{Ss:GaugeTrans} (in particular in Prop.~\ref{Prop:Jnh-MomMap:MBM}), we observe that due to the interaction of the conserved quantities and the nonholonomic momentum bundle map,  the momentum map reduction --stated in Theorem \ref{T:momentum-red}-- can be done on $(\M,\pi_1)$ and $(\M,\pi_\subB)$ but not necessarily on $(\M,\pi_\nh)$.

\subsubsection*{Level sets of the nonholonomic momentum map} \label{Sss:LevelSets}
Let us start with a nonholonomic system $(\mathcal{M},\pi_{\nh},H_{\mbox{\tiny{${\mathcal M}$}}})$ with a $G$-symmetry satisfying the dimension and the conserved quantity assumptions. Denote by $\{J_{1},\dots,J_{k}\}$ the ($G$-invariant and functionally independent) horizontal gauge momenta, with associated horizontal gauge symmetries defining a global basis $\mathfrak{B}_{\mbox{\tiny{HGS}}} = \{\xi_{1},\dots,\xi_{k}\}$ of $Ad$-invariant section of $\g_{S} \to Q$.
Let $\mathfrak{B}_{\mbox{\tiny{HGS}}}^{\ast} := \left\{ \mu^{1},\dots,\mu^{k}\right\}$ be the dual $Ad^*$-invariant basis of sections on $\g_S^*\to Q$ of $\mathfrak{B}_{\mbox{\tiny{HGS}}}$ given in \eqref{Eq:BasisHGS}, and consider $\mu = c_{i}\mu^{i}\in C^{\infty}(Q,\mathfrak{g}_{S}^{\ast})$, where $c_{i}\in \mathbb{R}$ are constants.
Following Prop.~\ref{Prop:Jsubmfld}, we consider the $G$-invariant submanifold $J_\nh^{-1}(\mu) \subset \mathcal{M}$ given by
\begin{equation}\label{Eq:NHLevelSet}
    J_{\nh}^{-1}(\mu) := \left\{m\in\mathcal{M}\,:\, \,J_\nh(m) = \mu(\tau_{\subM}(m))\right\} = \bigcap_{i=1}^{k} \, J_{i}^{-1}(c_{i}),
\end{equation}
and the collection of (connected components of) the manifolds $J_\nh^{-1}(\mu)$ determines a foliation of $\mathcal{M}$ for $\mu \in \textup{span}_{\mathbb{R}}\mathfrak{B}_{\mbox{\tiny{HGS}}}^{\ast}$. 
Since the functions $J_i$, for $i=1,\dots,k$, are conserved quantities of the nonholonomic vector field $X_\nh$ it is straightforward to see that $X_\nh$ is tangent to the manifolds $J_{\nh}^{-1}(\mu)$.
Let us denote by $X_{\mu}$ the vector field restricted to $J_\nh^{-1}(\mu)$, that is, 
\begin{equation*}
    X_{\mu} := \left.X_{\nh}\right|_{J_\nh^{-1}(\mu)}. 
\end{equation*}
By the $G$-invariance of the horizontal gauge momenta $J_i$ we also have that $J_\nh^{-1}(\mu) /G =  \cap_{i}\, \bar{J}_{i}^{-1}(c_{i})$, where, as usual, $\bar{J}_i\in C^\infty(\M/G)$ are the reduced horizontal gauge momenta (i.e., $\rho^*\bar{J}_i = J_i$) for $i=1,\dots,k.$

\subsubsection*{The momentum map reduction of $\pi_1$} \label{Sss:MMReductionPi1}

Consider the bivector field $\pi_1$ obtained by the gauge transformation of $\pi_\nh$ by the 2-form $B_1$ defined in \eqref{Eq:B1} and recall that the nonholonomic momentum map is a momentum bundle map associated to $\mathfrak{B}_{\mbox{\tiny{HGS}}}$ (see Prop.~\ref{Prop:Jnh-MomMap:MBM}). Furthermore, observe that conditions (A1) and (A2) remain satisfied for the bivector field $\pi_1$. 

Following the ideas of Section \ref{Ss:MomMapDirac}, we define $L_\mu^1 \subset \mathbb{T}(J_\nh^{-1}(\mu))$ the backward image of the almost Poisson bracket $\pi_{1}$ (i.e., of the almost Dirac structure $L_{1} := \textup{graph}(\pi_{1}^{\sharp})$) under the inclusion map $\iota_{\mu} : J_{\nh}^{-1}(\mu) \rightarrow \mathcal{M}$, given by
\begin{equation}\label{Eq:Lmu1}
    L_{\mu}^{1} := {\iota_{\mu}^{\ast}}(L_{1}) = \left\{ (X,\iota_{\mu}^{\ast}\beta) \in \mathbb{T}(J_{\nh}^{-1}(\mu)) \,:\, \pi_1^\sharp(\beta) = T\iota_{\mu}(X)\right\}.
\end{equation}
Since the functions $J_i$ are $G$-invariant, from Theorem \ref{T:momentum-red}$(i)$ and Prop.~\ref{Prop:Jnh-MomMap:MBM}, $L_{\mu}^{1}$ is a $G$-invariant almost Dirac structure on $J_{\nh}^{-1}(\mu)$.  Moreover, recalling that the associated pair of $L_1$ is $(\mathcal{C}, \Omega_1)$ where $\Omega_1 = \Omega_\subM +B_1$, then the associated pair of $L_1^\mu$ is $(\mathcal{C}_\mu, \Omega_\mu^1)$ for $\mathcal{C}_\mu = TJ^{-1}_\nh(\mu) \cap \mathcal{C}$ and $\Omega_\mu^1 = \iota_\mu^* \Omega_1$.

Since $(T(J_{\nh}^{-1}(\mu)))^{\circ} = \textup{span}\{dJ_{1},\dots,dJ_{k}\}$, and $\mathcal{S} = \textup{span}\{(\xi_{1})_{\subM},\dots,(\xi_{k})_{\subM}\}$ we conclude that: 

\begin{lemma}\label{L:lemmaY}
    The null distribution $K_\mu$ of the almost Dirac structure $L_\mu^1$ is the vertical distribution $\mathcal{S}$ restricted to $J^{-1}_{\emph\nh}(\mu)$, that is, for $m \in J^{-1}_{\emph\nh}(\mu)$, $(K_\mu)_m = \mathcal{S}_m$.   
\end{lemma}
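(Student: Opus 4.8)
The plan is to work through the associated pair of $L_\mu^1$. As noted in the text, $L_\mu^1$ is the regular almost Dirac structure with pair $(\mathcal{C}_\mu,\Omega_\mu^1)$, where $\mathcal{C}_\mu=TJ_\nh^{-1}(\mu)\cap\mathcal{C}$ and $\Omega_\mu^1=\iota_\mu^*\Omega_1$ with $\Omega_1=\Omega_\subM+B_1$. By the pair description \eqref{Eq:L_(F,w)}, the null distribution is
$K_\mu=\{X\in\mathcal{C}_\mu:\ {\bf i}_X\Omega_\mu^1|_{\mathcal{C}_\mu}=0\}$; since $\Omega_\mu^1(X,Y)=\Omega_1(X,Y)$ for $X,Y$ tangent to $J_\nh^{-1}(\mu)$ and lying in $\mathcal{C}$, this identifies $K_\mu$ with the $\Omega_1|_\mathcal{C}$-radical of $\mathcal{C}_\mu$ inside $\mathcal{C}$. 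So the whole statement reduces to computing that radical.

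First I would check that $\mathcal{S}|_{J_\nh^{-1}(\mu)}\subseteq\mathcal{C}_\mu$: since the $J_i$ are $G$-invariant, every vertical vector field annihilates them, hence $\mathcal{V}|_{J_\nh^{-1}(\mu)}\subseteq TJ_\nh^{-1}(\mu)$, and together with $\mathcal{S}\subseteq\mathcal{V}\cap\mathcal{C}$ this gives $\mathcal{S}|_{J_\nh^{-1}(\mu)}\subseteq TJ_\nh^{-1}(\mu)\cap\mathcal{C}=\mathcal{C}_\mu$. Next, for the inclusion $\mathcal{S}|_{J_\nh^{-1}(\mu)}\subseteq K_\mu$, I would invoke Prop.~\ref{Prop:Jnh-MomMap:MBM}: since $J_\nh$ is a momentum bundle map for $\pi_1$ associated to $\mathfrak{B}_{\mbox{\tiny{HGS}}}$, one has ${\bf i}_{(\xi_i)_\subM}\Omega_1|_\mathcal{C}=dJ_i|_\mathcal{C}$, and restricting to $\mathcal{C}_\mu$ this vanishes because $J_i\circ\iota_\mu\equiv c_i$; thus each $(\xi_i)_\subM$ lies in the $\mathcal{C}_\mu$-radical. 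This is exactly the argument already appearing in the proof of Theorem~\ref{T:momentum-red}$(ii)$ (where it was recorded as $\mathcal{S}_\mu\subseteq K_{L_\mu}$).

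For the reverse inclusion $K_\mu\subseteq\mathcal{S}|_{J_\nh^{-1}(\mu)}$ the key inputs are that $\Omega_1|_\mathcal{C}$ is nondegenerate (because $B_1$ is semi-basic with respect to $\tau_\subM:\M\to Q$) and that $\mathcal{C}_\mu$ has codimension exactly $k$ in $\mathcal{C}$. The latter follows from the dimension assumption $T\M=\mathcal{C}+\mathcal{V}$ combined with $\mathcal{V}|_{J_\nh^{-1}(\mu)}\subseteq TJ_\nh^{-1}(\mu)$, which gives $\mathcal{C}+TJ_\nh^{-1}(\mu)=T\M$, so the standard index count yields $\dim\mathcal{C}-\dim\mathcal{C}_\mu=\dim T\M-\dim TJ_\nh^{-1}(\mu)=k$. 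Nondegeneracy of $\Omega_1|_\mathcal{C}$ then forces the $\Omega_1$-orthogonal of $\mathcal{C}_\mu$ inside $\mathcal{C}$ to have dimension $k$; by the momentum-map relation above the $k$ linearly independent vector fields $(\xi_1)_\subM,\dots,(\xi_k)_\subM$ already sit inside it, so that orthogonal coincides with $\mathcal{S}|_{J_\nh^{-1}(\mu)}$. Intersecting with $\mathcal{C}_\mu$, which contains $\mathcal{S}|_{J_\nh^{-1}(\mu)}$ by the first step, gives $K_\mu=\mathcal{S}|_{J_\nh^{-1}(\mu)}$.

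The computations here are routine; the one step I would flag as the place to be careful — and the only real content beyond bookkeeping — is the rank count showing that $\mathcal{C}_\mu$ drops by \emph{exactly} $k$ relative to $\mathcal{C}$, since this is what pins the dimension of the radical to $k$ and prevents $K_\mu$ from being strictly larger than $\mathcal{S}$; it genuinely uses the dimension assumption together with the fact that the level set $J_\nh^{-1}(\mu)$ contains the full vertical distribution.
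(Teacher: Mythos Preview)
Your argument is correct, but it takes a different route from the paper's. The paper works directly from the bivector-field description of the backward image: $X\in K_\mu$ means $(X,0)\in L_\mu^1$, which unwinds to the existence of $\beta\in T^*\M$ with $\pi_1^\sharp(\beta)=T\iota_\mu(X)$ and $\iota_\mu^*\beta=0$; the latter says $\beta\in (TJ_\nh^{-1}(\mu))^\circ=\textup{span}\{dJ_1,\dots,dJ_k\}$, and then Prop.~\ref{Prop:Jnh-MomMap:MBM} gives $\pi_1^\sharp(\textup{span}\{dJ_i\})=\mathcal{S}$ on the nose, so both inclusions fall out at once without any dimension count. Your approach instead works on the $2$-form side via the pair $(\mathcal{C}_\mu,\Omega_\mu^1)$, getting the easy inclusion from the momentum relation and closing the reverse one by the symplectic-linear-algebra fact that the $\Omega_1|_{\mathcal{C}}$-orthogonal of a codimension-$k$ subspace has dimension $k$. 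What the paper's route buys is brevity and a bijective identification in one stroke; what yours buys is that it isolates exactly where the dimension assumption enters (to force $\mathcal{C}_\mu$ to have codimension precisely $k$ in $\mathcal{C}$), which is a nice structural observation even if the paper's argument makes it unnecessary here.
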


From Lemma \ref{L:lemmaY} and the fact that $\textup{Ker}(\pi_1^\sharp)$ are the constraints 1-forms $\epsilon^a$, it is straightforward to see that $L_\mu^1$ is a genuinely almost Dirac structure in the sense that it is not the graph of a bivector field nor of a 2-form.  

Next, denoting by $\rho_\mu :J_\nh^{-1}(\mu) \to J_\nh^{-1}(\mu)/G$  the orbit projection and following Theorem~\ref{T:momentum-red}$(ii)$ we conclude that
\begin{equation*}
    (L_\mu^1)_{\red} := ({\rho_{\mu}})_{\ast}L_{\mu}^{1} = \left\{(T\rho_{\mu}Y,\alpha)\in \mathbb{T}(J_{\nh}^{-1}(\mu)/G)\,: \, (Y,\rho_{\mu}^{\ast}\alpha)\in L_{\mu}^{1}\right\},
\end{equation*}
is an almost Dirac structure. Moreover: 

\begin{proposition}\label{Prop:L1red_simplectic}
    The structure $(L_\mu^1)_{\emph\red}$ is the graph of a symplectic 2-form $\omega_{\mu}^{1}$ on $J_{\emph\nh}^{-1}(\mu)/G$, that satisfies 
    \begin{equation}\label{Eq:PullBacks1}
        \rho_\mu^*\omega_\mu^1\, |_{\mathcal{C}_\mu} = \iota_\mu^*(\Omega_\subM+B_1) |_{\mathcal{C}_\mu}.
    \end{equation}
\end{proposition}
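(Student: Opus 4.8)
The plan is to apply the machinery of Section~\ref{Ss:MomMapDirac} --- in particular Theorem~\ref{T:momentum-red} and Corollary~\ref{C:MomReduction} --- to the almost Dirac structure $L_1=\textup{graph}(\pi_1^\sharp)$, whose associated pair is $(\mathcal{C},\Omega_\subM+B_1)$. The content of the proposition is really twofold: first, that the reduced almost Dirac structure $(L_\mu^1)_{\red}$ is the graph of a \emph{2-form} (rather than a genuine almost Dirac structure); second, that this 2-form is \emph{nondegenerate}; and along the way the formula \eqref{Eq:PullBacks1} should come essentially for free from Theorem~\ref{T:momentum-red}$(iii)$.

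\textbf{Step 1: $(L_\mu^1)_\red$ is the graph of a 2-form.} By Remark~\ref{R:Dirac:BasicExamples}, an almost Dirac structure is the graph of a 2-form precisely when its intersection with $\{0\}\oplus T^*$ is trivial, equivalently when the characteristic distribution is everything; dually, $(L_\mu^1)_\red$ is the graph of a 2-form iff $L_\mu^1$ restricted along $\rho_\mu$ does not ``lose'' a cotangent direction. The clean way: by the correspondence \eqref{Eq:L_(F,w)}, $L_\mu^1$ has associated pair $(\mathcal{C}_\mu,\Omega_\mu^1)$ with $\Omega_\mu^1=\iota_\mu^*(\Omega_\subM+B_1)$, and by Lemma~\ref{L:lemmaY} its null distribution is $K_\mu=\mathcal{S}|_{J_\nh^{-1}(\mu)}=\textup{Ker}(T\rho_\mu)\cap\mathcal{C}_\mu$. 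So the forward image along $\rho_\mu$ kills exactly the kernel directions of $\Omega_\mu^1$ that are also vertical; since $K_\mu$ coincides with the whole intersection $\mathcal{C}_\mu\cap\mathcal{V}_\mu$, the pushed-forward 2-section on $\mathcal{C}_\mu/K_\mu\cong T\rho_\mu(\mathcal{C}_\mu)$ is nondegenerate and $T\rho_\mu(\mathcal{C}_\mu)=T(J_\nh^{-1}(\mu)/G)$. The last equality is where the dimension assumption enters: on $J_\nh^{-1}(\mu)$ we have $\mathcal{C}_\mu+\mathcal{V}_\mu=TJ_\nh^{-1}(\mu)$ (this needs a short argument, since we intersected with $TJ_\nh^{-1}(\mu)$, but it follows because $\mathcal{S}\subset\mathcal{C}_\mu$ and $T\iota_\mu(TJ_\nh^{-1}(\mu))=\cap_i\textup{Ker}(dJ_i)$ is complementary in a controlled way to the $dJ_i$-directions), hence $T\rho_\mu(\mathcal{C}_\mu)=T\rho_\mu(TJ_\nh^{-1}(\mu))$. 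Therefore $K_{(L_\mu^1)_\red}=\{0\}$ and $(L_\mu^1)_\red=\textup{graph}((\omega_\mu^1)^\flat)$ for a 2-form $\omega_\mu^1$.

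\textbf{Step 2: nondegeneracy and the pullback formula.} Nondegeneracy of $\omega_\mu^1$ amounts to showing that $(L_\mu^1)_\red$ is simultaneously the graph of a bivector, i.e.\ that $\textup{Ker}((\omega_\mu^1)^\flat)=\{0\}$; equivalently $\mathcal{U}_{\rho_\mu}$ (Theorem~\ref{T:momentum-red}$(iii)$) is all of $T(J_\nh^{-1}(\mu)/G)$ after projection. Here I would invoke the local coordinate picture from the proof of Proposition~\ref{Prop:Jnh-MomMap:MBM}: from \eqref{Eq:Omega1-Coord}, $(\Omega_\subM+B_1)|_{\mathcal{C}}=(\mathcal{X}^\alpha\wedge dp_\alpha+\mathcal{Y}^i\wedge dp_i-p_\alpha d\mathcal{X}^\alpha)|_{\mathcal{C}}$; restricting to $J_\nh^{-1}(\mu)$ sets $dp_i=0$ (since $p_i=J_i=c_i$ there) and kills the $\mathcal{Y}_i$-directions in the pushforward, leaving $\omega_\mu^1$ expressed in the $(p_\alpha$, base$)$-directions as $\mathcal{X}^\alpha\wedge dp_\alpha - p_\alpha d\mathcal{X}^\alpha$ plus terms along the reduced base, which is manifestly nondegenerate on $T(J_\nh^{-1}(\mu)/G)$ --- this is precisely a canonical-type symplectic form and is the seed of Theorem~\ref{T:PB-mu-identification}. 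Since $\omega_\mu^1$ is nondegenerate, $K_{L_\mu^1}\subseteq K_{(L_\mu^1)_\red}$'s preimage condition is met, so Corollary~\ref{cr:pullbackforms} (or directly Theorem~\ref{T:momentum-red}$(iii)$ together with $\mathcal{U}_{\rho_\mu}=\mathcal{C}_\mu$) gives $\rho_\mu^*\omega_\mu^1|_{\mathcal{C}_\mu}=\iota_\mu^*(\Omega_\subM+B_1)|_{\mathcal{C}_\mu}$, which is \eqref{Eq:PullBacks1}. Finally, smoothness and the fact that $\omega_\mu^1$ is a genuine 2-form (not merely a 2-section) follow because $(L_\mu^1)_\red$ is an honest almost Dirac structure on the manifold $J_\nh^{-1}(\mu)/G$ with trivial null distribution.

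\textbf{Main obstacle.} The delicate point is Step~1's identification $T\rho_\mu(\mathcal{C}_\mu)=T(J_\nh^{-1}(\mu)/G)$, i.e.\ verifying that after intersecting $\mathcal{C}$ with the level set one still has $\mathcal{C}_\mu+\mathcal{V}_\mu=TJ_\nh^{-1}(\mu)$ and that the null directions $\mathcal{S}$ are exactly the vertical part of $\mathcal{C}_\mu$ --- this is what makes the pushforward land on a nondegenerate form on the full reduced tangent space rather than on a proper subdistribution. Once the dimension assumption is correctly propagated to $J_\nh^{-1}(\mu)$ using that $\mathcal{S}\subset\mathcal{C}_\mu$ and $dJ_i|_{\mathcal{S}}$ are independent (from condition (A2) / functional independence), everything else is bookkeeping via the results already proved.
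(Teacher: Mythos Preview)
Your treatment of nondegeneracy and of the pullback formula \eqref{Eq:PullBacks1} is broadly correct and close to the paper's, though the paper's logic is tidier: it first invokes Corollary~\ref{C:MomReduction}$(i)$ (since $L_1$ is the graph of a bivector, so is $(L_\mu^1)_\red$) and then separately checks $(0,\alpha)\in(L_\mu^1)_\red\Rightarrow\alpha=0$ via Lemma~\ref{L:lemmaY}; the two together give nondegeneracy without any coordinate computation. Your equality $\mathcal{U}_{\rho_\mu}=\mathcal{C}_\mu$ is also asserted rather than argued --- the paper proves it by pairing $(X,\alpha)\in L_\mu^1$ against $(\mathcal{Y},0)$ for $\mathcal{Y}\in K_\mu=\mathcal{S}_\mu$.

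The genuine gap is that you never prove \emph{closedness} of $\omega_\mu^1$. The proposition claims $\omega_\mu^1$ is \emph{symplectic}, not merely almost symplectic, and this is the substantive content: it is exactly what later makes $\pi_\red^1$ Poisson while $\pi_\red^\subB$ is only twisted Poisson (Remark~\ref{R:twisted}). Your remark that the local expression ``is precisely a canonical-type symplectic form'' gestures toward Theorem~\ref{T:P1-mu-identification}, but that diffeomorphism has not yet been constructed and cannot be invoked here. The paper proves $d\omega_\mu^1=0$ directly: from \eqref{Eq:Omega1-Coord} and $dp_i|_{J_\nh^{-1}(\mu)}=0$ one gets $\iota_\mu^*(\Omega_\subM+B_1)|_{\mathcal{C}_\mu}=\theta|_{\mathcal{C}_\mu}$ for the \emph{closed} 2-form $\theta=d(\iota_\mu^*(p_\alpha\mathcal{X}^\alpha))$ on $J_\nh^{-1}(\mu)$. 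The delicate point is that $\omega_\mu^1$ is only characterized by its pullback restricted to $\mathcal{C}_\mu$, so when you expand $d\omega_\mu^1(X_1,X_2,X_3)$ via the cyclic formula and lift the $X_A$ to $\tilde{X}_A\in\Gamma(\mathcal{C}_\mu)$, the brackets $[\tilde{X}_A,\tilde{X}_B]$ acquire $\mathcal{W}_\mu$-components. One needs ${\bf i}_Z\theta=0$ for $Z\in\Gamma(\mathcal{W}_\mu)$ --- which holds because $d\mathcal{X}^\alpha|_{\mathcal{C}_\mu}$ has no $\mathcal{W}$-leg, by $G$-invariance of the $X_\alpha$ --- to reassemble the expression into $d\theta=0$. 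Without this argument you have established only that $\omega_\mu^1$ is an almost symplectic form.
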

\begin{proof}
    First, we prove that the almost Dirac structure $(L_\mu^1)_{\red}$ on $J_\nh^{-1}(\mu)/G$ is associated to a 2-form. From Corollary~\ref{C:MomReduction}, we know that the structure $(L_\mu^1)_{\red}$ is the graph of a bivector field. Now, if $(0,\alpha)\in (L_\mu^1)_\red$, then there exists  $\mathcal{Y}\in \mathcal{V}_\mu\cap \mathcal{C}_\mu$ such that $(\mathcal{Y},\rho_\mu^*(\alpha))\in L_\mu^1$ which, by Lemma \ref{L:lemmaY}, means that $\mathcal{Y}\in K_\mu$. Therefore $\rho_{\mu}^{\ast}\alpha = 0$, which implies $\alpha = 0$. Let us denote by $\omega_\mu^{1}$ the (non-degenerate) 2-form associated to the almost Dirac structure $(L_\mu^1)_{\red}$.  

    Using  Prop.~\ref{Prop:DiracFormsRltm}, we have $\rho_\mu^* \omega_\mu^{1}|_{\mathcal{U}_{\rho_\mu}} = \Omega_\mu^1 |_{\mathcal{U}_{\rho_\mu}}$, where $(\mathcal{C}_\mu, \Omega_\mu^1)$ is the pair of $L_{\mu}^{1}$ and  $\mathcal{U}_{\rho_\mu} := \{X\in \mathcal{C}_\mu \,:\, (X,\alpha)\in L_{\mu}^{1} \mbox{ and } \alpha|_{\mathcal{S}_\mu} = 0 \}$.  From   Lemma~\ref{L:lemmaY} we conclude that $\mathcal{U}_{\rho_\mu} = \mathcal{C}_\mu$ since, if $X\in \mathcal{C}_\mu $ but not in $\mathcal{U}_{\rho_\mu}$,  means that there exists $\mathcal{Y}\in \mathcal{S}_\mu$ such that the corresponding 1-form $\alpha$  of $X$ satisfies that $\alpha(\mathcal{Y}) \neq 0.$  Then, following the definition of a (almost) Dirac structure we get that $\langle (X, \alpha), (\mathcal{Y}, 0)\rangle = \alpha(\mathcal{Y}) \neq 0$ which is a contradiction. 
    Then, we proved \eqref{Eq:PullBacks1}. In particular, $T\rho_{\mu}(\mathcal{C}_{\mu}) = T(J_{\nh}^{-1}(\mu)/G)$.

    Next, we prove that $d\omega_\mu^{1}=0$.  For each $A = 1,2,3$, let $X_A\in \mathfrak{X}(J_{\nh}^{-1}(\mu)/G)$ and $\tilde{X}_A\in \Gamma( \mathcal{C}_\mu)$ such that $T\rho_{\mu}(\tilde{X}_A) = X_A$.  Since $TJ_{\nh}^{-1}(\mu) = \mathcal{C}_\mu\oplus \mathcal{W}_\mu$ for $\mathcal{W}_\mu = TJ^{-1}_\nh(\mu) \cap \mathcal{W}$, then
    \begin{equation*}
        \begin{split}
            d\omega_{\mu}^{1} (X_{1},X_{2},X_{3}) & = \textup{cyclic}\left[X_1(\omega_{\mu}^{1}(X_2,X_3)) - \omega_{\mu}^{1}([X_1,X_2],X_3)\right] \\ & = \textup{cyclic}\left[\tilde{X}_1(\Omega_{\mu}^{1}(\tilde{X}_2,\tilde{X}_3))- \rho_{\mu}^{\ast}\omega_{\mu}^{1}([\tilde{X}_1,\tilde{X}_2],\tilde{X}_3)\right] \\ & =  \textup{cyclic}\left[\tilde{X}_1(\Omega_{\mu}^{1}(\tilde{X}_2,\tilde{X}_3))- \Omega_{\mu}^{1}(P_{\mathcal{C}_\mu}([ \tilde{X}_1,\tilde{X}_2]),\tilde{X}_3)  \right],
        \end{split} 
    \end{equation*}
    where we use that $\rho_\mu^*\omega_{\mu}^{1}(P_{\mbox{\tiny{$\mathcal{W}_\mu$}}}([\tilde{X}_A,\tilde{X}_B]),\tilde{X}_C) = 0$ because $\mathcal{W} \subset \mathcal{V}$. 
    Now, from \eqref{Eq:Omega1-Coord} and using that $dp_i=0$ on $J^{-1}_\nh(\mu)$, we have that 
    $$
        (\Omega_{\subM}+B_1)|_{\mathcal{C}_\mu} = \iota_\mu^*\left(\mathcal{X}^{\alpha}\wedge dp_{\alpha}-p_{\alpha}d\mathcal{X}^{\alpha}\right)|_{\mathcal{C}_\mu} = \iota_\mu^*(d(p_\alpha\mathcal{X}^\alpha))|_{\mathcal{C}_\mu}. 
    $$
    Let us denote by $\theta$ the closed 2-form on $J^{-1}_\nh(\mu)$ given by $d(\iota_\mu^*(p_\alpha\mathcal{X}^\alpha))$. Observe that ${\bf i}_Z\theta = 0$ for $Z\in \Gamma(\mathcal{W}_\mu)$ since $d\tilde{\mathcal{X}}^\alpha|_{\mathcal{C}_{\mu}} = -\frac{1}{2}C_{\beta\gamma}^\alpha \tilde{\mathcal{X}}^\beta\wedge \tilde{\mathcal{X}}^\gamma |_{\mathcal{C}_{\mu}}$.   Finally, using that $\theta|_{\mathcal{C}_\mu} = \Omega_\mu^1|_{\mathcal{C}_\mu}$, we obtain that 
    \begin{equation*}
        \begin{split}
            d\omega_{\mu}^{1}(X_{1},X_{2},X_{3})
            & = \textup{cyclic}\left[\tilde{X}_1(\theta(\tilde{X}_2,\tilde{X}_3))- \theta(P_{\mathcal{C}_\mu}([ \tilde{X}_1,\tilde{X}_2]),\tilde{X}_3) \right] = \\
            & = \textup{cyclic}\left[\tilde{X}_1(\theta(\tilde{X}_2,\tilde{X}_3))- \theta([ \tilde{X}_1,\tilde{X}_2],\tilde{X}_3) \right] = d\theta(X_{1},X_{2},X_{3}) = 0.
        \end{split} 
    \end{equation*}

\end{proof}

\begin{remark} 
    From the last proof, we obtain also that  $\mathcal{U}_{\rho_\mu} = \mathcal{C}_\mu$ due to the particular definition of the nonholonomic bracket. But in general, this is not true.
\end{remark}

From Prop.~\ref{Prop:L1red_simplectic}, we conclude that the forward image of the almost Dirac structure $L_\mu^1$ is not only an almost Poisson structure as Corollary \ref{C:MomReduction} suggest, but it has a much stronger property:  it is a symplectic 2-form.  

Moreover, recall that the bivector field $\pi_1$  on $\M$ descends to a bivector field $\pi_\red^1$ on $\M/G$ as in \eqref{Eq:NHRedBrackets}. 

\begin{proposition}\label{Prop:SymplFoliation}
    The connected components of the symplectic manifolds $(J_{\emph\nh}^{-1}(\mu)/G,\omega_\mu^{1})$, for $\mu\in \textup{span}_{\mathbb{R}}\mathfrak{B}^*_{\mbox{\tiny{\emph{HGS}}}}$,  are the leaves of the foliation associated to $\pi_{\emph\red}^1$ and hence $\pi_{\emph\red}^1$ is Poisson.  
\end{proposition}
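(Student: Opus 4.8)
The plan is to show that the almost Dirac structures $(L_\mu^1)_{\red}$ — which by Proposition~\ref{Prop:L1red_simplectic} are graphs of the symplectic forms $\omega_\mu^1$ — are precisely the restrictions of $L_\red^1 = \textup{graph}((\pi_\red^1)^\sharp)$ to the leaves of the foliation of $\M/G$ by the common level sets of the reduced horizontal gauge momenta $\bar J_i$, established in Proposition~\ref{Prop:Jnh-MomMap:RED}. First I would fix $\mu = c_i\mu^i$ and let $\mathcal{O}_\mu := J_\nh^{-1}(\mu)/G = \bigcap_i \bar J_i^{-1}(c_i) \subset \M/G$, which is a submanifold since the $\bar J_i$ are functionally independent (as the $J_i$ are); by Proposition~\ref{Prop:Jnh-MomMap:RED} these submanifolds are exactly the integral leaves of the characteristic distribution of $\pi_\red^1$. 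The key identification is then that the natural diffeomorphism identifying $\mathcal{O}_\mu$ with $J_\nh^{-1}(\mu)/G$ (both are quotients by $G$ of the same submanifold $J_\nh^{-1}(\mu)\subset\M$, and $\rho$ restricts to $\rho_\mu$) intertwines $\omega_\mu^1$ with the leafwise symplectic form that $\pi_\red^1$ induces on $\mathcal{O}_\mu$.

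Concretely, I would verify this at the level of tangent vectors. A tangent vector to $\mathcal{O}_\mu$ at a point $\rho(m)$ lifts, via Proposition~\ref{Prop:L1red_simplectic} (specifically $T\rho_\mu(\mathcal{C}_\mu) = T\mathcal{O}_\mu$), to a vector in $\mathcal{C}_\mu = TJ_\nh^{-1}(\mu)\cap\mathcal{C}$; and the characteristic distribution of $\pi_\red^1$ at $\rho(m)$ equals $T\rho(\mathcal{C}_m)\cap T\mathcal{O}_\mu$, which by a dimension count (using $\dim \mathcal{S}_\mu = k$ and $K_\mu = \mathcal{S}_\mu$ from Lemma~\ref{L:lemmaY}) is exactly $T_{\rho(m)}\mathcal{O}_\mu$. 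It remains to compare the two symplectic forms: the leafwise form of $\pi_\red^1$ is, by the general theory of bivector fields with integrable characteristic distribution (Remark~\ref{R:Dirac:BasicExamples} and the correspondence in \eqref{Eq:L_(F,w)}), characterized by $(X,\alpha)\in L_\red^1|_{\mathcal{O}_\mu}$ iff $\mathbf{i}_X\omega_{\mathcal{O}_\mu} = -\alpha|_{T\mathcal{O}_\mu}$, while $\omega_\mu^1$ is characterized the same way through $(L_\mu^1)_{\red}$; since $(L_\mu^1)_{\red} = (\rho_\mu)_*(\iota_\mu^* L_1)$ and restricting $L_\red^1 = \rho_* L_1$ to $\mathcal{O}_\mu$ amounts to the same backward-then-forward construction (the inclusion $\mathcal{O}_\mu\hookrightarrow\M/G$ pulled back composed with $\rho$ equals $\rho_\mu$ composed with $\iota_\mu$ at the level of the relevant submanifolds), these two structures coincide. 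Thus $\omega_\mu^1$ is the symplectic form the Poisson-type bivector $\pi_\red^1$ induces on its leaf.

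Having established that the $(J_\nh^{-1}(\mu)/G,\omega_\mu^1)$ are the symplectic leaves of $\pi_\red^1$ and that, by Proposition~\ref{Prop:Jsubmfld} applied on $\M$ and pushed to $\M/G$, their connected components foliate $\M/G$, the final claim that $\pi_\red^1$ is Poisson follows: a bivector field whose characteristic distribution is integrable and whose induced leafwise $2$-forms are closed is automatically Poisson. The closedness of each $\omega_\mu^1$ is exactly Proposition~\ref{Prop:L1red_simplectic}. Equivalently, since $[\pi_\red^1,\pi_\red^1]$ is a tensor that restricts leafwise to the obstruction to closedness of the leafwise symplectic forms, vanishing of the latter on every leaf forces $[\pi_\red^1,\pi_\red^1]=0$ (see Remark~\ref{R:Bracket-Bivector}).

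I expect the main obstacle to be the bookkeeping in the second paragraph: carefully checking that ``restrict $\pi_\red^1$ to its leaf $\mathcal{O}_\mu$'' and ``reduce $\iota_\mu^*L_1$ by $\rho_\mu$'' genuinely produce the same object on the nose, rather than merely isomorphic ones. This requires matching the backward image along $\iota_\mu:J_\nh^{-1}(\mu)\to\M$ with the backward image along the leaf inclusion $\mathcal{O}_\mu\hookrightarrow\M/G$, using that $\rho$ restricts to a submersion $J_\nh^{-1}(\mu)\to\mathcal{O}_\mu$ and that backward and forward images of Dirac structures commute appropriately in this fibered situation; the clean-intersection hypotheses needed are already secured by Lemma~\ref{L:lemmaY} ($K_\mu = \mathcal{S}_\mu$ has constant rank) and condition (A1). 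Once that compatibility is in hand, everything else is a direct consequence of results already proven.
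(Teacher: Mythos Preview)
Your proposal is correct and reaches the same conclusion, but the route differs from the paper's in execution. The paper's proof is a direct diagram chase: take an arbitrary $1$-form $\alpha$ on $\M/G$, set $\bar X = (\pi_\red^1)^\sharp(\alpha)$ and $X = \pi_1^\sharp(\rho^*\alpha)$, observe (using $\pi_1^\sharp(dJ_i) = -(\xi_i)_\subM$) that $X$ is tangent to $J_\nh^{-1}(\mu)$, then use the key identity \eqref{Eq:PullBacks1} to pass from $\mathbf{i}_X\Omega_1|_\C = \rho^*\alpha|_\C$ to $\mathbf{i}_{\bar X_\mu}\omega_\mu^1 = (\iota_\red^\mu)^*\alpha$. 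This is exactly the statement that $\omega_\mu^1$ is the leafwise form induced by $\pi_\red^1$, obtained in one pass without invoking any general compatibility principle for Dirac structures.

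Your argument instead packages this as an abstract commutativity of backward/forward images around the square $\rho\circ\iota_\mu = \iota_\red^\mu\circ\rho_\mu$, namely $(\iota_\red^\mu)^*\rho_*L_1 = (\rho_\mu)_*\iota_\mu^*L_1$. That identity is true here, but it is not a general fact about Dirac structures and requires exactly the verification you flag as the ``main obstacle''; when you unwind it you will end up doing essentially the computation the paper does. So your approach buys conceptual clarity --- it makes transparent \emph{why} the leafwise form must be $\omega_\mu^1$ --- at the cost of an extra layer of abstraction that must still be discharged concretely. The paper's approach is more economical: it sidesteps the general Dirac commutativity question by working directly with the defining relations of the bivector and the $2$-form, leaning only on the already-established identity \eqref{Eq:PullBacks1} and the fact $T\rho_\mu(\mathcal{C}_\mu) = T(J_\nh^{-1}(\mu)/G)$.
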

\begin{proof}
    Let $\alpha$ be a 1-form on $\M/G$ and let $\bar{X} := (\pi_\red^1)^\sharp(\alpha)$ and ${X} := \pi_1^\sharp(\rho^*\alpha)$. It is straightforward to see that $X$ and $\bar{X}$ are tangent to the leaves $J_\nh^{-1}(\mu)$ and $J_\nh^{-1}(\mu)/G$ respectively, using that $\pi_1^\sharp(dJ_i) = -(\xi_i)_\subM$. Since ${\bf i}_X \Omega_1|_\C = \rho^*\alpha|_\C$ then ${\bf i}_{X_\mu} \iota_\mu^*\Omega_1|_{\C_\mu} = \iota_\mu^* \rho^*\alpha|_{\C_\mu}$, where $X_\mu$ is a vector field on $J_\nh^{-1}(\mu)$ such that $T\iota_\mu(X_\mu) = X$.  Using \eqref{Eq:PullBacks1}, ${\bf i}_{X_\mu} \rho_\mu^*\omega_\mu^1|_{\C_\mu} = \iota_\mu^* \rho^*\alpha|_{\C_\mu} = \rho_\mu^* (\iota_\red^\mu)^* \alpha|_{\C_\mu}$, where $\iota_\red^\mu:J_{\nh}^{-1}(\mu)/G \to \M/G$ is the natural inclusion.   
    Finally, using that $T\rho_\mu (\C_\mu) = T(J_{\nh}^{-1}(\mu)/G)$ we obtain that ${\bf i}_{\bar{X}_\mu} \omega_\mu^1 = (\iota_\red^\mu)^*\alpha$, where $\bar{X}_\mu$ is the vector field on $J^{-1}_\nh(\mu)/G$ such that $T\iota_\red^\mu(\bar{X}_\mu) = \bar{X}$.
\end{proof}

\begin{remark}\label{R:Poisson} 
    The conclusion that the reduced bivector field $\pi_\red^1$ is Poisson (obtained by Props.~\ref{Prop:L1red_simplectic} and \ref{Prop:SymplFoliation}) recovers one of the main results in \cite{Balseiro2021}. It is worth noticing that the proof presented here is different from the one in \cite{Balseiro2021}, since here we do not use the formulas characterizing the failure of the Jacobi identity of the bracket but instead we prove it through the nonholonomic momentum bundle map reduction. 
\end{remark}

In order to finish the section, we present following diagram illustrating the ideas (c.f., Diag. \eqref{Diag:DiracMomRed})
\begin{equation}\label{Diag:RedPi1}
    \xymatrix{ (J^{-1}_\nh(\mu), L_\mu^1) \ar[r]^{\ \ \ \ \ \ \ \iota_\mu} \ar[d]^{\rho_\mu} & (\M,\pi_1) \ar[d]^{\rho} \\
(J^{-1}_\nh(\mu)/G, \omega^1_\mu)\ar[r]^{\ \ \iota_\red^\mu} & (\M/G, \pi_\red^1) }
\end{equation}

\subsubsection*{The nonholonomic momentum map reduction of $\pi_\subB$} \label{Sss:MMReductionPiB}

This section is built on the momentum map reduction of $\pi_1$, using the fact that $\pi_1$ and $\pi_\subB$ are gauge related by the 2-form $\mathcal{B}$ defined in \eqref{Eq:Bcal}, see Def.~\ref{Def:PiPB}.  Recall, from Remark~\ref{R:B1B}$(ii)$, that $\mathcal{B}$ is basic with respect to the principal bundle $\rho:\M\to \M/G$.    

The almost Poisson manifold $(\M,\pi_\subB)$ does not only admit a nonholonomic momentum map reduction, but also it describes the nonholonomic dynamics.  Therefore, the conclusion of this section  enlightens the study of the nonholonomic systems.   

From Prop.~\ref{Prop:Jnh-MomMap:MBM}, the nonholonomic momentum bundle map is a momentum map for $\pi_\subB$ associated to the basis $\mathfrak{B}_{\mbox{\tiny{HGS}}}$. 
Fixing $\mu = c_1\mu^{1}+\cdots+c_k\mu^k$, the submanifold $J^{-1}_\nh(\mu)$ and the bivector field $\pi_\subB$ are $G$-invariant, and then the backward image of $\pi_\subB$  by the inclusion map $\iota_{\mu} : J_{\nh}^{-1}(\mu) \rightarrow \mathcal{M}$ is the almost Dirac structure on $J_{\nh}^{-1}(\mu)$ given by 
\begin{equation*}
    L_{\mu}^{\subB} := {\iota_{\mu}^{\ast}}(L_{\subB}) = \left\{ (X,\iota_{\mu}^{\ast}\beta) \in \mathbb{T}(J_{\nh}^{-1}(\mu)) \,:\, \pi_\subB^\sharp(\beta)=  T\iota_{\mu}(X) \right\}, 
\end{equation*}
where $L_\subB := \textup{graph}(\pi_\subB^{\sharp})$, c.f., \eqref{Eq:Lmu1}.
The associated pair of $L_\mu^\subB$ is $(\mathcal{C}_\mu, \Omega_\mu^\subB)$ but, in this case, $\Omega_\mu^\subB|_{\mathcal{C}_\mu} = \iota_\mu^*(\Omega_{\subM} + B)|_{\mathcal{C}_\mu}$.  From here it is straightforward to see that $L_\mu^1$ and $L_\mu^\subB$ are gauge related by the 2-form ${\mathcal{B}}_{\mu} := \iota_{\mu}^{\ast}{\mathcal{B}}$ (see Remark~\ref{R:GaugeConsiderations}) and that $L_\mu^\subB$ is a genuine almost Dirac structure not given by a 2-form nor a bivector field.

\begin{remark}
    Since $\pi_\subB^\sharp(dH_\subM) = -X_\nh$ and due to the {\it conserved quantity assumption},  $X_\nh$ is tangent to the submanifold $J_\nh^{-1}(\mu)$. Therefore $L_\mu^\subB$ describes the dynamics: $(-X_{\mu}, dH_{\mu})\in L_{\mu}^{\subB}$, where $H_{\mu} := \iota_{\mu}^{\ast}H_{\subM}$ and $X_\mu = X|_{J^{-1}_\nh(\mu)}$.
\end{remark}

By Theorem~\ref{T:momentum-red}$(ii)$, the forward image of $L_\mu^\subB$ by the projection map $\rho_{\mu} : J_{\nh}^{-1}(\mu) \rightarrow J_{\nh}^{-1}(\mu)/G$ defines now the almost Dirac structure $(L^\subB_\mu)_{\red} := ({\rho_{\mu}})_{\ast}(L_{\mu}^{\subB})$ on $J_\nh^{-1}(\mu)/G$.   

Based in the gauge relation of $\pi_1$ and $\pi_\subB$, and by Prop.~\ref{Prop:Jnh-MomMap:MBM} and Theorem~\ref{Prop:L1red_simplectic}, we conclude:
\begin{theorem}\label{T:NH_momentum_red}
    Consider a nonholonomic system $(\mathcal{M},\pi_{\emph\nh},H_{\subM})$ with a $G$-symmetry satisfying the dimension assumption and the conserved quantity assumption. Let us denote by $\mathfrak{B}_{\mbox{\tiny{\emph{HGS}}}} =\{\xi_1, \dots, \xi_k\}$ the $Ad$-invariant basis of sections of horizontal gauge symmetries and $\mathfrak{B}^*_{\mbox{\tiny{\emph{HGS}}}} =\{\mu^1, \dots, \mu^k\}$ its dual $Ad^*$-invariant basis. For $\mu=c_i\mu^i$, where $c_i$ ($i=1,\dots,k$) are constants in $\mathbb{R}$, then the manifold $J_{\emph\nh}^{-1}(\mu) := \left\{m\in\mathcal{M}\,:\, \,J_{\emph\nh}(m) = \mu(\tau_{\subM}(m))\right\}$ is $G$-invariant and 
    \begin{enumerate}
        \item[$(i)$] (Level set restriction) The nonholonomic momentum bundle map $J_{\emph\nh} : \mathcal{M} \rightarrow \mathfrak{g}_{S}^{\ast}$ is a $Ad^*$-equivariant momentum map of $\pi_\subB$ associated to the basis $\mathfrak{B}_{\mbox{\tiny{\emph{HGS}}}}$ and the backward image of $\pi_\subB$ to $J_{\emph\nh}^{-1}(\mu)$ is the (regular) almost Dirac structure $L_\mu^\subB$.
        
        \item[$(ii)$] (Nonholonomic momentum reduction)  The forward image of $L_\mu^\subB$ by $\rho_\mu: J^{-1}_{\emph\nh}(\mu) \to J^{-1}_{\emph\nh}(\mu)/G$ defines an almost Dirac structure $(L^\subB_\mu)_{\emph\red}$ which is the graph of the nondegenerate 2-form 
        $$
            \omega_{\mu}^\subB = \omega_\mu^{1} +\bar{\mathcal{B}}_\mu,
        $$ 
        where $\bar{\mathcal{B}}_\mu$ is the 2-form on $J^{-1}_{\emph\nh}(\mu)/G$ such that $\rho_{\mu}^{\ast}\bar{\mathcal{B}}_{\mu} = {\mathcal{B}}_{\mu} := \iota_\mu^*{\mathcal B}$. Moreover, 
        $$\rho_\mu^*\omega_\mu^\subB |_{\mathcal{C}_\mu} = \iota_\mu^*(\Omega_\subM+B) |_{\mathcal{C}_\mu}.$$
        
        \item[$(iii)$] (The characteristic foliation of $\pi_{\emph\red}^{\subB}$) The connected components of the almost symplectic manifolds $(J_{\emph\nh}^{-1}(\mu)/G, \omega_{\mu}^{\subB})$ define the foliation associated to the bivector field $\pi_{\emph\red}^{\subB}$ on $\mathcal{M}/G$ (showing that $\pi_{\emph\red}^\subB$ admits an almost symplectic foliation).
        
        \item[$(iv)$]  (Dynamics)  The reduced nonholonomic dynamics $X_{\emph\red}$ on $\M/G$ is tangent to the submanifolds $J_{\emph\nh}^{-1}(\mu)/G$ and $X_{\emph\red}^\mu:=X_{\emph\red}|_{J_{\emph\nh}^{-1}(\mu)/G}$ is the Hamiltonian vector field on the almost symplectic manifold $(J_{\emph\nh}^{-1}(\mu)/G, \omega_\mu^\subB)$ associated to the reduced restricted Hamiltonian $H_{\emph\red}^{\mu}$ to $J_{\emph\nh}^{-1}(\mu)/G$, i.e., the nonholonomic dynamics restricts to the reduced level sets and it is given by ${\bf i}_{X_{\emph\red}^\mu} \omega_\mu^\subB = dH_{\emph\red}^\mu$.  
    \end{enumerate}
\end{theorem}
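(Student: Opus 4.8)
The plan is to build Theorem~\ref{T:NH_momentum_red} on top of the already-established momentum reduction of $\pi_1$ (Proposition~\ref{Prop:L1red_simplectic}, Proposition~\ref{Prop:SymplFoliation}) together with the strategy announced after Proposition~\ref{Prop:Jnh-MomMap:RED}: prove things for $\pi_1$, then transport them via the gauge relation by $\mathcal{B}$. The $G$-invariance of $J_\nh^{-1}(\mu)$ is Proposition~\ref{Prop:Jsubmfld} combined with Lemma~\ref{L:MomMap-Ad} and the $G$-invariance of the horizontal gauge momenta. Item~$(i)$ is essentially immediate: Proposition~\ref{Prop:Jnh-MomMap:MBM} already states that $J_\nh$ is a momentum bundle map for $\pi_\subB$ associated to $\mathfrak{B}_{\mbox{\tiny{HGS}}}$, and $Ad^*$-equivariance follows from Lemma~\ref{L:MomMap-Ad} since the $J_i$ are $G$-invariant by the conserved quantity assumption; then Theorem~\ref{T:momentum-red}$(i)$ applies verbatim to give that $L_\mu^\subB := \iota_\mu^* L_\subB$ is a regular $G$-invariant almost Dirac structure on $J_\nh^{-1}(\mu)$, with associated pair $(\mathcal{C}_\mu, \Omega_\mu^\subB)$ where $\Omega_\mu^\subB|_{\mathcal{C}_\mu} = \iota_\mu^*(\Omega_\subM+B)|_{\mathcal{C}_\mu}$.

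For item~$(ii)$, I would first invoke Theorem~\ref{T:momentum-red}$(ii)$ to get that $(L_\mu^\subB)_\red := (\rho_\mu)_*L_\mu^\subB$ is an almost Dirac structure on $J_\nh^{-1}(\mu)/G$. The key point is that $L_\mu^1$ and $L_\mu^\subB$ are gauge related by the $2$-form $\mathcal{B}_\mu := \iota_\mu^*\mathcal{B}$ (Remark~\ref{R:GaugeConsiderations}$(ii)$, since gauge transformations restrict under pullback), and that $\mathcal{B}_\mu$ is basic with respect to $\rho_\mu$ — this descends from Remark~\ref{R:B1B}$(ii)$, because $\mathcal{B} = \rho^*\bar{\mathcal{B}}$ and $\iota_\mu$ intertwines $\rho_\mu$ with $\rho$, so $\mathcal{B}_\mu = \rho_\mu^*(\iota_\red^\mu)^*\bar{\mathcal{B}} =: \rho_\mu^*\bar{\mathcal{B}}_\mu$. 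Since forward image commutes with gauge transformation by a basic form (one checks directly from the definitions that $(\rho_\mu)_*$ of a structure gauge-related by $\rho_\mu^*\bar{\mathcal{B}}_\mu$ is the $(\rho_\mu)_*$-image gauge-related by $\bar{\mathcal{B}}_\mu$), and the forward image of $L_\mu^1$ is $\mathrm{graph}(\omega_\mu^1)$ by Proposition~\ref{Prop:L1red_simplectic}, we conclude $(L_\mu^\subB)_\red = \mathrm{graph}(\omega_\mu^1 + \bar{\mathcal{B}}_\mu)$. Nondegeneracy of $\omega_\mu^\subB := \omega_\mu^1 + \bar{\mathcal{B}}_\mu$ follows because $L_\mu^\subB$ has null distribution $\mathcal{S}_\mu$ (same argument as Lemma~\ref{L:lemmaY}, since $\mathrm{Ker}(\pi_\subB^\sharp)$ is again the constraint $1$-forms and $\pi_\subB^\sharp(dJ_i) = -(\xi_i)_\subM$), hence $\mathrm{Ker}(T\rho_\mu)\cap K_{L_\mu^\subB} = \mathcal{S}_\mu$ and the reduced form has trivial kernel by the argument in Proposition~\ref{Prop:L1red_simplectic}. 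The pullback identity $\rho_\mu^*\omega_\mu^\subB|_{\mathcal{C}_\mu} = \iota_\mu^*(\Omega_\subM+B)|_{\mathcal{C}_\mu}$ then follows by adding $\rho_\mu^*\bar{\mathcal{B}}_\mu = \iota_\mu^*\mathcal{B}$ to \eqref{Eq:PullBacks1}.

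Item~$(iii)$ repeats the argument of Proposition~\ref{Prop:SymplFoliation} with $\pi_\subB$ in place of $\pi_1$: a $1$-form $\alpha$ on $\M/G$ gives $X := \pi_\subB^\sharp(\rho^*\alpha)$ and $\bar X := (\pi_\red^\subB)^\sharp(\alpha)$, both tangent to $J_\nh^{-1}(\mu)$ and $J_\nh^{-1}(\mu)/G$ respectively because $\pi_\subB^\sharp(dJ_i) = -(\xi_i)_\subM \in \mathcal{V}$ (Proposition~\ref{Prop:Jnh-MomMap:MBM}); using ${\bf i}_X(\Omega_\subM+B)|_{\mathcal{C}} = \rho^*\alpha|_{\mathcal{C}}$, restricting to $\mathcal{C}_\mu$, applying the pullback formula from $(ii)$ and $T\rho_\mu(\mathcal{C}_\mu) = T(J_\nh^{-1}(\mu)/G)$, one gets ${\bf i}_{\bar X_\mu}\omega_\mu^\subB = (\iota_\red^\mu)^*\alpha$, so the leaves of $\pi_\red^\subB$ are exactly the $(J_\nh^{-1}(\mu)/G, \omega_\mu^\subB)$. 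Item~$(iv)$ is then the special case $\alpha = dH_\red$: since $X_\red = -(\pi_\red^\subB)^\sharp(dH_\red)$ by \eqref{Eq:NHB-dynamics} projected, it is tangent to $J_\nh^{-1}(\mu)/G$, and restricting the identity from $(iii)$ gives ${\bf i}_{X_\red^\mu}\omega_\mu^\subB = dH_\red^\mu$ with $H_\red^\mu := H_\red|_{J_\nh^{-1}(\mu)/G}$. The main obstacle I anticipate is the careful bookkeeping in $(ii)$ that forward image genuinely commutes with a \emph{basic} gauge transformation — one must verify the clean-intersection conditions are unaffected (they are, since gauge transformation by a semi-basic, hence $\rho_\mu$-basic, form does not change $\mathcal{C}_\mu$ nor $K_{L_\mu^\subB} = \mathcal{S}_\mu$) and that the identification $(L_\mu^\subB)_\red = \mathrm{graph}(\omega_\mu^1+\bar{\mathcal{B}}_\mu)$ holds on the nose rather than up to some correction; everything else is a transcription of the $\pi_1$ arguments.
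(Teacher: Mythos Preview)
Your proposal is correct and follows essentially the same approach as the paper: the theorem is stated there as a direct consequence of the gauge relation between $\pi_1$ and $\pi_\subB$ by $\mathcal{B}$, together with Proposition~\ref{Prop:Jnh-MomMap:MBM} and Proposition~\ref{Prop:L1red_simplectic}, and your write-up simply fills in the details (the commutation of forward image with a basic gauge transformation, the transcription of Proposition~\ref{Prop:SymplFoliation} to $\pi_\subB$, and the specialization to $\alpha = dH_\red$) that the paper leaves implicit.
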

The following diagram clarifies the results of Theorem~\ref{T:NH_momentum_red}. 
\begin{equation*}
    \xymatrix{ (J^{-1}_\nh(\mu), L_\mu^\subB, H_{\mu}) \ar[r]^{\ \ \ \ \ \ \ \iota_\mu} \ar[d]^{\rho_\mu} & (\M,\pi_\subB, H_{\subM}) \ar[d]^{\rho} \\
(J^{-1}_\nh(\mu)/G, \omega_{\mu}^\subB = \omega_\mu^1+ \bar{\mathcal B}_\mu,H_\red^{\mu})\ar[r]^{\ \ \ \ \ \ \ \ \iota_\red^\mu} & (\M/G, \pi_\red^\subB,H_\red) }
\end{equation*}

\begin{remark} \label{R:twisted}
    A bivector field $\pi$ on $P $ is twisted Poisson if there exists a closed 3-form defined on $P$ such that $\tfrac{1}{2}[\pi, \pi]= \pi^\sharp(\phi)$. Such bivector fields are characterized by the fact that they have an integrable characteristic distribution endowed with an almost symplectic foliation. 
    In our case, Theorem \ref{T:NH_momentum_red} $(ii)$ and $(iii)$ implies that the bracket $\pi_\red^\subB$ is {\it twisted Poisson} by the 3-form $-d\mathcal{B}$.  In the same line as in Remark~\ref{R:Poisson}, we enforce that Theorem \ref{T:NH_momentum_red} not only recovers the main result in \cite{Balseiro2021} (arriving to it using a different path) but also it enlightens the structure of the almost symplectic leaves of the reduced bracket $\pi_\red^\subB$.
\end{remark}

\begin{remark} 
    If we relax the dimension assumption, the conclusions are weaker (see e.g., \cite{RecCortes2002, LuisG2016}).  More precisely, if the dimension assumption is not satisfied, we can still consider the dynamical gauge transformation defined by $B$ in \eqref{Eq:B1} and \eqref{Eq:Bcal} and have that $\pi_\subB^\sharp(dJ_i)= -(\xi_i)_\subM$ for each $i=1,\dots,k$. However this condition is not sufficient to ensure that $\pi_\red^\subB$ (or $\pi_\red^1$) admit an integrable characteristic distribution.  Therefore, the reduction by a momentum map will not give the associated (almost) symplectic foliation.  Instead, we may conclude that our reduction induces, in the reduced space, an almost Poisson bracket (in concordance with Corollary \ref{C:MomReduction} $(i)$). 
\end{remark}

\section{The identification of the leaves with the canonical symplectic manifold} \label{S:Identification}

In this section we are going to identify each almost symplectic leaf $(J_\nh^{-1}(\mu)/G, \omega^\subB_\mu)$, defined in Theorem~\ref{T:NH_momentum_red}, with the canonical symplectic manifold plus the magnetic term $\hat{\mathcal{B}}_\mu$, 
obtaining a result that looks like the one in symplectic reduction \cite{MarMiPerRat:Book}. The main consequence of such identification is that the reduced bracket $\pi_\red^\subB$ is foliated by leaves of {\it Chaplygin-type}:  $(T^*(Q/G), \omega_{\mbox{\tiny{can}}} + \hat{\mathcal{B}}_\mu)$, showing that these nonholonomic systems behave as Chaplygin systems on each leaf. 

The core of this section resides in showing that the symplectic manifolds $(J^{-1}_\nh(\mu)/G, \omega_\mu^1)$ (obtained by the reduction of $(\M,\pi_1)$ in Prop.~\ref{Prop:L1red_simplectic}) and $(T^*(Q/G), \omega_{\mbox{\tiny{can}}})$ are symplectomorphic. This is done in two steps: first we study the zero level set of the nonholonomic momentum map, and afterwards --by means of the shift trick-- the $\mu$-level, always having in mind that $\mu$ is a section and not a constant element of the dual of the Lie algebra.   Finally, using our usual strategy (of gauge related bivectors), we conclude that $(J_\nh^{-1}(\mu)/G, \omega^\subB_\mu)$ and $(T^*(Q/G), \omega_{\mbox{\tiny{can}}} + \hat{\mathcal{B}}_\mu)$ are diffeomorphic.

\subsection{The leaves $(J^{-1}_{\nh}(\mu)/G, \omega_{\mu}^{1})$ of $\pi_{\red}^{1}$}

Let us consider a $G$-invariant nonholonomic system $(\M,\pi_{\nh},H_{\subM})$ satisfying the {\it dimension} and the {\it conserved quantity} assumptions.   Recall that $\M := \kappa^\flat(D)$, where $\kappa$ is the kinetic energy metric and $D$ is the constraint distribution.   

Consider also the 2-form $B_1$ defined in \eqref{Eq:B1} and the reduction Diagram~\eqref{Diag:RedPi1} for $(\M,\pi_1)$.  

\subsubsection*{The identification at the zero-level set of $\pi_\red^1$} \label{Ss:IdZeroLevel}

Consider the splitting of $TQ=H\oplus S \oplus W$ given in \eqref{Eq:splittingTQ}, but, where the distribution $H$ is defined by $H= \Hor = D \cap S^\perp$  for which the orthogonal complement is taken with respect to the kinetic energy metric. Now, consider the nonholonomic system given by the Lagrangian $L$ but with the constraint distribution $\Hor$. This new system is $G$-invariant and, moreover, it is a $G$-Chaplygin system since 
\begin{equation}\label{Eq:0-level-splitting}
    TQ = \Hor \oplus V.
\end{equation}
Let us define the submanifold $\M_0:=\kappa^\flat(\Hor)\subset T^*Q$ and observe that, according to \cite{Bates1993}, the reduced dynamics on $\M_0/G$ is described by a 2-form $\omega_0$ on $\M_0/G$ so that 
\begin{equation*}
    \rho_0^* \omega_0 |_{\mathcal{C}_{\mbox{\tiny{\Hor}}}} = \Omega_{\subM_0} |_{\mathcal{C}_{\mbox{\tiny{\Hor}}}},
\end{equation*}
where $\rho_0:\M_0\to \M_0/G$ is the orbit projection, $\Omega_{\subM_0} = \iota_0^*\Omega_\subQ$ for $\iota_0:\M_0\to T^*Q$ the inclusion map and  
\begin{equation}\label{Def:C_H}
    \mathcal{C}_{\mbox{\tiny{\Hor}}} := \{X\in T\mathcal{M}_{0}; T\tau_{\subM_0}X\in \Hor\}.
\end{equation}
Following \cite{Koiller92} there is a diffeomophism $\phi_0:\M_0/G\to T^*(Q/G)$ such that 
\begin{equation} \label{Eq:0-level-Diffeo}
    \phi_0^* \omega_{\mbox{\tiny{can}}}  = \omega_0 + B_{\mbox{\tiny{$\langle\! J\!, \!\mathcal{K}_0\!\rangle$}}} ,
\end{equation}
where $B_{\mbox{\tiny{$\langle\! J\!, \!\mathcal{K}_0\!\rangle$}}}$ is the 2-form on $\M_0/G$ such that $\rho_0^*B_{\mbox{\tiny{$\langle\! J\!, \!\mathcal{K}_0\!\rangle$}}} = \langle J, \mathcal{K}_0\rangle$ for  $J:\M_0\to \g^*$ the (pull back to $\M_0$ of) the canonical momentum map and $\mathcal{K}_0$  the $\g$-valued 2-form on $\M_0$ given by $\mathcal{K}_0 = \tau_{\subM_0}^*K_0$ with $K_0$ the curvature associated to the principal connection associated to the splitting \eqref{Eq:0-level-splitting}.

\begin{remark}
    For completeness we give the explicit definition of the diffeomorphism following \cite{MarMiPerRat:Book} with the appropriate modifications:  consider the map  $\widetilde\phi_{0} : \M_0 \rightarrow T^{\ast}(Q/G)$ defined, for each $m_{q}\in (\M_0)_{q}$ and $v_q \in D_q$, by
    \begin{equation*}
        \langle \widetilde\phi_{0}(m_{q}),T\rho(v_{q})\rangle := \langle m_{q},v_{q}\rangle.  
    \end{equation*}
    Since $\widetilde\phi_{0} : \M_0 \rightarrow T^{\ast}(Q/G)$ is $G$-invariant, it induces the diffeomorphism ${\phi}_{0} :\M_0/G \rightarrow T^{\ast}(Q/G)$ such that $\widetilde\phi_{0} = \phi_{0}\circ \rho_{0}$. 
\end{remark}

\begin{lemma}\label{Lemma:0-levelChap} For the nonholonomic momentum bundle map $J_{\emph\nh}:\M\to \g_S^*$, we have that
    \begin{enumerate}
        \item[$(i)$] $\M_0 = J_{\emph\nh}^{-1}(0)$;  
        
        \item[$(ii)$] $\mathcal{C}_{\emph{\mbox{\tiny{\Hor}}}} \subset TJ^{-1}_{\emph\nh}(0) \cap \C = : \C_0$; and
        
        \item[$(iii)$] $\rho_0^* B_{\mbox{\tiny{$\langle\! J\!, \!\mathcal{K}_0\!\rangle$}}}  \,|_{\mathcal{C}_{\emph{\mbox{\tiny{\Hor}}}}} = \iota_0^* B_1\,|_{\mathcal{C}_{\emph{\mbox{\tiny{\Hor}}}}}$.
    \end{enumerate}
\end{lemma}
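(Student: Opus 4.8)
The plan is to handle the three parts essentially independently; $(i)$ and $(ii)$ are formal unwindings of the definitions, and all the content sits in $(iii)$. For $(i)$, I would use the defining formula $\langle J_\nh(m),\xi(\tau_\subM(m))\rangle=\langle m,\xi_\subQ(\tau_\subM(m))\rangle$: a point $m\in\M$ over $q=\tau_\subM(m)$ lies in $J_\nh^{-1}(0)$ iff $\langle m,\xi_\subQ(q)\rangle=0$ for all $\xi\in\Gamma(\g_S)$, and since the generators $\xi_\subQ(q)$ exhaust $S_q$ (freeness makes $\g_S|_q\to S_q$, $\xi\mapsto\xi_\subQ(q)$, an isomorphism), this is the same as $\langle m,S_q\rangle=0$. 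Writing $m=\kappa^\flat(v)$ with $v\in D_q$ (recall $\M=\kappa^\flat(D)$), the condition becomes $\kappa(v,S_q)=0$, i.e.\ $v\in S_q^\perp$, hence $v\in D_q\cap S_q^\perp=\Hor_q$; thus $J_\nh^{-1}(0)=\kappa^\flat(\Hor)=\M_0$. For $(ii)$, granted $(i)$ we have $\M_0\subset\M$ a submanifold, so $T\M_0=TJ_\nh^{-1}(0)\subset T\M$ and $T\tau_\subM|_{T\M_0}=T\tau_{\subM_0}$; hence any $X\in\mathcal{C}_{\mbox{\tiny{\Hor}}}$ satisfies $X\in TJ_\nh^{-1}(0)$ and $T\tau_\subM(X)=T\tau_{\subM_0}(X)\in\Hor\subset D$, so $X\in\C$ and therefore $X\in TJ_\nh^{-1}(0)\cap\C=\C_0$.

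The substance is $(iii)$. Let $\mathcal{A}_0:TQ\to\g$ be the principal connection $1$-form of the splitting \eqref{Eq:0-level-splitting}, characterized by $P_{\mbox{\tiny{$V$}}}(v_q)=(\mathcal{A}_0(v_q))_\subQ(q)$ (projection onto $V$ along $\Hor$); recall $A_{\mbox{\tiny{$W$}}}$ with $P_{\mbox{\tiny{$W$}}}(v_q)=(A_{\mbox{\tiny{$W$}}}(v_q))_\subQ(q)$ (projection onto $W$ along $D$); and set $\mathcal{A}_S:=Y^i\otimes\xi_i$, so that $(\mathcal{A}_S(v_q))_\subQ(q)=P_{\mbox{\tiny{$S$}}}(v_q)$ (projection onto $S$ in \eqref{Eq:splittingTQ}). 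The first step is the connection identity
\[
\mathcal{A}_0=A_{\mbox{\tiny{$W$}}}+\mathcal{A}_S ,
\]
which follows from $P_{\mbox{\tiny{$V$}}}=P_{\mbox{\tiny{$S$}}}+P_{\mbox{\tiny{$W$}}}$ relative to $TQ=\Hor\oplus S\oplus W$ and freeness of the action. Differentiating and evaluating on $u,w\in\Hor_q$: in $d\mathcal{A}_S=d(Y^i\otimes\xi_i)$ every term carrying a factor $Y^i(u)$ or $Y^i(w)$ vanishes because $Y^i|_\Hor=0$, so only $\xi_i(q)\,dY^i(u,w)$ survives and $d\mathcal{A}_0(u,w)=dA_{\mbox{\tiny{$W$}}}(u,w)+\xi_i(q)\,dY^i(u,w)$.

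Pairing this with $J(m)$ — the canonical momentum map restricted to $\M$ — for $m\in(\M_0)_q$ and using $\langle J(m),\xi_i(q)\rangle=J_i(m)$, which is $0$ on $\M_0=J_\nh^{-1}(0)$, gives $\langle J(m),d\mathcal{A}_0(u,w)\rangle=\langle J(m),dA_{\mbox{\tiny{$W$}}}(u,w)\rangle$. Since $u,w\in\Hor\subset D$, one has $K_0(u,w)=d\mathcal{A}_0(u,w)$ and $K_{\mbox{\tiny{$W$}}}(u,w)=dA_{\mbox{\tiny{$W$}}}(P_{\mbox{\tiny{$D$}}}u,P_{\mbox{\tiny{$D$}}}w)=dA_{\mbox{\tiny{$W$}}}(u,w)$, hence $\langle J(m),K_0(u,w)\rangle=\langle J(m),K_{\mbox{\tiny{$W$}}}(u,w)\rangle$ for all $m\in(\M_0)_q$ and $u,w\in\Hor_q$. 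Pulling back by $\tau_{\subM_0}$ and restricting to $\mathcal{C}_{\mbox{\tiny{\Hor}}}$, whose vectors project into $\Hor$, this becomes $\langle J,\mathcal{K}_0\rangle|_{\mathcal{C}_{\mbox{\tiny{\Hor}}}}=\iota_0^*\langle J,\mathcal{K}_{\mbox{\tiny{${\mathcal W}$}}}\rangle|_{\mathcal{C}_{\mbox{\tiny{\Hor}}}}$. Finally $\iota_0^*B_1=\iota_0^*\langle J,\mathcal{K}_{\mbox{\tiny{${\mathcal W}$}}}\rangle+(\iota_0^*J_i)\,\iota_0^*(d^{\mathcal{C}}\mathcal{Y}^i)$ with $\iota_0^*J_i=0$ on $\M_0$, so using $\langle J,\mathcal{K}_0\rangle=\rho_0^*B_{\mbox{\tiny{$\langle\! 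J\!, \!\mathcal{K}_0\!\rangle$}}}$ we conclude $\iota_0^*B_1|_{\mathcal{C}_{\mbox{\tiny{\Hor}}}}=\rho_0^*B_{\mbox{\tiny{$\langle\! J\!, \!\mathcal{K}_0\!\rangle$}}}|_{\mathcal{C}_{\mbox{\tiny{\Hor}}}}$, which is $(iii)$.

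The one genuinely nontrivial point, and the expected main obstacle, is the connection identity $\mathcal{A}_0=A_{\mbox{\tiny{$W$}}}+\mathcal{A}_S$ together with the observation that the difference $d\mathcal{A}_0-dA_{\mbox{\tiny{$W$}}}$, once paired with the canonical momentum map and restricted to $\Hor$-vectors, equals exactly $J_i\,dY^i$, which vanishes identically on the zero level set $J_\nh^{-1}(0)$. Everything else is routine bookkeeping with pullbacks and with the identification $\M=\kappa^\flat(D)$.
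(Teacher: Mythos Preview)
Your proof is correct and follows essentially the same route as the paper: for $(i)$ both unwind $\langle J_\nh(m),\xi\rangle=\kappa(v,\xi_\subQ)$ to get $v\in D\cap S^\perp=\Hor$; for $(iii)$ both reduce to the identity $\langle J(m),A_0(\cdot)\rangle=\langle J(m),A_{\mbox{\tiny{$W$}}}(\cdot)\rangle$ on $\M_0$ (your connection identity $\mathcal{A}_0=A_{\mbox{\tiny{$W$}}}+\mathcal{A}_S$ with $\langle J,\mathcal{A}_S\rangle=J_iY^i=0$ on the zero level is exactly what underlies the paper's one-line passage from $A_0$ to $A_{\mbox{\tiny{$W$}}}$), together with $\iota_0^*J_i=0$ killing the second term of $B_1$. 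Your argument for $(ii)$ is in fact more complete than the paper's, which only records the strictness of the inclusion via $\mathcal{S}_m\subset(\mathcal{C}_0)_m$ but $\mathcal{S}_m\cap(\mathcal{C}_{\mbox{\tiny{\Hor}}})_m=\{0\}$.
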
 

\begin{proof}
    $(i)$ For $m = \kappa^{\flat}(v)\in \mathcal{M}$, we have $\langle J_{\nh}(m),\xi\rangle = \kappa(v,\xi_{Q})$ for all $\xi\in\g_S$.  Thus $m\in J_{\nh}^{-1}(0)$ if, and only if, $v\in D\cap S^\perp$. 

    $(ii)$ For $m\in J^{-1}_\nh(0)$, we have $\mathcal{S}_m \cap (\mathcal{C}_{\mbox{\tiny{\Hor}}})_m =\{0\}$, but $\mathcal{S}_m \subset (\mathcal{C}_0)_m$. 

    $(iii)$ Denote by $A_0:TQ\to \g$ the principal connection associated to the splitting \eqref{Eq:0-level-splitting} and recall that $V = S\oplus W$.  For  $m\in J_\nh^{-1}(0)$ and $X,Y\in (\mathcal{C}_{\mbox{\tiny{\Hor}}})_m$, we see that
    \begin{align*}
        \rho_0^* B_{\mbox{\tiny{$\langle\! J\!, \!\mathcal{K}_0\!\rangle$}}} (m)(X,Y) & = \langle J(m), \mathcal{K}_0\rangle (X,Y) = - \langle J, A_0([T\tau_{\subM_0} X,T\tau_{\subM_0} Y])\rangle = \\ 
        & = - \langle J, A_{\mbox{\tiny{$W$}}}([T\tau_{\subM_0} X,T\tau_{\subM_0} Y])\rangle,
    \end{align*}
    where, as usual, $\tau_{\subM_0}:\M_0\to Q$ is the canonical projection.  
    On the other hand, using the definition of $B_1$ given in \eqref{Eq:B1} and by $(ii)$,   we get that 
    $\iota_0^*B_1 (X,Y) = \iota^*_0\langle J(m) , \mathcal{K}_{\mbox{\tiny{$\mathcal{W}$}}}(X,Y)\rangle  = - \langle J, A_{\mbox{\tiny{$W$}}}([T\tau_{\subM_0} X,T\tau_{\subM_0} Y])\rangle$. 
\end{proof}

On the one hand, we have that $(\M_0/G, \omega_0)$ and $(T^*(Q/G), \omega_{\mbox{\tiny{can}}} + (\phi_0^*)^{-1}B_{\mbox{\tiny{$\langle\! J\!, \!\mathcal{K}_0\!\rangle$}}} )$ are symplectomorphic. On the other hand, from  Prop.~\ref{Prop:L1red_simplectic}, we recall the symplectic manifold $(J^{-1}_\nh(0)/G, \omega_0^1)$ obtained by the nonholonomic momentum bundle map reduction of $\pi_1$ at $\mu=0$.  

\begin{proposition} \label{Prop:0-level-identification}
    The diffeomorphism ${\phi}_{0} : J_{\emph\nh}^{-1}(0)/G \rightarrow T^{\ast}(Q/G)$ satisfies
    $$	
        \omega_{0}^{1} = \phi_{0}^{\ast}\, \omega_{\mbox{\tiny{\emph{can}}}}. 
    $$
\end{proposition}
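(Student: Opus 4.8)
The plan is to compare the two $2$-forms $\omega_0^1$ and $\phi_0^{\ast}\omega_{\mbox{\tiny{can}}}$ on $J_{\nh}^{-1}(0)/G = \M_0/G$ by pulling them back along the orbit projection $\rho_0$ and testing them on the distribution $\mathcal{C}_{\mbox{\tiny{\Hor}}}$, which is already large enough to project onto the whole tangent space of $\M_0/G$. Concretely, I would first observe that $T\rho_0$ restricts to a fibrewise isomorphism $\mathcal{C}_{\mbox{\tiny{\Hor}}}\to T(\M_0/G)$: indeed $\mathcal{C}_{\mbox{\tiny{\Hor}}}\cap \ker(T\rho_0)=\{0\}$ because $\Hor \cap V = (D\cap S^{\perp})\cap V = S\cap S^{\perp}=\{0\}$, and a dimension count based on the Chaplygin splitting~\eqref{Eq:0-level-splitting} gives $T\M_0 = \mathcal{C}_{\mbox{\tiny{\Hor}}}\oplus \ker(T\rho_0)$ — this is precisely what makes the Bates reduced form $\omega_0$ well defined. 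Hence it suffices to show
\[
	\rho_0^{\ast}\omega_0^1\,|_{\mathcal{C}_{\mbox{\tiny{\Hor}}}} = \rho_0^{\ast}\phi_0^{\ast}\omega_{\mbox{\tiny{can}}}\,|_{\mathcal{C}_{\mbox{\tiny{\Hor}}}}.
\]

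For the left-hand side, Lemma~\ref{Lemma:0-levelChap}$(ii)$ gives $\mathcal{C}_{\mbox{\tiny{\Hor}}}\subset \mathcal{C}_0 = TJ_{\nh}^{-1}(0)\cap\mathcal{C}$, so Proposition~\ref{Prop:L1red_simplectic} applied with $\mu=0$ yields $\rho_0^{\ast}\omega_0^1\,|_{\mathcal{C}_{\mbox{\tiny{\Hor}}}} = \iota_0^{\ast}(\Omega_{\subM}+B_1)\,|_{\mathcal{C}_{\mbox{\tiny{\Hor}}}}$. For the right-hand side, identity~\eqref{Eq:0-level-Diffeo} reads $\phi_0^{\ast}\omega_{\mbox{\tiny{can}}} = \omega_0 + B_{\mbox{\tiny{$\langle\! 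J\!, \!\mathcal{K}_0\!\rangle$}}}$; combining the defining relation of the Bates reduced form, $\rho_0^{\ast}\omega_0\,|_{\mathcal{C}_{\mbox{\tiny{\Hor}}}} = \Omega_{\subM_0}\,|_{\mathcal{C}_{\mbox{\tiny{\Hor}}}}$, with Lemma~\ref{Lemma:0-levelChap}$(iii)$, $\rho_0^{\ast}B_{\mbox{\tiny{$\langle\! J\!, \!\mathcal{K}_0\!\rangle$}}}\,|_{\mathcal{C}_{\mbox{\tiny{\Hor}}}} = \iota_0^{\ast}B_1\,|_{\mathcal{C}_{\mbox{\tiny{\Hor}}}}$, and using that $\Omega_{\subM_0} = \iota_0^{\ast}\Omega_{\subM}$ under the inclusions $\M_0\subset\M\subset T^{\ast}Q$, I obtain $\rho_0^{\ast}\phi_0^{\ast}\omega_{\mbox{\tiny{can}}}\,|_{\mathcal{C}_{\mbox{\tiny{\Hor}}}} = \iota_0^{\ast}(\Omega_{\subM}+B_1)\,|_{\mathcal{C}_{\mbox{\tiny{\Hor}}}}$, which coincides with the left-hand side. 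Since $T\rho_0(\mathcal{C}_{\mbox{\tiny{\Hor}}}) = T(\M_0/G)$, the two $2$-forms on $\M_0/G$ are equal.

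The only delicate point — and the step I would be most careful about — is that the available inputs naturally live on slightly different distributions: Proposition~\ref{Prop:L1red_simplectic} controls $\omega_0^1$ along $\mathcal{C}_0 = TJ_{\nh}^{-1}(0)\cap\mathcal{C}$, whereas the Chaplygin/Koiller data (the form $\omega_0$ and the diffeomorphism identity~\eqref{Eq:0-level-Diffeo}) are expressed along the smaller $\mathcal{C}_{\mbox{\tiny{\Hor}}}$. The reconciliation is exactly Lemma~\ref{Lemma:0-levelChap}$(ii)$ together with the observation that $\mathcal{C}_{\mbox{\tiny{\Hor}}}$ already surjects onto $T(\M_0/G)$: both pullbacks can therefore be computed on $\mathcal{C}_{\mbox{\tiny{\Hor}}}$ without loss of information, and once they agree there they agree everywhere, giving $\omega_0^1 = \phi_0^{\ast}\omega_{\mbox{\tiny{can}}}$.
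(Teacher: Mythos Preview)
Your proof is correct and follows essentially the same approach as the paper: both arguments pull back the two $2$-forms along $\rho_0$, restrict to $\mathcal{C}_{\mbox{\tiny{\Hor}}}$, and combine Proposition~\ref{Prop:L1red_simplectic} (via Lemma~\ref{Lemma:0-levelChap}$(ii)$) with identity~\eqref{Eq:0-level-Diffeo} and Lemma~\ref{Lemma:0-levelChap}$(iii)$ to conclude. If anything, you are slightly more explicit than the paper in justifying that $T\rho_0(\mathcal{C}_{\mbox{\tiny{\Hor}}}) = T(\M_0/G)$, which the paper leaves implicit.
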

\begin{proof} 
    Since $\mathcal{C}_{\mbox{\tiny{\Hor}}} \subset \mathcal{C}_0$, from \eqref{Eq:PullBacks1} we obtain that $(\rho_{0}^{\ast}\, \omega_{0}^{1})|_{\mathcal{C}_{{\mbox{\tiny{\Hor}}}}}	 =  (\iota_{0}^{\ast}(\Omega_{\subM}+B_{1}))|_{\mathcal{C}_{{\mbox{\tiny{\Hor}}}}} $.  Using  Lemma~\ref{Lemma:0-levelChap}$(ii)$, and also \eqref{Eq:0-level-Diffeo} we get that $$(\rho_{0}^{\ast}\omega_{0}^{1})|_{\mathcal{C}_{\mbox{\tiny{\Hor}}}}	 = (\Omega_{\subM_{0}} + \rho_0^* \circ \phi_0^* \langle J,\mathcal{K}_0\rangle      )|_{\mathcal{C}_{\mbox{\tiny{\Hor}}}} = \rho_{0}^{\ast}(\omega_0 + \phi_0^* \langle J,\mathcal{K}_0\rangle      )|_{\mathcal{C}_{\mbox{\tiny{\Hor}}}}  = (\rho_{0}^{\ast} \circ \phi_0^* \, \omega_{\mbox{\tiny{can}}}  )|_{\mathcal{C}_{\mbox{\tiny{\Hor}}}}.$$
\end{proof}

Last proposition confirms that the zero-leaf $(J_{\nh}^{-1}(0)/G, \omega_{0}^{1} )$ of the bivector field $\pi_\red^1$ is symplectomorphic to the canonical symplectic manifold $(T^*(Q/G),\omega_{\mbox{\tiny{can}}})$.

\subsubsection*{The identification at the $\mu$-level and the Shift-trick} \label{Ss:IdMuLevel}

Now, for $\mu^i\in \mathfrak{B}_{\mbox{\tiny{HGS}}}^*$ and $\mu=c_i\mu^i$ we consider the symplectic leaves  $(J_{\nh}^{-1}(\mu)/G, \omega_{\mu}^{1})$ obtained through reduction of $(\M,\pi_1)$ in Prop.~\ref{Prop:L1red_simplectic}.
Next, we show that each manifold $(J_{\nh}^{-1}(\mu)/G, \omega_{\mu}^{1})$ is symplectomorphic to $(T^{\ast}(Q/G),\omega_{\mbox{\tiny{can}}})$.  On the one hand, the identification of the manifolds is similar to the Hamiltonian case using the shift trick \cite{MarMiPerRat:Book}, but having into account the fact that our starting manifold  $(\M, \pi_1)$  is an almost Poisson. On the other hand, in our case, $\mu$ is a section of the bundle $\g_S^*$ and not just an element of the dual of a Lie algebra.  The 2-form $B_1$ plays a fundamental role here in order to prove that  $(J_{\nh}^{-1}(\mu)/G, \omega_{\mu}^{1})$ is symplectomorphic to the canonical symplectic manifold (without any magnetic term). 

As usual we start with a $G$-invariant nonholonomic system $(\M, \pi_\nh, H_\subM)$ satisfying the dimension and the conserved quantity assumptions.  Recall that $\mathfrak{B}_{\mbox{\tiny{HGS}}}$ and $\mathfrak{B}_{\mbox{\tiny{HGS}}}^*$ are the basis of global sections of horizontal gauge symmetries and its dual basis respectively (as in \eqref{Eq:BasisHGS}).  
Consider $\mu = c_{i}\mu^{i}\in C^{\infty}(Q,\mathfrak{g}_{S}^{\ast})$, where $\mu^i\in \mathfrak{B}^*_{\mbox{\tiny{HGS}}}$ and $c_{i}$ are constants in $\mathbb{R}$. Inspired in \cite{MarMiPerRat:Book}, let us define the diffeomorphism $\textup{Shift}_\mu:T^*Q \to T^*Q$ given, for each $\alpha_q\in T_q^*Q$,  by
$$
    \textup{Shift}_\mu(\alpha_q) = \alpha_q - \alpha_q^\mu,
$$
where $\alpha_q^\mu = \langle \mu, A_{\mbox{\tiny{$S$}}} \rangle _q = \langle \mu(q), A_{\mbox{\tiny{$S$}}}(q)\rangle \in T_q^*Q$ for $A_{\mbox{\tiny{$S$}}}: TQ \to \g_S$ the bundle map that behaves as a connection but only counting the $S$-part of the vertical distribution: $A_{\mbox{\tiny{$S$}}}= Y^i\otimes \xi_i$ with $Y^i$ defined in \eqref{Eq:B1} and $\xi_i\in \mathfrak{B}_{\mbox{\tiny{HGS}}}$ ($A_{\mbox{\tiny{$S$}}}$ was defined already in the proof of Lemma~\ref{Lemma:0-levelChap}).
 
We consider now the usual splitting of $TQ=H\oplus S\oplus W$ as in \eqref{Eq:splittingTQ} but where 
$$
    H=D\cap S^\perp \qquad \mbox{and} \qquad W=V\cap S^\perp,
$$
where the perpendicular space is taken with respect to the kinetic energy metric. 

\begin{lemma} \label{Lemma:Shift}
    The bundle map $\textup{Shift}_\mu^\subM := \textup{Shift}_\mu|_\M : \M \to \M$ is a diffeomorphism that satisfies  
    $$      
        (\textup{Shift}_\mu^\subM)^* (\Omega_\subM + B_1)|_{\mathcal{C}}= (\Omega_\subM + B_1)|_{\mathcal{C}}.
    $$
\end{lemma}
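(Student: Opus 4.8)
The plan is to prove the two assertions separately: that $\textup{Shift}_\mu$ restricts to a well-defined diffeomorphism of $\M$, and that this restriction preserves the $2$-section $(\Omega_\subM+B_1)|_{\mathcal{C}}$. \textbf{For the first point}, I would first identify the shift $1$-form explicitly: since $\langle\mu^i,\xi_j\rangle=\delta_{ij}$ and $A_{\mbox{\tiny{$S$}}}=Y^i\otimes\xi_i$, for $\mu=c_i\mu^i$ we get $\alpha^\mu=\langle\mu,A_{\mbox{\tiny{$S$}}}\rangle=c_iY^i$, a basic $1$-form (pulled back from $Q$). By construction $Y^i|_H=Y^i|_W=0$; and with the choice $H=D\cap S^\perp$, $W=V\cap S^\perp$, the dimension assumption gives $\dim H+\dim W=\dim D+\dim V-2\,\textup{rank}(S)=\textup{rank}(S^\perp)$, so $H\oplus W=S^\perp$. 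Hence $c_iY^i$ annihilates $S^\perp$, whence $\kappa^\sharp(c_iY^i)\in(S^\perp)^\perp=S\subset D$; equivalently, $c_iY^i$ is a section of the vector subbundle $\M=\kappa^\flat(D)\subset T^*Q$. Fibrewise translation by $-c_iY^i$ therefore maps $\M$ into $\M$, with inverse the translation by $+c_iY^i$, so $\textup{Shift}_\mu^\subM$ is a diffeomorphism of $\M$ covering $\mathrm{id}_Q$.

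\textbf{For the invariance of $(\Omega_\subM+B_1)|_{\mathcal{C}}$}, I would work in the adapted coordinates $(p_\alpha,p_i)$ of the proof of Prop.~\ref{Prop:Jnh-MomMap:MBM}. Relative to the coframe $\{X^\alpha,Y^i,\epsilon^a\}$ of $T^*Q$, the shift $1$-form $c_iY^i$ has the constant components $(0,c_i,0)$, so its section of $\M$ has $\M$-coordinates $(0,c_i)$ and $\textup{Shift}_\mu^\subM$ acts simply by $(p_\alpha,p_i)\mapsto(p_\alpha,p_i-c_i)$ (the $\M$-coordinates being linear on the fibres). Now I would invoke the coordinate expression \eqref{Eq:Omega1-Coord}, namely $(\Omega_{\subM}+B_1)|_{\mathcal{C}}=(\mathcal{X}^{\alpha}\wedge dp_{\alpha}+\mathcal{Y}^{i}\wedge dp_{i}-p_{\alpha}\,d\mathcal{X}^{\alpha})|_{\mathcal{C}}$, which is valid for the present splitting. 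Since $\textup{Shift}_\mu^\subM$ covers the identity it fixes $\mathcal{X}^\alpha=\tau_\subM^*X^\alpha$, $d\mathcal{X}^\alpha$, $dp_\alpha$, $\mathcal{Y}^i=\tau_\subM^*Y^i$ and $p_\alpha$, while $(\textup{Shift}_\mu^\subM)^*dp_i=d(p_i-c_i)=dp_i$; it also preserves $\mathcal{C}$ (again because it covers $\mathrm{id}_Q$, so $T\tau_\subM$ is unchanged), so restriction to $\mathcal{C}$ is compatible with pullback. Hence every term on the right of \eqref{Eq:Omega1-Coord} is invariant, and the stated identity follows.

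The only genuinely delicate point is the first step: it is the special choice $H=D\cap S^\perp$, $W=V\cap S^\perp$ — which forces $H\oplus W=S^\perp$ and hence $\kappa^\sharp(c_iY^i)\in S\subset D$ — that keeps the shift inside $\M$; for a generic splitting satisfying \eqref{Eq:splittingTQ}, $c_iY^i$ need not be tangent to $\M$ and the lemma would fail. Everything in the second step is bookkeeping on \eqref{Eq:Omega1-Coord}. As an independent cross-check, one can argue intrinsically: $(\textup{Shift}_\mu^\subM)^*\Omega_\subM=\Omega_\subM+\tau_\subM^*d(c_iY^i)$, one has $d^{\mathcal{C}}\mathcal{Y}^i|_{\mathcal{C}}=d\mathcal{Y}^i|_{\mathcal{C}}$, and $\langle J,\mathcal{K}_{\mbox{\tiny{${\mathcal W}$}}}\rangle$ changes by a term built from $\langle c_iY^i,P_{\mbox{\tiny{$W$}}}[\,\cdot\,,\,\cdot\,]\rangle$, which vanishes on $\mathcal{C}$ because $c_iY^i$ annihilates $W$; together with $(\textup{Shift}_\mu^\subM)^*(J_i\,d^{\mathcal{C}}\mathcal{Y}^i)=(J_i-c_i)\,d^{\mathcal{C}}\mathcal{Y}^i$, all these corrections cancel upon restriction to $\mathcal{C}$.
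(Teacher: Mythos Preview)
Your proof is correct and essentially follows the paper's approach. The first part (showing $\alpha^\mu_q=c_iY^i\in\M$) matches the paper exactly: both use the special choice $H=D\cap S^\perp$, $W=V\cap S^\perp$ to conclude $\kappa^\sharp(c_iY^i)\in S\subset D$. For the second part, your main argument is a slightly more streamlined variant: rather than computing $(\textup{Shift}_\mu^\subM)^*\Omega_\subM$ and $(\textup{Shift}_\mu^\subM)^*B_1$ separately and checking that the corrections $\pm c_i\,d\mathcal{Y}^i$ cancel (as the paper does, and as your cross-check reproduces), you invoke the coordinate formula \eqref{Eq:Omega1-Coord} for the \emph{sum} $(\Omega_\subM+B_1)|_\mathcal{C}$ and observe directly that the translation $(p_\alpha,p_i)\mapsto(p_\alpha,p_i-c_i)$ fixes every term. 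This is the same computation packaged differently, and your intrinsic cross-check is exactly the paper's argument.
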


\begin{proof} 
   For $\alpha_q\in \M$, $\textup{Shift}_\mu(\alpha_q) \in \M$ if and only if $\alpha_q^\mu\in \M$.  But this fact is true since using the definition of $H$ and $W$, and the corresponding basis \eqref{Eq:BasisTQ}, we get that $\alpha_q^\mu = c_i Y^i =  c_i \kappa^\flat (\kappa^{ij}Y_j)$, proving that $\alpha_q^\mu\in \kappa^\flat(D) = \M$.    

   Now, as it was observed in \cite{PaulaGarciaToriZuccalli}, for $\alpha\in \M$ and $X\in T\M$, $(\textup{Shift}_\mu^\subM)^*\Theta_\subM(\alpha)(X)= \langle \alpha, T\tau_\subM(X) \rangle - \langle \alpha_\mu^q,T\tau_\subM(X) \rangle = \Theta_\subM (X) - c_i \mathcal Y^i(X)$,  for $\mathcal{Y}^i = \tau_\subM^*Y^i$ and where we are using that $T\textup{Shift}_\mu^\subM(X) = X$. 
   Therefore $(\textup{Shift}_\mu^\subM)^*\Omega_\subM = \Omega_\subM +c_{i}d\mathcal{Y}^{i}$. Next, for $X,Y\in \Gamma(\mathcal{C}) $, we have 
    \begin{align*}
        (\textup{Shift}_\mu^\subM)^*B_1(X,Y) & = (\langle J(\textup{Shift}_\mu^\subM(\alpha)),\mathcal{K}_{\mbox{\tiny{$\mathcal{W}$}}} \rangle + J_{i}(\textup{Shift}_\mu^\subM(\alpha)) d\mathcal{Y}^{i}))(T\textup{Shift}_\mu^\subM(X),T\textup{Shift}_\mu^\subM(Y)) =	\\
        & = \langle J(\alpha),\mathcal{K}_{\mbox{\tiny{$\mathcal{W}$}}} \rangle(X,Y)+(J_{i}(\alpha) -c_{i})d\mathcal{Y}^{i}(X,Y) =  B_1(X,Y)-c_{i}d\mathcal{Y} ^{i}(X,Y),
    \end{align*}
    where we are using the fact that $W = S^{\perp}\cap V$.  Finally, combining the first and second step we obtain the desired result.  
\end{proof}

If $\alpha_q\in J_\nh^{-1}(\mu)$, then $\textup{Shift}_\mu^\subM(\alpha_q) \in J^{-1}_\nh(0)$ and therefore, we have the following well defined bundle map 
$$
    \textup{shift}_{\mu} := \left. \textup{Shift}^\subM_{\mu}\right|_{J_{\nh}^{-1}(\mu)} : J_{\nh}^{-1}(\mu) \rightarrow J_{\nh}^{-1}(0).  
$$
\begin{lemma}
    The diffeomorphism $\textup{shift}_{\mu} : J_{\emph\nh}^{-1}(\mu) \rightarrow J_{\emph\nh}^{-1}(0)$ is equivariant and therefore it induces the reduced diffeomorphism 
    $$
        \overline{\textup{shift}}_{\mu} : J_{\emph\nh}^{-1}(\mu)/G \rightarrow J_{\emph\nh}^{-1}(0)/G,\quad \textup{such that}\quad \overline{\textup{shift}}_{\mu}\circ \rho_{\mu} = \rho_{0}\circ \textup{shift}_{\mu}. 
    $$
\end{lemma}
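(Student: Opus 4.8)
The plan is to reduce the whole statement to the single observation that the ``shift $1$-form'' on $Q$ is $G$-invariant, and then to invoke the standard descent of equivariant diffeomorphisms to orbit spaces.

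First I would unwind the definition of $\textup{shift}_{\mu}$. By construction $\textup{Shift}_\mu(\alpha_q) = \alpha_q - \alpha_q^\mu$ with $\alpha_q^\mu = \langle \mu(q), A_{\mbox{\tiny{$S$}}}(q)\rangle$; since $A_{\mbox{\tiny{$S$}}} = Y^i\otimes \xi_i$ and $\mu = c_i\mu^i$ with $\langle \mu^i,\xi_j\rangle = \delta_{ij}$, the assignment $q\mapsto \alpha_q^\mu$ is nothing but the $1$-form $\sigma_\mu := c_i\, Y^i \in \Omega^1(Q)$; in particular it depends only on the base point. Hence $\textup{Shift}_\mu$ is the fibrewise translation of $T^{\ast}Q$ by $-\sigma_\mu$, and a direct check with the cotangent lift shows that $\textup{Shift}_\mu$ commutes with the $G$-action on $T^{\ast}Q$ if and only if $\psi_g^\ast\sigma_\mu = \sigma_\mu$ for all $g\in G$, i.e. if and only if $\sigma_\mu$ is $G$-invariant as a $1$-form.

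Second I would prove that $\sigma_\mu$ is $G$-invariant. Recall that the $Y^i$ are characterized by $Y^i(Y_j) = \delta_{ij}$ (with $Y_j = (\xi_j)_{\subQ}$) together with $Y^i|_H = Y^i|_W = 0$; with the orthogonal choices $H = D\cap S^{\perp}$ and $W = V\cap S^{\perp}$ one has $H\oplus W = S^{\perp}$, so each $Y^i$ is exactly the $i$-th component, in the frame $\{Y_j\}$, of the projection $TQ\to S$ along $S^{\perp}$. Now $S^{\perp}$ is $G$-invariant because $S$, $D$, $V$ and the kinetic metric $\kappa$ are $G$-invariant, and the frame $\{Y_1,\dots,Y_k\}$ consists of $G$-invariant vector fields because $\mathfrak{B}_{\mbox{\tiny{HGS}}}$ is $Ad$-invariant (Lemma~\ref{L:Ad-basis}); hence each $Y^i$, and therefore $\sigma_\mu = c_i Y^i$, is a $G$-invariant $1$-form. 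By the previous paragraph $\textup{Shift}_\mu$ is $G$-equivariant, so its restriction $\textup{Shift}_\mu^\subM$ is $G$-equivariant on $\M$, and restricting once more, the diffeomorphism $\textup{shift}_\mu : J_\nh^{-1}(\mu)\to J_\nh^{-1}(0)$ (well defined by the discussion preceding the statement) is $G$-equivariant between two manifolds carrying free and proper $G$-actions.

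Finally I would invoke the standard fact that a $G$-equivariant diffeomorphism between free and proper $G$-manifolds descends to the orbit spaces: the map $\rho_0\circ\textup{shift}_\mu : J_\nh^{-1}(\mu)\to J_\nh^{-1}(0)/G$ is $G$-invariant, hence factors uniquely through $\rho_\mu$ as a smooth map $\overline{\textup{shift}}_\mu$ with $\overline{\textup{shift}}_\mu\circ\rho_\mu = \rho_0\circ\textup{shift}_\mu$; applying the same argument to $\textup{shift}_\mu^{-1}$ (which is equivariant as well) yields a smooth two-sided inverse, so $\overline{\textup{shift}}_\mu$ is a diffeomorphism (smoothness in both directions uses that $\rho_\mu$ and $\rho_0$ are surjective submersions). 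I expect the only non-formal point, and hence the main obstacle, to be the $G$-invariance of the $1$-forms $Y^i$ underlying $\sigma_\mu$; this is where the specific $S^{\perp}$-orthogonal choices for $H$ and $W$ are used, so that $Y^i$ is the structure-free projection onto the invariant frame $\{Y_j\}$ rather than something depending delicately on $B_1$.
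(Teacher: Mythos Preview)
Your proof is correct and follows essentially the same strategy as the paper: reduce equivariance of $\textup{shift}_\mu$ to $G$-invariance of the shift $1$-form $q\mapsto\alpha_q^\mu$, then descend. The only difference is cosmetic: the paper verifies invariance of $\alpha_q^\mu$ by writing it as $\langle\mu,A_{\mbox{\tiny{$S$}}}\rangle$ and using the $Ad$-equivariance of $A_{\mbox{\tiny{$S$}}}$ together with the $Ad^*$-invariance of $\mu$, whereas you unwind $\alpha_q^\mu = c_iY^i$ and argue directly that each $Y^i$ is a $G$-invariant $1$-form via the invariance of the orthogonal projection onto $S$ in the invariant frame $\{Y_j\}$. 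Your version has the small advantage of making explicit where the orthogonal choices $H=D\cap S^\perp$, $W=V\cap S^\perp$ enter (namely, to identify $H\oplus W$ with the $G$-invariant complement $S^\perp$); the paper's computation implicitly relies on the same fact through the identity $(A_{\mbox{\tiny{$S$}}})_{gq}\circ T\psi_g = Ad_g\circ (A_{\mbox{\tiny{$S$}}})_q$.
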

\begin{proof}
    The equivariance of $\textup{shift}_{\mu} : J_{\nh}^{-1}(\mu) \rightarrow J_{\nh}^{-1}(0)$ means that for $\alpha\in \M$ and $g\in G$,
    \begin{equation*}
        T^{\ast}\psi_{g}(\textup{shift}_{\mu}(\alpha)) = \textup{shift}_{\mu}(T^{\ast}\psi_{g}(\alpha)), 
    \end{equation*}
    where $\psi_{g} : Q\rightarrow Q$ is the diffeomorphism associated with the $G$-action on $Q$.  By the definition of the map $\textup{shift}_\mu$ it remains to see that $T^{\ast}\psi_{g}(\alpha^\mu_{q}) = \alpha_q^\mu$. But, using that $\mathfrak{B}_{\mbox{\tiny{HGS}}}$ and $\mathfrak{B}^*_{\mbox{\tiny{HGS}}}$ are $Ad$ and $Ad^*$-invariant basis respectively, we obtain that 
    $
    T^{\ast}\psi_{g}(\alpha_{gq}^\mu) = T^{\ast}\psi_{g}(\langle \mu,A_{\mbox{\tiny{$S$}}} \rangle_{gq}) = \langle \mu(gq), (A_{\mbox{\tiny{$S$}}})_{gq}\circ T\psi_{g}\rangle = \langle \mu(gq), Ad_{g}((A_{S})_{q})\rangle = \langle Ad_{g^{-1}}^{\ast}(\mu(gq)), (A_{\mbox{\tiny{$S$}}})_{q}\rangle = \langle \mu, A_S\rangle_{q}$.
\end{proof}

The following diagram clarifies the maps: 
\begin{equation} \label{Diag:figuraShiftTrick}
    \xymatrix{\ar[d]^{\textup{Shift}^\subM_{\mu}} \mathcal{M}    &  \ar[l]_{\iota_{\mu}}\ar[d]^{\textup{shift}_{\mu}} J_{\nh}^{-1}(\mu)\ar[r]^{\rho_{\mu}} &   J_{\nh}^{-1}(\mu)/G \ar[d]^{\overline{\textup{shift}}_{\mu}} \ar[r]^{\iota_{\red}^\mu} & \M/G \\
    \mathcal{M}	 & \ar[l]_{\iota_{0}} J_{\nh}^{-1}(0)\ar[r]^{\rho_{0}}  &  J_{\nh}^{-1}(0)/G  \ar[r]^{\iota_{\red}^0} & \M/G}
\end{equation}

Finally from Prop.~\ref{Prop:0-level-identification} we recall the diffeomorphism $\phi_0 : J^{-1}_\nh(0)/G \to T^*(Q/G)$ and we obtain:   

\begin{theorem} \label{T:P1-mu-identification}  
    For each $\mu\in\g_S^*$ given by $\mu=c_i\mu^i$ with $c_i\in\mathbb{R}$, the symplectic manifolds $(J_{\emph\nh}^{-1}(\mu)/G, \omega_{\mu}^{1})$ are symplectomorphic to $(T^*(Q/G), \omega_{\emph{\mbox{\tiny{can}}}})$. More precisely,  the diffeomorphism ${\phi}_{\mu} := {\phi}_{0}\circ \overline{\textup{shift}}_{\mu} : J_{\emph\nh}^{-1}(\mu)/G \rightarrow T^{\ast}(Q/G)$  satisfies
    $$
        {\phi}_{\mu}^{\ast}\, \omega_{\emph{\mbox{\tiny{can}}}} = \omega_{\mu}^{1}.
    $$
\end{theorem}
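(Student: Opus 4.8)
The plan is to deduce the statement from the zero-level identification already obtained in Proposition~\ref{Prop:0-level-identification}, combined with the Shift-invariance of $(\Omega_\subM+B_1)|_{\mathcal{C}}$ from Lemma~\ref{Lemma:Shift}. Since $\phi_\mu=\phi_0\circ\overline{\textup{shift}}_\mu$ and $\phi_0^*\omega_{\mbox{\tiny{can}}}=\omega_0^1$ by Proposition~\ref{Prop:0-level-identification}, it suffices to prove that $\overline{\textup{shift}}_\mu^*\,\omega_0^1=\omega_\mu^1$ as $2$-forms on $J_\nh^{-1}(\mu)/G$.

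First I would record two elementary facts about $\textup{shift}_\mu=\textup{Shift}_\mu^\subM|_{J_\nh^{-1}(\mu)}:J_\nh^{-1}(\mu)\to J_\nh^{-1}(0)$: by construction $\iota_0\circ\textup{shift}_\mu=\textup{Shift}_\mu^\subM\circ\iota_\mu$, and $T\textup{shift}_\mu$ maps $\mathcal{C}_\mu$ into $\mathcal{C}_0$. The latter holds because $\textup{Shift}_\mu^\subM$ is a bundle map over the identity of $Q$, so $T\tau_\subM\circ T\textup{Shift}_\mu^\subM=T\tau_\subM$; hence a vector tangent to $J_\nh^{-1}(\mu)$ whose $\tau_\subM$-projection lies in $D$ is sent to a vector tangent to $J_\nh^{-1}(0)$ whose projection still lies in $D$.

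Second, restricting everything to $\mathcal{C}_\mu$ and using the defining relation $\overline{\textup{shift}}_\mu\circ\rho_\mu=\rho_0\circ\textup{shift}_\mu$, I would compute
\[
\rho_\mu^*\overline{\textup{shift}}_\mu^*\omega_0^1\big|_{\mathcal{C}_\mu}
=\textup{shift}_\mu^*\rho_0^*\omega_0^1\big|_{\mathcal{C}_\mu}
=\textup{shift}_\mu^*\iota_0^*(\Omega_\subM+B_1)\big|_{\mathcal{C}_\mu}
=\iota_\mu^*(\textup{Shift}_\mu^\subM)^*(\Omega_\subM+B_1)\big|_{\mathcal{C}_\mu},
\]
where the second equality is Proposition~\ref{Prop:L1red_simplectic} at level $0$ (legitimate since $T\textup{shift}_\mu(\mathcal{C}_\mu)\subset\mathcal{C}_0$) and the third is $\iota_0\circ\textup{shift}_\mu=\textup{Shift}_\mu^\subM\circ\iota_\mu$. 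Now Lemma~\ref{Lemma:Shift} gives $(\textup{Shift}_\mu^\subM)^*(\Omega_\subM+B_1)|_{\mathcal{C}}=(\Omega_\subM+B_1)|_{\mathcal{C}}$, so the last term equals $\iota_\mu^*(\Omega_\subM+B_1)|_{\mathcal{C}_\mu}$, which by \eqref{Eq:PullBacks1} is $\rho_\mu^*\omega_\mu^1|_{\mathcal{C}_\mu}$.

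Finally, to pass from this equality on $\mathcal{C}_\mu$ to an equality of $2$-forms on the quotient, I would use that $\rho_\mu$ is a surjective submersion with $T\rho_\mu(\mathcal{C}_\mu)=T(J_\nh^{-1}(\mu)/G)$ (established within the proof of Proposition~\ref{Prop:L1red_simplectic}) and that $\rho_\mu^*$ is injective on forms; lifting arbitrary tangent vectors of $J_\nh^{-1}(\mu)/G$ to sections of $\mathcal{C}_\mu$ then yields $\overline{\textup{shift}}_\mu^*\omega_0^1=\omega_\mu^1$, hence $\phi_\mu^*\omega_{\mbox{\tiny{can}}}=\omega_\mu^1$. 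Since $\phi_\mu$ is a composition of diffeomorphisms, this exhibits it as a symplectomorphism. The only point requiring genuine care is the bookkeeping in the second step — verifying that $\textup{shift}_\mu$, the two inclusions, and the two orbit projections intertwine exactly as stated and that every restriction to $\mathcal{C}_\mu$ is meaningful — since no geometric input beyond Propositions~\ref{Prop:0-level-identification}, \ref{Prop:L1red_simplectic} and Lemma~\ref{Lemma:Shift} enters.
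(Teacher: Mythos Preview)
Your proof is correct and follows essentially the same route as the paper's: both chain together the commutativity $\overline{\textup{shift}}_\mu\circ\rho_\mu=\rho_0\circ\textup{shift}_\mu$ and $\iota_0\circ\textup{shift}_\mu=\textup{Shift}_\mu^\subM\circ\iota_\mu$, Proposition~\ref{Prop:L1red_simplectic} at levels $0$ and $\mu$, Proposition~\ref{Prop:0-level-identification}, and Lemma~\ref{Lemma:Shift} to obtain $\rho_\mu^*\phi_\mu^*\omega_{\mbox{\tiny{can}}}|_{\mathcal{C}_\mu}=\iota_\mu^*(\Omega_\subM+B_1)|_{\mathcal{C}_\mu}=\rho_\mu^*\omega_\mu^1|_{\mathcal{C}_\mu}$. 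Your write-up is in fact slightly more careful than the paper's, since you make explicit why $T\textup{shift}_\mu(\mathcal{C}_\mu)\subset\mathcal{C}_0$ and why equality on $\mathcal{C}_\mu$ descends to equality on the quotient.
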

\begin{proof} 
    Using Prop.~\ref{Prop:L1red_simplectic} and due to Diag.~\eqref{Diag:figuraShiftTrick} and Prop.~\ref{Prop:0-level-identification} we see that   
    \begin{align*}
        \rho_{\mu}^{\ast}\circ{\phi}_{\mu}^{\ast}\, \omega_{\mbox{\tiny{can}}} |_{\mathcal{C}_{\mu}}	& = \rho_{\mu}^{\ast} \circ \overline{\textup{shift}}_{\mu}^{\ast}\circ {\phi}_{0}^{\ast}\, \omega_{\mbox{\tiny{can}}}|_{\mathcal{C}_{\mu}} = (\overline{\textup{shift}}_{\mu}\circ \rho_{\mu})^{\ast}\, \omega_{0}^{1} |_{\mathcal{C}_{\mu}} = (\rho_{0}\circ \textup{shift}_{\mu})^{\ast}\, \omega_{0}^{1}|_{\mathcal{C}_{\mu}} = \\
        & = \textup{shift}_{\mu}^{\ast}\circ \iota_{0}^{\ast}(\Omega_\subM +B_1) |_{\mathcal{C}_{\mu}} = \iota_{\mu}^{\ast}\circ (\textup{Shift}^\subM_{\mu})^{\ast} (\Omega_\subM +B_1)  |_{\mathcal{C}_{\mu}} = \iota_{\mu}^{\ast}(\Omega_\subM +B_1) |_{\mathcal{C}_{\mu}}.	
    \end{align*}
\end{proof}

\begin{remark}(The canonical nature of $\pi_1$) Given a nonholonomic system as in Theorem \ref{T:P1-mu-identification}, then 
    \begin{enumerate}
        \item[$(i)$]  the Poisson bivector field $\pi_\red^1$ has a canonical nature since the symplectic foliation is (symplectomorphic to) the canonical symplectic manifolds $(T^*(Q/G), \omega_{\mbox{\tiny{can}}})$. 
        
        \item[$(ii)$] Consider now just the $G$-invariant almost Poisson bracket $(\M,\pi_\nh)$ with the dimension assumption and observe that if the bundle $\mathfrak{g}_S$ is trivial, then for any choice of an $Ad$-invariant global basis $\{\xi_1,\dots,\xi_k\}$ of sections of $\mathfrak{g}_S$ there exists a Poisson bivector field $\pi_\red^{1}$ whose symplectic leaves are given by the common level sets of the functions $J_i=\langle J_\nh, \xi_i\rangle$ and which are also symplectomorphic to the canonical symplectic manifold.
    \end{enumerate}
\end{remark}


\subsection{The leaves $(J^{-1}_\nh(\mu)/G, \omega^\subB_\mu)$ of $\pi_\red^\subB$ and the Chaplygin-type foliation}

Let $(\M,\pi_\nh, H_\subM)$ be a $G$-invariant nonholonomic system with the dimension and the conserved quantity assumption. Then we consider the almost symplectic leaves $(J^{-1}_\nh(\mu)/G, \omega^\subB_\mu)$ obtained by the reduction of $(\M,\pi_\subB)$ in Theorem~\ref{T:NH_momentum_red}.
Following the same strategy as in Section~\ref{Ss:Nonholonomic_momentum_reduction}, we use the fact that $(\M/G,\pi_\red^1)$ and $(\M/G,\pi_\red^\subB)$ are gauge related by the 2-form $\bar{\mathcal{B}}$ (Remark \ref{R:B1B}$(ii)$), 
to conclude that the corresponding leaves  $(J^{-1}_\nh(\mu)/G, \omega^1_\mu)$ and  $(J^{-1}_\nh(\mu)/G, \omega^\subB_\mu)$ differ from the 2-form $\bar{\mathcal{B}}_\mu$ where $\rho_\mu^*\bar{\mathcal{B}}_\mu= \iota_\mu^*{\mathcal B}$, see Remark~\ref{R:GaugeConsiderations}$(ii)$. 

\begin{theorem}\label{T:PB-mu-identification} 
    Consider the nonholonomic system with a $G$-symmetry, satisfying the dimension and the conserved quantity assumptions, described by the triple $(\M,\pi_\subB,H_\subM)$.   The almost symplectic leaf $(J^{-1}_{\emph\nh}(\mu)/G, \omega^\subB_\mu)$ obtained through a momentum map reduction in Theorem \ref{T:NH_momentum_red} is diffeomorphic to  
    $$
        (T^*(Q/G), \omega_{\emph{\mbox{\tiny{can}}}} + \hat{\mathcal{B}}_\mu),
    $$ 
    where $\hat{\mathcal{B}}_\mu$ is the 2-form on $T^*(Q/G)$ such that $\phi_\mu^*\hat{\mathcal{B}}_\mu = \bar{\mathcal{B}}_\mu$, where ${\phi}_{\mu} : J_{\emph\nh}^{-1}(\mu)/G \rightarrow T^{\ast}(Q/G)$  is the diffeomorphism given by ${\phi}_{\mu} := {\phi}_{0}\circ \overline{\textup{shift}}_{\mu}$. 
\end{theorem}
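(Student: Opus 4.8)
The plan is to obtain the statement by assembling Theorem~\ref{T:NH_momentum_red}$(ii)$ and Theorem~\ref{T:P1-mu-identification}, exploiting once more that $\pi_1$ and $\pi_\subB$ are gauge related by the basic $2$-form $\mathcal{B}$. Recall the two facts already at our disposal: by Theorem~\ref{T:NH_momentum_red}$(ii)$ the reduced almost symplectic form on the $\mu$-leaf splits as $\omega_\mu^\subB = \omega_\mu^{1} + \bar{\mathcal{B}}_\mu$, where $\bar{\mathcal{B}}_\mu$ is the $2$-form on $J_\nh^{-1}(\mu)/G$ with $\rho_\mu^*\bar{\mathcal{B}}_\mu = \iota_\mu^*\mathcal{B}$; and by Theorem~\ref{T:P1-mu-identification} the diffeomorphism $\phi_\mu = \phi_0\circ\overline{\textup{shift}}_\mu : J_\nh^{-1}(\mu)/G \to T^*(Q/G)$ pulls the canonical form back to $\omega_\mu^{1}$, i.e. $\phi_\mu^*\omega_{\mbox{\tiny{can}}} = \omega_\mu^{1}$. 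Thus the whole content of the theorem is to transport the extra term $\bar{\mathcal{B}}_\mu$ across the diffeomorphism $\phi_\mu$.

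Concretely, I would define $\hat{\mathcal{B}}_\mu := (\phi_\mu^{-1})^*\bar{\mathcal{B}}_\mu$, a $2$-form on $T^*(Q/G)$, which by construction satisfies $\phi_\mu^*\hat{\mathcal{B}}_\mu = \bar{\mathcal{B}}_\mu$. Pulling back and using the two identities above,
$$
\phi_\mu^*\bigl(\omega_{\mbox{\tiny{can}}} + \hat{\mathcal{B}}_\mu\bigr) = \phi_\mu^*\omega_{\mbox{\tiny{can}}} + \phi_\mu^*\hat{\mathcal{B}}_\mu = \omega_\mu^{1} + \bar{\mathcal{B}}_\mu = \omega_\mu^\subB,
$$
so $\phi_\mu$ is the required diffeomorphism between $(J_\nh^{-1}(\mu)/G,\omega_\mu^\subB)$ and $(T^*(Q/G),\omega_{\mbox{\tiny{can}}}+\hat{\mathcal{B}}_\mu)$. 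Since $\omega_\mu^\subB$ is nondegenerate (Theorem~\ref{T:NH_momentum_red}) so is $\omega_{\mbox{\tiny{can}}}+\hat{\mathcal{B}}_\mu$, and because $\omega_{\mbox{\tiny{can}}}$ is closed one has $d(\omega_{\mbox{\tiny{can}}}+\hat{\mathcal{B}}_\mu) = d\hat{\mathcal{B}}_\mu$, which records that the magnetic term need not be closed and is governed solely by the $2$-form $B$ defining the gauge transformation.

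It remains to check that $\hat{\mathcal{B}}_\mu$ is genuinely ``magnetic'', i.e. semi-basic with respect to $T^*(Q/G)\to Q/G$. For this I would argue that $\bar{\mathcal{B}}_\mu$ is semi-basic with respect to the induced bundle $J_\nh^{-1}(\mu)/G\to Q/G$ and that $\phi_\mu$ covers the identity on $Q/G$. The first point follows from the fact that $\mathcal{B}$ is semi-basic with respect to $\tau_\subM:\M\to Q$ (by construction, see \eqref{Eq:Bcal}) and basic with respect to $\rho:\M\to\M/G$ (Remark~\ref{R:B1B}$(ii)$): a vector tangent to a fibre of $J_\nh^{-1}(\mu)/G\to Q/G$ lifts to a vector on $J_\nh^{-1}(\mu)$ which can be decomposed as a $\rho$-vertical vector plus a $\tau_\subM$-vertical one, and $\mathcal{B}$ is annihilated by both; hence so is $\iota_\mu^*\mathcal{B} = \rho_\mu^*\bar{\mathcal{B}}_\mu$, forcing $\bar{\mathcal{B}}_\mu$ to be semi-basic over $Q/G$. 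The second point holds because, by their very definitions, $\phi_0$ (the identification of Proposition~\ref{Prop:0-level-identification}) and $\overline{\textup{shift}}_\mu$ (the reduction of the fibrewise translation $\textup{Shift}_\mu$ on $T^*Q$) are bundle maps over the identity of $Q/G$, so $\phi_\mu$ maps fibres to fibres and $\hat{\mathcal{B}}_\mu = (\phi_\mu^{-1})^*\bar{\mathcal{B}}_\mu$ is semi-basic. I do not expect a real obstacle here: all the substantive analysis (nondegeneracy, the symplectomorphism at the zero level, and the shift trick) already lives in Theorems~\ref{T:NH_momentum_red} and~\ref{T:P1-mu-identification}; the only points requiring a little care are the bookkeeping that $\bar{\mathcal{B}}_\mu$ is well defined on the leaf (immediate from $\mathcal{B}$ being basic) and the decomposition of fibrewise-tangent vectors used to establish semi-basicness.
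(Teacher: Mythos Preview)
Your proposal is correct and follows essentially the same approach as the paper: combine $\omega_\mu^\subB = \omega_\mu^1 + \bar{\mathcal{B}}_\mu$ from Theorem~\ref{T:NH_momentum_red}$(ii)$ with $\phi_\mu^*\omega_{\mbox{\tiny{can}}} = \omega_\mu^1$ from Theorem~\ref{T:P1-mu-identification}, and define $\hat{\mathcal{B}}_\mu$ by pushing $\bar{\mathcal{B}}_\mu$ forward via $\phi_\mu$. Your additional verification that $\hat{\mathcal{B}}_\mu$ is semi-basic over $Q/G$ is a welcome extra (the paper treats this only later, in the coordinate discussion) and your argument for it is sound.
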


\begin{proof}
    On the one hand, by Theorem~\ref{T:P1-mu-identification}, we have that $\phi_\mu^*\omega_{\mbox{\tiny{can}}} = \omega_\mu^1$ and, on the other hand, by Theorem \ref{T:NH_momentum_red},  $\omega^\subB_\mu = \omega_\mu^1 +\bar{\mathcal{B}}_\mu$ for $\rho_\mu^*\bar{\mathcal{B}}_\mu= \iota_\mu^*{\mathcal B}$.  
    Therefore, we conclude that $\omega_{\mu}^{\subB} = {\phi}_{\mu}^{\ast}\, \omega_{\mbox{\tiny{can}}} + \bar{\mathcal{B}}_\mu.$
\end{proof}

The next Corollary shows the Chaplygin nature of nonholonomic systems verifying the dimension and conserved quantity assumptions. Note that these nonholonomic systems are not of Chaplygin-type necessarily.  

\begin{corollary} [{\it The Chaplygin-type foliation}]
    Consider a nonholonomic system with a $G$-symmetry, satisfying the dimension and the conserved quantity assumptions. 
    \begin{enumerate}
        \item[$(i)$] The leaves of the almost symplectic foliation associated to the bivector field $\pi_{\emph\red}^\subB$ are diffeomorphic to $(T^*(Q/G), \omega_{\emph{\mbox{\tiny{can}}}} + \hat{\mathcal{B}}_\mu)$.
        
        \item[$(ii)$] The reduced nonholonomic system is described by the foliation of ``Chaplygin-type'' given by $(T^*(Q/G), \omega_{\mbox{\tiny\emph{can}}} + \hat{\mathcal{B}}_\mu)$, behaving as a Chaplygin system on each leaf. 
    \end{enumerate}
\end{corollary}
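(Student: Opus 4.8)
The statement is obtained by assembling Theorems~\ref{T:NH_momentum_red} and~\ref{T:PB-mu-identification}, so the plan is mostly to organize already-established facts and to verify one compatibility condition. First I would recall from Theorem~\ref{T:NH_momentum_red}$(iii)$ that the characteristic foliation of $\pi_\red^\subB$ on $\M/G$ has as leaves the connected components of the almost symplectic manifolds $(J_\nh^{-1}(\mu)/G,\omega_\mu^\subB)$, for $\mu\in\textup{span}_{\mathbb{R}}\mathfrak{B}^*_{\mbox{\tiny{HGS}}}$. Then Theorem~\ref{T:PB-mu-identification} supplies, for each such $\mu$, a diffeomorphism $\phi_\mu:=\phi_0\circ\overline{\textup{shift}}_\mu:J_\nh^{-1}(\mu)/G\to T^*(Q/G)$ carrying $\omega_\mu^\subB$ to $\omega_{\mbox{\tiny{can}}}+\hat{\mathcal{B}}_\mu$. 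Combining these two facts yields item $(i)$ immediately.

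Before passing to the dynamics I would check that $\hat{\mathcal{B}}_\mu$ is a genuine ``magnetic term'', i.e.\ semi-basic with respect to the cotangent projection $T^*(Q/G)\to Q/G$; this is what legitimizes calling the leaves ``Chaplygin-type''. It follows by tracing the semi-basicness of $\mathcal{B}$ with respect to $\tau_\subM:\M\to Q$ (recorded before Definition~\ref{Def:PiPB}) through the reduction: the restriction $\mathcal{B}_\mu=\iota_\mu^*\mathcal{B}$ is semi-basic for $\tau_\subM|_{J_\nh^{-1}(\mu)}$, hence so is the reduced form $\bar{\mathcal{B}}_\mu$ on $J_\nh^{-1}(\mu)/G$ (determined by $\rho_\mu^*\bar{\mathcal{B}}_\mu=\mathcal{B}_\mu$) with respect to the induced bundle over $Q/G$; and since $\textup{Shift}_\mu$ covers $\textup{id}_Q$ while $\phi_0$ is the fibrewise identification over $Q/G$, the diffeomorphism $\phi_\mu$ is fibred over $Q/G$, so $\hat{\mathcal{B}}_\mu=(\phi_\mu^{-1})^*\bar{\mathcal{B}}_\mu$ is semi-basic for $T^*(Q/G)\to Q/G$. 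This single compatibility point is the only part of the argument that is not a direct quotation of the previous theorems, and I expect it to be the (mild) crux.

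Finally, for item $(ii)$ I would invoke Theorem~\ref{T:NH_momentum_red}$(iv)$: the reduced vector field $X_\red$ is tangent to each leaf $J_\nh^{-1}(\mu)/G$ and its restriction $X_\red^\mu$ satisfies ${\bf i}_{X_\red^\mu}\omega_\mu^\subB=dH_\red^\mu$. Transporting this identity along $\phi_\mu$ and setting $h_\mu:=(\phi_\mu^{-1})^*H_\red^\mu\in C^\infty(T^*(Q/G))$ and $Y_\mu:=(\phi_\mu)_*X_\red^\mu$, one obtains
\[
    {\bf i}_{Y_\mu}(\omega_{\mbox{\tiny{can}}}+\hat{\mathcal{B}}_\mu)=dh_\mu,
\]
which is exactly the equation of motion of a Chaplygin system on $T^*(Q/G)$ with Hamiltonian $h_\mu$ and (non-closed) magnetic term $\hat{\mathcal{B}}_\mu$. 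Since this holds leaf by leaf, the reduced nonholonomic system is described by the Chaplygin-type foliation $(T^*(Q/G),\omega_{\mbox{\tiny{can}}}+\hat{\mathcal{B}}_\mu)$, behaving as a Chaplygin system on each leaf, as claimed.
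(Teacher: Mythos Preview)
Your proposal is correct and follows exactly the intended route: the paper states this corollary without proof, as an immediate consequence of Theorem~\ref{T:NH_momentum_red}$(iii)$--$(iv)$ and Theorem~\ref{T:PB-mu-identification}, which is precisely what you do. Your additional check that $\hat{\mathcal{B}}_\mu$ is semi-basic over $Q/G$ is a useful elaboration (the paper asserts this in the introduction and confirms it via the coordinate expression given just after the corollary, rather than by the fibred-map argument you sketch), but it does not constitute a different approach.
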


\begin{remark}
    Our procedure generalizes the momentum map reduction proposed in \cite{PaulaGarciaToriZuccalli} that treats Chaplygin systems with an extra symmetry, since in the present paper, we do not consider Chaplygin systems. As a consequence, we are forced to work with the  nonholonomic momentum bundle map $J_\nh:\M\to \g_S^*$, which is not the case in \cite{PaulaGarciaToriZuccalli}.
\end{remark}

\paragraph{Coordinate expression for $\hat{\mathcal{B}}_\mu$.}

Next we will study the 2-form $\hat{\mathcal{B}}_\mu$ in coordinates adapted to the basis \eqref{Eq:BasisTQ} and \eqref{Eq:localCoordTM} where  $\Hor=S^\perp \cap D$ and $W = S^\perp\cap V$.  Therefore, for $\mu=c_i\mu^i$, on $J^{-1}(\mu)$ for each $i=1,\dots,k$, we have that $p_i= \kappa_{ij}v^j=c_i$ and $p_a = \kappa_{a\alpha}v^\alpha=\kappa_{a\alpha}\kappa^{\alpha\beta}p_\beta$.  

Then, on $T^*(Q/G)$ we consider coordinates  $p_\alpha$ on $T_{\bar{q}}^*(Q/G)$ ($\bar{q}\in Q/G$) and 
using the coordinate expression from \cite[Sec.3.4]{Balseiro2021}, we obtain that 
\begin{equation*}
    \hat{\mathcal{B}}_\mu = \tfrac{1}{2}\left[ p_\gamma \kappa^{\gamma\delta} \kappa_{\delta a} C^a_{\alpha\beta} + c_i(C^i_{\alpha\beta} + \kappa^{ij}\kappa_{\alpha A}C^A_{j\beta}) \right] \bar{\mathcal X}^\alpha \wedge \bar{\mathcal X}^\beta,
\end{equation*}
where $\bar{\mathcal{X}}^\alpha=\tau^*_{\mbox{\tiny{$T^*\!(Q/G)$}}} \bar{X}^\alpha$ are the 1-forms on $T^*(Q/G)$, so that $\bar{X}^\alpha$ are 1-forms on $Q/G$ such that $\rho_\subQ^*\bar{X}^\alpha = {X}^\alpha$
for $\rho_\subQ : Q\to Q/G$ the orbit projection and $\tau_{\mbox{\tiny{$T^*(Q/G)$}}} : T^*\!(Q/G) \to Q/G$ the canonical projection.

From here, we observe that $\hat{\mathcal{B}}_0= \tfrac{1}{2}\left[ p_\gamma \kappa^{\gamma\delta}\kappa_{\delta a} C^a_{\alpha\beta} \right] \bar{\mathcal X}^\alpha \wedge \bar{\mathcal X}^\beta$.

\section{Almost symplectic structure at the 0-level of linear first integrals}\label{S:general-0level}

Building on Section~\ref{Ss:IdZeroLevel}, we study the common 0-level sets of first integrals that are {\it not} necessary horizontal gauge momenta. This section is inspired in the example describing the nonholonomic system given by a ball rolling on a fixed sphere (studied in e.g., \cite{BorisovFedorov1995, BorisovMamaev2002,Jovanovic2010,Tsiganov2011})   that has a linear conserved quantity that is not a horizontal gauge momentum.
More precisely, consider a nonholonomic system with a $G$-symmetry satisfying the dimension assumption. In this section, we show that, if the system admits $k=\textup{rank}(S)$ $G$-invariant first integrals  (linear on the fibers),  the zero-level sets of these functions are also identified with a Chaplygin-type leaf (i.e., with the canonical symplectic manifold plus a magnetic term). 

We start this section with a proper definition of the first integrals that we are interested.

\begin{definition}\label{Def:D-momentum} \cite{Fasso2007}
    Let $(\M,\pi_\nh,H_\subM)$ be a nonholonomic system. The function $F$ on $\M$ is a {\it $D$-momentum} if 
    \begin{enumerate}
        \item[$(i)$] There exists a non-vanishing vector field $X$ on $Q$ taking values in $D$ (i.e., $X\in\Gamma(D)$), such that, for each $m\in \M \subset T^*Q$, $F(m) =  \langle m, X(q) \rangle$ where $\tau_\subM(m)=q$, for  $\tau_\subM:\M\to Q$. 
    
        \item[$(ii)$] $F$ is a first integral of the nonholonomic dynamics: $X_{\nh}(F) = 0$. 
    \end{enumerate}
    In this case, the vector field $X$ is called a \textit{generator} of $F$.    
\end{definition}

\begin{remark} 
    \begin{enumerate}
        \item[$(i)$] From Definition \ref{Def:D-momentum}, we see that a $D$-momentum $F$ on $\M\subset T^*Q$ is a linear function  on the fibers of the bundle $\tau_\subM:\M\to Q$.
        
        \item[$(ii)$] A $D$-momentum can be written also as $F(m)={\bf i}_{\mathcal{X}}\Theta_\subM(m)$ where $\mathcal{X}$ is a $\tau_\subM$-projectable vector field on $\M$ taking values in $\mathcal{C}$. In contrast with the definition of a {\it horizontal gauge momentum}, $\mathcal{X}$ is not necessarily vertical, i.e., $\mathcal{X}(m)$ might not be in $\mathcal{S}_m$. 
    \end{enumerate} 
\end{remark}

Analogously as Lemma~\ref{L:Ad-basis}, in this case we have that 

\begin{lemma}
    Given $F_1, F_2$ functionally independent $D$-momenta of the nonholonomic system $(\M, \{\cdot, \cdot\}_{\emph\nh}, H_\subM)$ then the associated generators $X_1$ and $X_2$ are independent vector fields on $Q$.
\end{lemma}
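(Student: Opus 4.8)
The plan is to mirror the argument of Lemma~\ref{L:Ad-basis} in this slightly different setting, where the generators take values in $D$ rather than in $S$. First I would suppose, for contradiction, that $X_1$ and $X_2$ are not independent as vector fields on $Q$; since they are non-vanishing sections of $D$, this would mean there is a function $f\in C^\infty(Q)$ (defined on an open dense set, or globally if we want both everywhere non-vanishing) with $X_2 = f\,X_1$. The key observation is that, by Definition~\ref{Def:D-momentum}$(i)$, the $D$-momenta are \emph{linear on the fibers} of $\tau_\subM:\M\to Q$, so for every $m\in\M$ with $\tau_\subM(m)=q$ we have
\[
    F_2(m) = \langle m, X_2(q)\rangle = \langle m, f(q)X_1(q)\rangle = f(q)\,\langle m, X_1(q)\rangle = (\tau_\subM^*f)(m)\,F_1(m).
\]

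Then I would argue that the relation $F_2 = (\tau_\subM^*f)F_1$ contradicts functional independence of $F_1$ and $F_2$. Indeed, taking differentials gives $dF_2 = (\tau_\subM^*f)\,dF_1 + F_1\,d(\tau_\subM^*f)$, so at any point $m$ the covectors $dF_1(m)$ and $dF_2(m)$ lie in the span of $dF_1(m)$ and $d(\tau_\subM^*f)(m)$; but $d(\tau_\subM^*f)$ annihilates the fibers of $\tau_\subM$ while $dF_1$ does not (as $F_1$ is a nonzero linear function on each fiber, coming from the non-vanishing generator $X_1$), so $dF_1$ and $dF_2$ are proportional wherever $F_1\neq 0$. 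This forces $dF_1\wedge dF_2 = 0$ on a dense open set, hence everywhere, contradicting the assumed functional independence. This is essentially the same fiber-linearity argument used in Lemma~\ref{L:Ad-basis}, just with the bundle $\g_S$ replaced by $D$.

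The main obstacle, such as it is, is bookkeeping about where $f$ is defined and where $F_1$ vanishes: one should be slightly careful that $X_1$ being non-vanishing guarantees $F_1$ is a nonzero linear form on \emph{every} fiber (so $\{F_1=0\}$ is a proper subbundle, with dense open complement in $\M$), which is exactly what lets the pointwise proportionality propagate to a global statement. I would phrase the contradiction in terms of $dF_1\wedge dF_2$ vanishing identically, which makes the dependence on the (possibly only locally defined) function $f$ harmless. No other difficulty is expected; the lemma is the direct analogue of Lemma~\ref{L:Ad-basis} and the proof is short.
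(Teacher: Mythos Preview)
Your strategy matches the paper exactly: the paper gives no proof of this lemma, merely prefacing it with ``Analogously as Lemma~\ref{L:Ad-basis}'', and your argument is the natural transcription of that lemma's proof to the $D$-momentum setting. So at the level of approach there is nothing to compare.

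However, the step ``so $dF_1$ and $dF_2$ are proportional wherever $F_1\neq 0$'' does not follow from what precedes it. From $dF_2 = (\tau_\subM^*f)\,dF_1 + F_1\,d(\tau_\subM^*f)$ one only obtains $dF_1\wedge dF_2 = F_1\, dF_1 \wedge d(\tau_\subM^*f)$, and there is no reason for this to vanish: $d(\tau_\subM^*f)$ is semi-basic while $dF_1$ has nonzero vertical component, so generically they are \emph{independent}, not proportional. Concretely, on $Q=\mathbb{R}^2$ with $D=TQ$ take $X_1=\partial_{q_1}$ and $X_2=e^{q_1}\partial_{q_1}$ (both non-vanishing); then $F_1=p_1$, $F_2=e^{q_1}p_1$, and $dF_1\wedge dF_2 = e^{q_1}p_1\,dp_1\wedge dq_1$ is nonzero on the dense set $\{p_1\neq 0\}$, so $F_1,F_2$ are functionally independent while $X_1,X_2$ are everywhere parallel. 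Your fiber-linearity observation only shows that the \emph{fiberwise} restrictions of $dF_1$ and $dF_2$ are proportional, which is strictly weaker than functional dependence on $\M$.

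To be fair, the paper's own proof of Lemma~\ref{L:Ad-basis} makes the same leap, asserting without elaboration that $J_2=(\tau_\subM^*f)J_1$ ``contradicts the linear independence of $J_1$ and $J_2$''. What both arguments actually establish is $C^\infty(Q)$-linear dependence of the $F_i$ as fiberwise-linear functions, and this may well be the intended reading of ``independent'' in this context. If you want functional independence in the strict sense to force pointwise independence of the generators, the first-integral condition~(ii) of Definition~\ref{Def:D-momentum} must enter the argument (it forces $\tau_\subM^*f$ to be a basic first integral of $X_\nh$), and even then one needs an additional hypothesis excluding nonconstant basic conserved quantities.
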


If the nonholonomic system $(\M, \{\cdot, \cdot\}_\nh, H_\subM)$ has $\{F_1,\dots,F_p\}$ functionally independent {\it $D$-momenta}, then the associated generators $\{X_1,\dots,X_p\}$ define a (constant rank) distribution $\widetilde S$ on $Q$ given by 
\begin{equation*}
    \widetilde{S} = \textup{span} \{X_1,\dots,X_p\}.
\end{equation*}
Moreover, we can define the smooth function $\mathcal{F} := (F_1,\dots,F_p) : \M\to \mathbb{R}^p$ and the common 0-level set of the $D$-momenta is represented as the smooth submanifold  $\mathcal{F}^{-1}(0)\subset \M$.

\begin{lemma}\label{L:tildeH}
    Let $(\M, \pi_{\emph\nh}, H_\subM)$ be a $G$-invariant nonholonomic system satisfying the dimension assumption and having $k:=\textup{rank}(S)$ functionally independent and $G$-invariant $D$-momenta. If $\widetilde S^\perp\cap S = \{0\}$ (where the orthogonal complement is taken with respect to the kinetic energy metric $\kappa$), then the constant rank distribution  $\widetilde{H} := D\cap \tilde{S}^{\perp}$ satisfies that 
    $$
        TQ= \widetilde{H}\oplus V.
    $$
    Moreover,  $\mathcal{F}^{-1}(0) = \kappa^{\flat}(\widetilde{H})\subset \M$ is a $G$-invariant submanifold.
\end{lemma}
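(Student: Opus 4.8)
The plan is to reduce the whole statement to a fibrewise linear-algebra argument at each $q\in Q$, using $\kappa$-orthogonal splittings together with the dimension assumption and the freeness of the action. First I would record that, since each generator $X_i$ lies in $\Gamma(D)$, the distribution $\widetilde{S}=\textup{span}\{X_1,\dots,X_k\}$ is a rank-$k$ subbundle of $D$ (its rank is $k$ because the $D$-momenta $F_1,\dots,F_k$ are functionally independent, by the preceding lemma). Taking the $\kappa$-orthogonal complement of $\widetilde{S}$ \emph{inside} $D$ then gives a $\kappa$-orthogonal decomposition $D=\widetilde{H}\oplus\widetilde{S}$, so $\widetilde{H}=D\cap\widetilde{S}^\perp$ has constant rank $\dim D-k$.

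Next I would do the dimension count. By freeness of the $G$-action the rank of $V$ is constant and, since $D\cap V=S$ has rank $k$, the dimension assumption $TQ=D+V$ gives $\dim TQ=\dim D+\dim V-k$. Hence $\dim\widetilde{H}+\dim V=(\dim D-k)+\dim V=\dim TQ$, so to conclude $TQ=\widetilde{H}\oplus V$ it remains only to check $\widetilde{H}\cap V=\{0\}$. But any $v\in\widetilde{H}\cap V$ lies in $D\cap V=S$ and in $\widetilde{S}^\perp$, hence in $S\cap\widetilde{S}^\perp$, which is $\{0\}$ by hypothesis; this is precisely where the assumption $\widetilde{S}^\perp\cap S=\{0\}$ is used. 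Therefore $TQ=\widetilde{H}\oplus V$.

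For the identification $\mathcal{F}^{-1}(0)=\kappa^\flat(\widetilde{H})$ I would use that the Legendre map $\kappa^\flat\colon TQ\to T^*Q$ restricts to a ($G$-equivariant) diffeomorphism from $D$ onto $\M$, so every $m\in\M$ is $\kappa^\flat(v)$ for a unique $v\in D$, and then $F_j(m)=\langle m,X_j(q)\rangle=\kappa(v,X_j(q))$ with $q=\tau_\subM(m)$. Thus $m\in\mathcal{F}^{-1}(0)$ if and only if $v$ is $\kappa$-orthogonal to $X_1,\dots,X_k$, i.e.\ if and only if $v\in D\cap\widetilde{S}^\perp=\widetilde{H}$. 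Finally, the $G$-invariance of $\mathcal{F}^{-1}(0)$ is immediate from the assumed $G$-invariance of each $F_j$ (and the $G$-invariance of $\widetilde{H}$ itself then follows, if wanted, since $\kappa^\flat$ intertwines the tangent and cotangent lifts of the $G$-action).

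I do not expect a genuine obstacle here: the argument is a codimension count combined with a trivial-intersection check. The only point that needs care is to take all orthogonal complements \emph{within} $D$, so that $\widetilde{H}$ has constant rank, and to combine correctly the codimension of $\widetilde{S}$ in $D$ with the dimension assumption and the freeness of the action.
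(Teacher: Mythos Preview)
Your proof is correct and follows essentially the same approach as the paper: decompose $D=\widetilde{H}\oplus\widetilde{S}$ via $\kappa$-orthogonality, use the hypothesis $\widetilde{S}^\perp\cap S=\{0\}$ to get $\widetilde{H}\cap V=\{0\}$, and identify $\mathcal{F}^{-1}(0)$ via $F_j(\kappa^\flat(v))=\kappa(v,X_j)$. Your version is in fact more detailed than the paper's, which omits the explicit dimension count and simply asserts that $\widetilde{H}\cap V=\{0\}$ implies $TQ=\widetilde{H}\oplus V$.
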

\begin{proof}
    By definition of $D$-momentum,   $\widetilde{S}\subset D$ and therefore, $D=\widetilde{S} \oplus \widetilde{H}$ showing that $\widetilde{H}$ has constant rank.  Moreover, using that $\widetilde S^\perp\cap S = \{0\}$ we can check that $\widetilde{H}\cap V =\{0\}$ which implies that $TQ=\widetilde{H} \oplus V$.  

    Finally, we see that if $m \in \mathcal{F}^{-1}(0)$ then $F_i(m)=0$ for all $i=1,\dots,k$, which means that $0 = \langle m ,X_i\rangle = \kappa(v, X_i)$ for $m=\kappa^\flat(v)$.  Moreover  the submanifold $\mathcal{F}^{-1}(0)\subset \M$ is invariant whenever each of the ${D}$-momenta $F_i$ are $G$-invariant.  
\end{proof}

Therefore, under the hypothesis of Lemma~\ref{L:tildeH}, we can consider the $G$-Chaplygin system given by the Lagrangian $L$ and the constraint distribution $\widetilde{H}$. Following the theory of Chaplygin systems \cite{Koiller92} and Sec.~\ref{Ss:IdZeroLevel}, for $\M_0=\kappa^\flat(\widetilde{H})$ the reduced system takes place in  $\M_0/G$ which is diffeomorphic to $T^*(Q/G)$. 

Considering the principal bundle $\rho_0: \M_0\to \M_0/G$, and analogously as in \eqref{Eq:0-level-Diffeo}, we define the 2-form 
$B_{\mbox{\tiny{$\langle J\!,\! \!\widetilde{\mathcal{K}}\rangle$}}}$ on $\M_0/G$ such that \begin{equation*}
    \rho_0^*B_{\mbox{\tiny{$\langle J\!,\! \!\widetilde{\mathcal{K}}\rangle$}}} = \langle J, \widetilde{\mathcal{K}}\rangle,
\end{equation*}
for  $J:\M_0\to \g^*$ the (pull back to $\M_0$) of the canonical momentum map and $\widetilde{\mathcal{K}}$  the $\g$-valued 2-form on $\M_0$ given by $\widetilde{\mathcal{K}} = \tau_{\subM_0}^*\widetilde{K}$ with $\widetilde{K}$ the curvature associated to the principal connection $TQ=\widetilde{H}\oplus V$.

Let $\mathcal{C}_{\mbox{\tiny{$\widetilde{H}$}}}$ be the distribution on $\M_0$ given by the vector fields $X$ on $\M_0$ such that $T\tau_0(X)\in \Gamma(\widetilde{H})$, for $\tau_0:\M_0\to Q$, analogously as in \eqref{Def:C_H}.

\begin{theorem} \label{T:general0-level}
    Consider the nonholonomic system $(\M, \pi_{\emph\nh}, H_\subM)$ with a $G$-symmetry satisfying the dimension assumption. If the system has $k:=rank(S)$ functionally independent, $G$-invariant $D$-momenta such that $S\cap \tilde S^\perp = \{0\}$, then
    \begin{enumerate}
        \item[$(i)$] the reduced manifold $\mathcal{F}^{-1}(0)/G \subset \M/G$ is given by the common 0-level sets of the reduced $D$-momenta $\bar{F}_i$, where $\rho_0^*\bar{F}_i = F_i$ for $i=1,\dots,k$.
        
        \item[$(ii)$] There is an almost symplectic 2-form $\widetilde{\omega}$ on $\mathcal{F}^{-1}(0)/G$ so that $\rho_0^*\widetilde{\omega} |_{\mathcal{C}_{\mbox{\tiny{$\widetilde{H}$}}}} = \Omega_\subM|_{\mathcal{C}_{\mbox{\tiny{$\widetilde{H}$}}}}$, and for which the restriction $X_\emph\red^0$ of the reduced nonholonomic vector field $X_\emph\red$ to $\mathcal{F}^{-1}(0)/G$ is a Hamiltonian vector field with respect to restricted reduced Hamiltonian function $H_{\emph\red}^0$. That is, if $X_{\emph\red}^0 = X_{\emph\red}|_{\mathcal{F}^{-1}(0)/G}$ then
        $$
            {\bf i}_{X_{\emph\red}^0} \widetilde \omega = dH_{\emph\red}^0.
        $$
        
        \item[$(iii)$] There is a diffeomorphism $\phi: \mathcal{F}^{-1}(0)/G \to T^*(Q/G)$ such that \  $\widetilde\omega = \phi^{\ast} \omega_{\mbox{\tiny{\emph{can}}}}  - B_{\mbox{\tiny{$\langle J\!,\! \!\widetilde{\mathcal{K}}\rangle$}}}.$
    \end{enumerate}
\end{theorem}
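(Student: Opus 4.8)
The plan is to mimic, at the $0$-level set, the argument already carried out in Section~\ref{Ss:IdZeroLevel} for horizontal gauge momenta, replacing the bundle $\g_S\to Q$ by the distribution $\widetilde S$ of generators and the splitting \eqref{Eq:0-level-splitting} by $TQ=\widetilde H\oplus V$ from Lemma~\ref{L:tildeH}. The key observation is that, although the $D$-momenta $F_i$ are \emph{not} horizontal gauge momenta (their generators need not be vertical), the submanifold $\mathcal{F}^{-1}(0)=\kappa^\flat(\widetilde H)$ is exactly the constraint manifold of a genuine $G$-Chaplygin system — namely the one with Lagrangian $L$ and constraints $\widetilde H$ — because $\widetilde H\oplus V=TQ$. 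So the whole machinery of \cite{Koiller92} and \cite{Bates1993} for Chaplygin reduction applies verbatim to $(\M_0,\widetilde H)$ with $\M_0:=\mathcal{F}^{-1}(0)$.

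\textbf{Step 1 (part $(i)$).} For $m=\kappa^\flat(v)\in\M$ we have $F_i(m)=\langle m,X_i\rangle=\kappa(v,X_i)$, so $m\in\mathcal{F}^{-1}(0)$ iff $v\in \widetilde S^\perp\cap D=\widetilde H$, giving $\mathcal{F}^{-1}(0)=\kappa^\flat(\widetilde H)$ (this is already in Lemma~\ref{L:tildeH}). Since each $F_i$ is $G$-invariant, the reduced functions $\bar F_i$ on $\M/G$ are well-defined and $\mathcal{F}^{-1}(0)/G=\bigcap_i \bar F_i^{-1}(0)$; the quotient is clean because the $F_i$ are functionally independent, hence $\mathcal{F}^{-1}(0)$ has codimension $k$ and the free and proper $G$-action restricts to it (by $G$-invariance of $\mathcal{F}^{-1}(0)$).

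\textbf{Step 2 (part $(ii)$).} Here I invoke the standard Chaplygin reduction of $(\M_0,\widetilde H)$: the constraint $2$-section $\Omega_\subM|_{\mathcal{C}_{\widetilde H}}$ is nondegenerate, its kernel along $\mathcal{V}\cap\mathcal{C}_{\widetilde H}=\{0\}$ is trivial (because $\widetilde H\cap V=\{0\}$), so $\Omega_\subM|_{\mathcal{C}_{\widetilde H}}$ descends through $\rho_0:\M_0\to\M_0/G$ to a nondegenerate $2$-form $\widetilde\omega$ on $\M_0/G=\mathcal{F}^{-1}(0)/G$ with $\rho_0^*\widetilde\omega|_{\mathcal{C}_{\widetilde H}}=\Omega_\subM|_{\mathcal{C}_{\widetilde H}}$. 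That $\widetilde\omega$ is \emph{almost} symplectic (and not necessarily closed) is the generic Chaplygin phenomenon. For the dynamics: $X_\nh$ is tangent to $\mathcal{F}^{-1}(0)$ since the $F_i$ are first integrals, so $X_\red$ is tangent to $\mathcal{F}^{-1}(0)/G$; writing $X_\red^0=X_\red|_{\mathcal{F}^{-1}(0)/G}$ and using that the unconstrained Hamiltonian relation ${\bf i}_{X_\nh}\Omega_\subM|_{\mathcal C}=dH_\subM|_{\mathcal C}$ restricts to $\mathcal{C}_{\widetilde H}\subset\mathcal C_0$, one projects via $\rho_0$ and uses $T\rho_0(\mathcal{C}_{\widetilde H})=T(\mathcal{F}^{-1}(0)/G)$ to get ${\bf i}_{X_\red^0}\widetilde\omega=dH_\red^0$. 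This is the same computation as in the proof of Prop.~\ref{Prop:SymplFoliation}, transported to the $0$-level.

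\textbf{Step 3 (part $(iii)$).} Apply the Koiller/Marsden--Misiolek--Perlmutter--Ratiu identification $\widetilde\phi:\M_0\to T^*(Q/G)$, $\langle\widetilde\phi(m_q),T\rho_\subQ(v_q)\rangle:=\langle m_q,v_q\rangle$ for $v_q\in \widetilde H_q$, which is $G$-invariant and descends to a diffeomorphism $\phi:\mathcal{F}^{-1}(0)/G\to T^*(Q/G)$. The standard computation (as in \eqref{Eq:0-level-Diffeo}, now with the connection $TQ=\widetilde H\oplus V$ and its curvature $\widetilde K$) gives $\phi^*\omega_{\mbox{\tiny can}}=\widetilde\omega+B_{\mbox{\tiny{$\langle J\!,\!\widetilde{\mathcal{K}}\rangle$}}}$, i.e. $\widetilde\omega=\phi^*\omega_{\mbox{\tiny can}}-B_{\mbox{\tiny{$\langle J\!,\!\widetilde{\mathcal{K}}\rangle$}}}$, which is the claim.

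\textbf{Main obstacle.} The delicate point — and the only place where this result is genuinely more subtle than Section~\ref{Ss:IdZeroLevel} — is that here we are \emph{not} performing a gauge transformation: there is no $2$-form $B$ deforming $\pi_\nh$, and the generators $X_i$ are not infinitesimal generators of the $G$-action. So one cannot use the earlier momentum-bundle-map reduction (Theorem~\ref{T:NH_momentum_red}) as a black box; instead one must verify directly that $\mathcal{F}^{-1}(0)$ is invariant and coincides with the Chaplygin constraint manifold $\kappa^\flat(\widetilde H)$, which hinges on the transversality hypothesis $S\cap\widetilde S^\perp=\{0\}$ of Lemma~\ref{L:tildeH} to guarantee $\widetilde H\oplus V=TQ$. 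Everything else is then a routine transcription of the Chaplygin reduction theory, and the magnetic term $-B_{\mbox{\tiny{$\langle J\!,\!\widetilde{\mathcal{K}}\rangle$}}}$ emerges exactly as in the zero-level computation of Lemma~\ref{Lemma:0-levelChap} and Prop.~\ref{Prop:0-level-identification}.
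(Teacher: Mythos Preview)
Your proposal is correct and follows essentially the same approach as the paper: recognize that $(L,\widetilde H)$ is a $G$-Chaplygin system (since $TQ=\widetilde H\oplus V$ by Lemma~\ref{L:tildeH}), identify $\mathcal{F}^{-1}(0)=\kappa^\flat(\widetilde H)$ as its constraint manifold, and then invoke the standard Koiller/Bates--Sniatycki Chaplygin reduction to obtain $\widetilde\omega$, the Hamiltonian description of $X_\red^0$, and the identification with $(T^*(Q/G),\omega_{\mbox{\tiny can}}-\hat B_{\mbox{\tiny{$\langle J,\widetilde{\mathcal K}\rangle$}}})$. Your write-up is in fact more detailed than the paper's own proof, which simply states these facts and cites \cite{Koiller92}.
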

\begin{proof}
    If we consider the nonholonomic system given by the constraint distribution $\widetilde{H}$ and the Lagrangian $L$, we have a $G$-Chaplygin system since $TQ=\widetilde{H}\oplus V$ (due to the condition $\widetilde{S}^\perp\cap S =\{0\}$).  Since $\mathcal{F}^{-1}(0) = \kappa^\flat(\widetilde{H})$, according to \cite{Koiller92}, the submanifold $\mathcal{F}^{-1}(0)/G$ is diffeomorphic to the cotangent manifold $T^{\ast}(Q/G)$, and the  reduced nonholonomic vector field restricted to $\mathcal{F}^{-1}(0)/G$, denoted by $\tilde{X}_{\red}^{0}$, satisfies that ${\bf i}_{\tilde{X}_\red^0} ( \omega_{\mbox{\tiny{can}}} - \hat{B}_{\mbox{\tiny{$\langle J\!,\! \!\widetilde{\mathcal{K}}\rangle$}}} ) = dH_\red^0,$ where $H_\red^0$ is the restricted reduced Hamiltonian and $\hat{B}_{\mbox{\tiny{$\langle J\!,\! \!\widetilde{\mathcal{K}}\rangle$}}}$ is the 2-form on $T^*(Q/G)$ such that $\phi^*\hat{B}_{\mbox{\tiny{$\langle J\!,\! \!\widetilde{\mathcal{K}}\rangle$}}}= B_{\mbox{\tiny{$\langle J\!,\! \!\widetilde{\mathcal{K}}\rangle$}}}$.
\end{proof}

In this case, we obtain that, on the zero-level set of the conserved quantities, the dynamics is described by a almost symplectic manifold $(\mathcal{F}^{-1}(0)/G , \widetilde\omega)$ which is diffeomorphic to 
$$
    (T^*(Q/G),  \omega_{\mbox{\tiny{can}}} - \hat{B}_{\mbox{\tiny{$\langle J\!,\! \!\widetilde{\mathcal{K}}\rangle$}}} ),
$$
showing that we have also a Chaplygin leaf.  Moreover, the term $\hat{B}_{\mbox{\tiny{$\langle J\!,\! \!\widetilde{\mathcal{K}}\rangle$}}} $ is exactly the {\it gyroscopic term} defined in \cite{JCLuis2020} to describe Chaplygin systems.  

\begin{remark} 
    When the first integrals are horizontal gauge momenta (i.e., $\tilde S = S$), then the horizontal space $\widetilde{H}$ is just $S^\perp \cap D$ and then $\kappa^\flat(\widetilde{H}) = J_\nh^{-1}(0)$ recovering Sec.~\ref{Ss:IdZeroLevel}.
\end{remark}

\section{Examples} \label{S:Examples}

\subsection{Examples Revisited} 

In this section we revisit --very briefly-- two examples that were studied in \cite{PaulaGarciaToriZuccalli} in the context of Chaplygin systems with an extra symmetry.  Now, we deal with these examples using a one step reduction, that is, our starting point is the almost Poisson manifold $(\M, \pi_\subB)$ and using the nonholonomic momentum map associated to $\mathfrak{B}_{\mbox{\tiny{HGS}}}$ we perform a reduction illustrating Sec.~\ref{S:NHSystems}.

\medskip

\subsubsection*{Chaplygin ball} 
The Chaplygin ball is the classical example describing a ball rolling without sliding on a plane. We follow the notation of \cite{GN2009,Balseiro2021,BalseiroSansonetto22}.  The configuration manifold is $Q=SO(3)\times \mathbb{R}^2$ with coordinates $(g,(x,y))$ where $g$ is a rotation matrix (with $\alpha=(\alpha_1,\alpha_2,\alpha_3)$, $\beta=(\beta_1,\beta_2,\beta_3)$ and $\gamma=(\gamma_1,\gamma_2,\gamma_3)$ its rows) and $(x,y)\in \mathbb{R}^2$ represents the center of mass of the ball (that coincides with the geometric center).  The  Lagrangian is
$L((g,x,y), (\vecOm, \dot x, \dot y))= \tfrac{1}{2} \langle \mathbb{I}\vecOm, \vecOm \rangle +   \tfrac{m}{2}(\dot x^2 + \dot y^2)$, where $\vecOm = (\Omega_1, \Omega_2, \Omega_3)$ is the angular velocity in body coordinates, $\mathbb{I}$ the inertia tensor represented as a diagonal matrix and $m$ is the mass of the ball.  If  $\lambda = (\lambda_1, \lambda_2, \lambda_3)$ are the left-invariant Maurer Cartan 1-forms on $SO(3)$, then the constrains 1-forms are $\epsilon^x=dx - r\langle\beta,\lambda\rangle$ and $\epsilon^y=dy+r\langle\alpha,\lambda\rangle$ and therefore $D=\textup{span}\{\mathbb{X}_j:=X_j^L +R\beta_j\partial_x-R\alpha_j\partial_y\}$. The system is invariant by the action of the Lie group $G=SO(2)\times \mathbb{R}^2$.

Now following \eqref{Eq:splittingTQ}, we consider the basis $\mathfrak{B}_{\mbox{\tiny{$TQ$}}}$ adapted to the splitting $TQ=H\oplus S\oplus W$ given by $\mathfrak{B}_{\mbox{\tiny{$TQ$}}} = \textup{span}\{X_1,X_2, Y, \partial_x,\partial_y\}$ where $X_i = \mathbb{X}_i-\gamma_i Y$ for $i=1,2$, $Y=\langle \gamma, \mathbb{X}\rangle$ for $\mathbb{X}=(\mathbb{X}_1,\mathbb{X}_2,\mathbb{X}_3)$. Consider the coordinates $(M_1,M_2,M_3,p_x,p_y)$  on $T^*_qQ$ associated to the dual basis $\mathfrak{B}_{\mbox{\tiny{$T^*Q$}}} = \{\sigma^1, \sigma^2,\sigma^Y, \epsilon^x,\epsilon^y\}$ of $T^*Q$, then the constraint manifold $\M$ is determined by the coordinates $(g,x,y,M_1,M_2,M_3)$ (since $(p_x,p_y)$ depend linearly on $M_i$).
Since $\g\simeq \mathbb{R}\times \mathbb{R}^2$, then the section $\xi(g,x,y)= (1;-x,y)$ of the action bundle $Q\times \g\to Q$ generates the subbundle $\g_S\to Q$ and it defines the horizontal gauge momenta $J_\xi = {\bf i}_{\xi_\subM} \Theta_\subM= M_3$ (observe that $\xi_\subQ = Y$).

Following \cite{Balseiro2021}, and the notation in \eqref{Eq:splittingTM}, \eqref{Eq:localCoordTM}, the 2-form $B_1 = \langle J,\mathcal{K}_{\mbox{\tiny{$\mathcal{W}$}}}\rangle - J_\xi d\widetilde \epsilon^Y$ and we obtain that 
$\displaystyle{\pi_1= \mathcal{X}_1\wedge\partial_{M_1} + \mathcal{X}_2\wedge \partial_{M_2} - \tfrac{1}{\gamma_3}(M_1\gamma_1-M_2\gamma_2)\partial_{M_1}\wedge \partial_{M_2}. }$
The nonholonomic momentum map, given by $\langle J_\nh(m),\xi(q)\rangle = M_3$ is a momentum bundle map for $\pi_1$ associated to the basis $\mathfrak{B}_{\mbox{\tiny{HGS}}} = \{ \xi\}$ (since $\pi_1^\sharp(dM_3)=-\xi_\subM$). The $G$-invariant submanifold $J_\nh^{-1}(\mu)\subset \M$ (for $\mu = c{\bf e}^1$) is defined by $M_3=c$, and the almost Dirac structure $L_c^1= \iota_\mu^*(L_1) \subset \mathbb{T}(J_\nh^{-1}(\mu))$ is not the graph of a 2-form or a bivector field since the elements $(0,\epsilon^x)$ and $(-\xi_\subM,0)$ belong to $L_c^1$.  The push-forward of $L_c^1$ to the quotient manifold $J_\nh^{-1}(\mu)/G$ (with coordinates $(\gamma, M_1,M_2)$) gives the Dirac structure defined by the symplectic 2-form $\omega_c^1 = \sigma^1\wedge dM_1 + \sigma^2\wedge dM_2 + \tfrac{1}{\gamma_3}(M_1\gamma_1-M_2\gamma_2)\sigma^1\wedge \sigma^2$, illustrating Prop.\ref{Prop:L1red_simplectic}. 
Moreover, using these coordinates, it is straightforward to see that  $(J_\nh^{-1}(\mu)/G, \omega_c^1)$ is symplectomorphic to $(T^*S^2, \omega_{\mbox{\tiny{can}}} =  -d(M_1\sigma^1+ M_2\sigma^2))$.

Finally, since $\mathcal{B} = (mr^2\langle \gamma,\Omega\rangle + M_3)\gamma_3\sigma^1\wedge \sigma^2,$ from Theorems \ref{T:NH_momentum_red} and \ref{T:PB-mu-identification}, we conclude that restricted reduced nonholonomic dynamics $X_\red^c$ is a Hamiltonian vector field on the almost symplectic manifold
\begin{equation}\label{Eq:Ex:ChapSphere}
(T^*S^2, \omega_{\mbox{\tiny{can}}} =  -d(M_1\sigma^1+ M_2\sigma^2) + \hat{\mathcal{B}}_c),
\end{equation}
where $\hat{\mathcal{B}}_c = \left( \tfrac{mr^2}{Y(\gamma)}\langle A^{-1} M, \gamma\rangle + \tfrac{c}{Y(\gamma)}\right) \gamma_3 \, \sigma^1\wedge \sigma^2,$ for $Y(\gamma)= 1-mr^2\langle A^{-1}\gamma, \gamma\rangle$, $A=\mathbb{I} +mr^2Id$ and $M= (M_1, M_2, \tfrac{-M_1\gamma_1 -M_2\gamma_2}{\gamma_3})$. We stress the Chaplygin spirit of \eqref{Eq:Ex:ChapSphere} and the fact that these are the leaves of the bivector field $\pi_\red^\subB$ that describes the reduced dynamics.

\medskip

\subsubsection*{Solids of revolution rolling on a plane} 
Consider a solid of revolution that is invariant around the vertical axis and therefore the inertia matrix $\mathbb{I}$ has principal moment of inertia $I_1=I_2$ and $I_3$. Keeping the notation of the previous example, the orientation of the solid is represented by the rotational matrix $g\in SO(3)$  and by $(x,y,z)\in\mathbb{R}^3$ the position of the center of mass.  
Since the solid rolls on a plane, the configuration manifold is $Q=\{(g,x,y,z) : z=-\langle \gamma,s\rangle\}$ where $s:S^2\to \mathfrak{S}$ is the map from $S^2$ to the surface of the body $\mathfrak{S}$ given by $s(\gamma) = (\varrho(\gamma_3)\gamma_1, \varrho(\gamma_3)\gamma_2,\upsilon(\gamma_3))$ for $\varrho,\upsilon$ smooth functions depending on the shape of the body, see \cite[Sec.~6.7.1]{Cushman2010}.  
The nonsliding constraints are given by the annihilator of the 1-forms $\epsilon^x= dx - \langle \alpha, s\times \lambda\rangle$ and $\epsilon^y = dy - \langle \beta,s\times \lambda\rangle$ and $D$ is generated by $\mathbb{X}_i := X_i^L +(\alpha\times s)_i\partial_x+(\beta\times s)_i\partial_y + (\gamma\times s)_i\partial_z$, for $i=1,2,3$.  
The Lie group $G=S^1\times SE(2)$ is a symmetry of the nonholonomic system when $\gamma_3\neq 1$ (since we want free and proper actions), see \cite{Cushman2010}. 

From \cite{Balseiro2021}, we recall that $S= \textup{span}\{\tilde Y_1 = -\mathbb{X}_3, \tilde Y_2=\langle \gamma, \mathbb{X}\rangle\}$.
Having in mind that $\g\simeq \mathbb{R}\times \mathbb{R}\times \mathbb{R}^2$, we observe that $\zeta_1 = (1; 0,h_1)$ and $\zeta_2 = (0; 1,h_2)$ are sections generating the bundle $\g_S\to Q$, where $h_1=h_1(g,x,y) = (y+\varrho\beta_3,-x-\varrho\alpha_3)$ and $h_2=h_2(g,x,y) = (y-R\beta_3,-x+R\alpha_3)$, for $R=\varrho.\gamma_3 - \upsilon$, that is, $(\zeta_i)_\subQ = \tilde Y_i$ for $i=1,2.$
Associated to the splitting $TQ=H\oplus S \oplus W$, we consider the following basis of $TQ$ defined by $\mathfrak{B}_{\mbox{\tiny{$TQ$}}} = \{X_0 =\gamma_1\mathbb{X}_2-\gamma_2\mathbb{X}_1, \widetilde Y_1, \widetilde Y_2, \partial_x,\partial_y\}$ with dual basis denoted by  $\mathfrak{B}_{\mbox{\tiny{$T^*Q$}}} = \{\sigma^0, \widetilde{Y}^1, \widetilde{Y}^2, \epsilon^x,\epsilon^y\}$. If $(p_0,p_1,p_2, p_x,p_y)$ are the coordinates of $T_q^*Q$ associated to our chosen basis, then  $(g,(x,y), p_0,p_1,p_2)$ determine the coordinates on $\M$. 
The symmetries also induce two horizontal gauge momenta $J_1, J_2$ that are given by $J_i = f_i^1 p_1 + f_i^2p_2$ for $i=1,2$ and where $f_i^j$ are smooth functions on $Q$ determined by a ODE (linear) system (see \cite{BalseiroSansonetto22,Cushman2010}). Then the horizontal gauge symmetries are given by sections $\xi_1$, $\xi_2$ so that $\xi_i = f_i^1 \zeta_1 + f_i^2\zeta_2$ and then $\mathfrak{B}_{\mbox{\tiny{HGS}}} = \{\xi_1, \xi_2\}$.  

Since $\textup{rank}(H)=1$ then $B=B_1$ and then we observe that $\pi_1=\pi_\subB = \mathcal{X}_0\wedge \partial_{p_0} + \mathcal{Y}_i\wedge \partial_{J_i}$, where $\mathcal{X}_0\in \Gamma(\mathcal{H})$ as in \eqref{Eq:localCoordTM} and $\mathcal{Y}_i = (\xi_i)_\subM$. 
As Prop.~\ref{Prop:Jnh-MomMap:MBM} guarantees, the nonholonomic momentum map $J_\nh:\M\to \g_S^*$ given by $J_\nh(m) = p_1\nu^1 + p_2\nu^2$ --where $\nu^1 = (1;0,0)$ and $\nu^2=(0;1,0)$ are sections on $\g_S^*\to Q$ dual to the basis $\{\zeta_1,\zeta_2\}$-- is a momentum bundle map for $\pi_\subB$ associated to $\mathfrak{B}_{\mbox{\tiny{HGS}}}$,  since we can check that $\pi_\subB^\sharp(dJ_i) = - (\xi_i)_\subM$.  
Now, if $\mu\in\g^*_S$ given by $\mu=c_1\mu^1+c_2\mu^2$ for $\{\mu^1, \mu^2\}$ the dual basis of $\mathfrak{B}_{\mbox{\tiny{HGS}}}$ then $J^{-1}_\nh(\mu) = \{(g,(x,y), p_0,p_1,p_2)\in\M : f_i^1p_1+f_i^2p_2 = c_i\}$ is a $G$-invariant submanifold of $\M$.  Moreover, observe now that $L_\mu^1 = L_\mu^\subB = \textup{span} \{(\partial_{p_0}, \sigma^0),(\mathcal{X}_0,dp_0),(\mathcal{Y}_i,0), (0,\epsilon^x),(0,\epsilon^y)\}$ and therefore $(L_\mu^\subB)_\red = \textup{span} \{(\partial_{p_0}, \sigma^0),(\mathcal{X}_0,dp_0)\}$ which is clearly the graph of the symplectic 2-form $\omega_\mu^\subB = \sigma^0\wedge dp_0$. 

Moreover,  we can also observe that $J^{-1}_\nh(\mu)/G$ has coordinates $(\gamma_3,p_0)$ and therefore it is diffeomorphic to $T^*S^1$ with the canonical symplectic form:  $\omega_\mu^\subB = \omega_{\mbox{\tiny{can}}} = -d(p_0\sigma^0)$. 
From here we conclude that that the reduced dynamics is described on each leaf by the canonical symplectic manifold $(T^*S^1,\omega_{\mbox{\tiny{can}}} )$ which are also the leaves of the symplectic foliation associated to the Poisson bracket $\pi_\red^\subB$ that describes the reduced dynamics.

\subsection{The homogeneous ball in a convex surface of revolution}

In this section, we study the momentum bundle map reduction process of the nonholonomic system consisting of a homogeneous sphere rolling without slipping on a convex surface of revolution, \cite{Balseiro2021, BorisovKilinMamaev2002,FGS2005,Hermans_1995}. In this section, we follow the approach in \cite{Balseiro2021}, where after a gauge transformation, the reduction by symmetries of the nonholonomic system admits a Poisson description $\pi_\red^\subB$ with 2-dimensional leaves given by the common level sets of the two horizontal gauge momenta. As a consequence of Theorem \ref{T:PB-mu-identification}, we will give a new perspective of these leaves bringing a canonical symplectic description.  This system satisfies the conserved quantity assumption and following \cite{Balseiro2021}, we consider the gauge transformation of a 2-form $B$ and show that the nonholonomic momentum map is a momentum map for $\pi_\subB$.
Observe that this example cannot be studied using the theory developed in \cite{PaulaGarciaToriZuccalli} since its group of symmetries can not be decomposed so that the system is Chaplygin with an extra symmetry.  

More precisely, consider the nonholonomic system given by a balanced and dynamically symmetric (homogeneous) sphere of mass $m$ and radius $R$ rolling without slipping and under the influence of gravity inside a convex surface of revolution $\Sigma \subset \mathbb{R}^{3}$ (whose vertical axis of symmetry is parallel to the gravitational force). The sphere being balanced means that its center of mass $C$ coincides with its geometrical center, and being dynamically symmetric means that its inertia tensor can be written in the form $\mathbb{I} = I\cdot \textup{Id}$ in an adapted frame, where $I > 0$
is constant and $\textup{Id}$ is the $3\times3$ identity matrix. 

The configuration manifold $Q$ is $\mathbb{R}^{2}\times SO(3)$ with coordinates $(x,y,g)$, where $(x,y)$ denotes the coordinates of the center of mass $C$ projected onto the plane $z=0$, and $g$ denotes the orthogonal matrix relating the {\it space frame} (with origin at $O$) and the {\it body frame} (a moving orthonormal frame attached to the ball with origin at $C$). See Figure \ref{fig:HomoConvex}. 

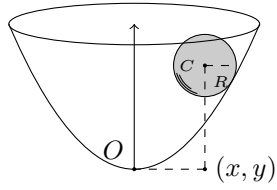
\begin{figure}[h!]
    \centering
    \begin{tikzpicture}
        \begin{scope}[shift={(0,0)}, scale=0.55]
            \fill[black!20] (1.69,0.5) circle (0.75cm);
            \draw (-3,1.5) parabola bend (0,-2) (3,1.5);
            \draw (0,1.5) ellipse (3cm and 0.5cm);
            \draw (1.69,0.5) circle (0.75cm); 
            \draw[fill] (1.69,0.5) circle (1pt) node[left] {\tiny$C$}; 
            \draw[fill] (0,-2) circle (1pt) node[above left] {\small$O$}; 
            \draw[dashed] (1.69,0.5) -- node[below] {\tiny$R$} (2.44,0.5); 
            \draw[->] (0,-2) -- (0,1.5); 
            \draw[rotate around={15:(1.69,0.5)}](0.99,0.5) arc (180:240:0.7cm);
            \draw[rotate around={20:(1.69,0.5)}](1.04,0.5) arc (180:220:0.65cm);
            \draw[dashed] (1.69,0.5) -- (1.69,-2); 
            \draw[dashed] (1.69,-2) -- (0,-2); 
            \draw[fill] (1.69,-2) circle (1pt) node[right] {\small$(x,y)$}; 
        \end{scope}
    \end{tikzpicture}
    \vspace{-0.5cm}
    \caption{The homogeneous ball in a convex surface of revolution.}
    \label{fig:HomoConvex}
\end{figure}

The surface $\Sigma$ is generated by rotating --around the $z$-axis-- the graph of a convex and smooth function $\phi : \mathbb{R}^{+}\rightarrow\mathbb{R}$. Then the equation characterizing $\Sigma$ is given by $z = \phi(x^{2}+y^{2})$. To guarantee smoothness and convexity of the surface, we assume that $\phi$ verifies that $\phi^{\prime}(0^{+}) = 0$, $\phi^{\prime}(s) > 0$ and $\phi^{\prime\prime}(s) > 0$, when $s >0$. Furthermore, to ensure the ball has only one contact point with the surface, the curvature of $\phi(s)$ must not exceed $1/R$. Denote by $\mathbf{n} = \mathbf{n}(x,y)$ the exterior unit normal vector to $\Sigma$ at the contact point with coordinates $(n_1,n_2,n_3)$. 
Let us denote by $\omega = (\omega_{1},\omega_{2},\omega_{3})$ the angular velocity associated to the right invariant frame $\{X_{1}^{R},X_{2}^{R},X_{3}^{R}\}$ of $TSO(3)$. Then, if  $a_{g}$ denotes the gravity acceleration, the Lagrangian of the system is
    $\displaystyle{L(x,y,g,\dot{x},\dot{y},\omega) = \tfrac{I}{2}\langle \omega,\omega\rangle+\tfrac{m}{2n_{3}^{2}}\left\{(1-n_{2}^{2})\dot{x}^{2}+(1-n_{1}^{2})\dot{y}^{2}+2n_{1}n_{2}\dot{x}\dot{y}\right\}-ma_{g}\phi.}$

\medskip

\noindent{\bf The nonholonomic constraints.} The nonholonomic constraint equations given by the non-sliding condition are 
$\dot{x} = -R(\omega_{2}n_{3}-\omega_{3}n_{2})$ and $\dot{y} = -R(\omega_{3}n_{1}-\omega_{1}n_{3}).$
Thus the constraint 1-forms are given by \begin{equation}\label{Eq:homocvxConsteq} 
\epsilon^{1} = dx-R(n_{2}\rho_{3}-n_{3}\rho_{2})\quad\textup{and}\quad \epsilon^{2} = dy-R(n_{3}\rho_{1}-n_{1}\rho_{3}), 
\end{equation}
where we denote by $\{\rho_{1},\rho_{2},\rho_{3}\}$ the right Maurer-Cartan 1-forms on $SO(3)$ and by $\{X_{1}^{R},X_{2}^{R},X_{3}^{R}\}$ its dual basis of right invariant vector fields.  Then the constraint distribution $D$, defined by the annihilator of $\epsilon^{1}$ and $\epsilon^{2}$, is given by 
$$
    D = \textup{span}\left\{Y_{x} := \partial_{x}+\tfrac{1}{Rn_{3}}(n_{2}Y_{n}-X_{2}^{R}),\, Y_{y} := \partial_{y}-\tfrac{1}{Rn_{3}}(n_{1}Y_{n}-X_{1}^{R}),\,Y_{n}\right\}, 
$$
where $Y_{n} := \langle {\bf n}, {\bf X}^R\rangle = n_1X_1^R +n_2X_2^R + n_3X_3^R$ for ${\bf X}^R = (X_{1}^{R},X_{2}^{R},X_{3}^{R})$.

\medskip

\noindent{\bf The symmetries and the splitting of $TQ$.} Consider the action of the Lie group $G = SO(2)\times SO(3)$ on $Q$ given, at each $(x,y,g)\in Q$, by $(h_{\theta},h)\cdot(x,y,g) := (h_{\theta}(x,y)^{T},\tilde{h}_{\theta}gh^{T}),$
where $h_\theta$ is the $2\times 2$ rotation matrix of angle $\theta$ and $\tilde{h}_{\theta}$ denotes the $3\times3$ rotation matrix of angle $\theta$ with respect to the $z$-axis. The Lagrangian and the constraints equations are invariant with respect to the lifted action on $TQ$. 
The Lie algebra $\mathfrak{g}$ of $G$ is isomorphic to $\mathbb{R}\times\mathbb{R}^{3}$ with infinitesimal generators given by
$$
    (1,\mathbf{0})_{Q} = -y\partial_{x}+x\partial_{y}+X_{3}^{R} \quad\textup{and}\quad (0,\mathbf{e}_{i})_{Q} = \alpha_{i}X_{1}^{R}+\beta_{i}X_{2}^{R}+\gamma_{i}X_{3}^{R},\quad\textup{for}\quad  i= 1,2,3,
$$
where $\mathbf{e}_{i}$ denotes the $i$-th element of the canonical basis of $\mathbb{R}^{3}$ and $\alpha = (\alpha_{1},\alpha_{2},\alpha_{3})$, $\beta = (\beta_{1},\beta_{2},\beta_{3})$ and $\gamma = (\gamma_{1},\gamma_{2},\gamma_{3})$ the rows of the matrix $g\in SO(3)$. 
In this section we will always assume that  $(x,y) \neq (0,0)$ so that we have a free and proper action (observe that the action is free whenever $(x,y) \neq (0,0)$).

Following Sec.~\ref{S:Identification}, we consider the splitting $TQ=H\oplus S\oplus W$ given by 
\begin{equation}\label{Eq:ExBall-splitting}
    \begin{split}
        H & =\textup{span} \{X_{0} := xY_{x}+yY_{y}\}, \qquad S = \textup{span} \{ Y := -yY_{x}+xY_{y},\, Y_{n}\},\\ 
        W & = \textup{span}\{ Z_{1} := \tfrac{1}{Rn_{3}}(X_{2}^{R}-n_{2}Y_{n}), Z_{2} := -\tfrac{1}{Rn_{3}}(X_{1}^{R}-n_{1}Y_{n})\}.
    \end{split}
\end{equation}
Associated to this adapted basis, we consider the dual basis $\mathfrak{B}_{\mbox{\tiny{$T^{\ast}Q$}}} = \left\{X^{0}, \alpha_Y, \rho_{n}, \epsilon^{1},\,\epsilon^{2}\right\}$, where 
$X^{0} := \frac{xdx+ydy}{x^{2}+y^{2}}$, $\alpha_Y := \frac{xdy-ydx}{x^{2}+y^{2}}$ and $\rho_n = \langle {\bf n} , \rho\rangle$ for $\rho = (\rho_1,\rho_2,\rho_3)$. 
We denote by $(p_{0},p_{Y},p_{n},M_{1},M_{2})$ the associated coordinates in $T^{\ast}Q$.  The constraint manifold $\M\subset T^*Q$ is given by the relations
$$
    M_{1} = \tfrac{-I}{I+mR^{2}}\left(\tfrac{xp_{0}-yp_{Y}}{x^{2}+y^{2}}\right)\quad\textup{and}\quad M_{2} = \tfrac{-I}{I+mR^{2}}\left(\tfrac{xp_{Y}+yp_{0}}{x^{2}+y^{2}}\right). 
$$

\noindent {\bf The horizontal gauge momenta.}  From \eqref{Eq:ExBall-splitting}, we may observe that $H=D\cap S^\perp$ but the chosen basis of $S$ does not generate the conserved quantities.  
In fact, this example has two horizontal gauge momenta (see e.g., \cite{BalseiroSansonetto22, Fasso2007}) that cannot be explicitly written. 
More precisely, it was proven in \cite{BalseiroSansonetto22, Fasso2007} that the horizontal gauge momenta of the system are the functions $J_1$ and $J_2$ defined on $\M$ given, respectively, by
$$
    J_1 = f_1^Y p_Y + f_1^np_n \qquad \mbox{and} \qquad J_2 = f_2^Y p_Y + f_2^np_n,
$$
where $f^Y_i, f_i^n$, for $=1,2$ are $G$-invariant functions on $Q$ which are defined as the (two independent) solutions of a linear system of ordinary differential equations: 
\begin{equation}\label{Eq:BallSurface:ODE}
    R_{12} f^n_i = X_0(f_i^Y) \qquad \mbox{and} \qquad R_{21} f^Y_i = X_0(f_i^n),
\end{equation}
where $R_{12}$ and $R_{21}$ are $G$-invariant functions on $Q$. 
We consider the basis of sections $\mathfrak{B}_{\g_S} = \{ \eta, \eta_n\} $ of the bundle $\g_S\to Q$ where $\eta_\subQ = Y$ and $(\eta_n)_\subQ = Y_n$.   
Following Def.~\ref{Def:HGM} and Lemma~\ref{L:Ad-basis}, the associated basis of horizontal gauge symmetries is given by 
$$
    \mathfrak{B}_{\mbox{\tiny{HGS}}} = \left\{\xi_{1} :=f_1^Y \eta + f_1^n\, \eta_n,\, \,  \xi_{2} := f_2^Y \eta + f_2^n\, \eta_n\right\}.
$$
For $\{ \nu, \nu^n\}$ the basis of section of the bundle $\g_S^*\to Q$  dual of $\mathfrak{B}_{\g_S}$, the nonholonomic momentum bundle map $J_\nh:\M\to \g_S^*$ is written, for $m=(x,y,g,p_0,p_Y,p_n)\in \M$, as 
$$
    J_\nh(m) = p_Y\nu + p_n\nu^n. 
$$
Since  $\pi_\nh^\sharp(dJ_i) \neq -(\xi_i)_\subM$ for $i=1,2,$ we conclude that $J_\nh:\M\to \g^*_S$ is not a momentum map for $\pi_\nh$ associated to the basis $\mathfrak{B}_{\mbox{\tiny{HGS}}}$.

\medskip

\noindent {\bf The gauge transformation and the nonholonomic momentum bundle map.}
Following \cite{Balseiro2021}, and the formulation \eqref{Eq:B1} we obtain the 2-form $B = B_1$ on $\M$ ($\mathcal{B}=0$ since $\textup{rank}(H)=1$)
$$
    B =  \langle J,  {\mathcal K}_{\mbox{\tiny{${\mathcal W}$}}} \rangle - p_Y R_{12}  {\mathcal X}^0\wedge\rho_n - p_n R_{21}  {\mathcal X}^0\wedge \alpha_{\mathcal Y} + p_n d\widetilde\rho_n,
$$
where $\mathfrak{B}_{\mbox{\tiny{$T^*\!\M$}}} = \{{\mathcal X}^0, \alpha_{\mathcal Y}, \widetilde\rho_n, \widetilde\epsilon^1, \widetilde\epsilon^2,dp_0,dp_Y, dp_n\}$ is a basis of $T^*\M$ given by ${\mathcal X}^0 =\tau_\subM^*X^0$, $\alpha_{\mathcal Y} = \tau_\subM^*\alpha_Y$, $\widetilde\rho_n= \tau_\subM^* \rho_n$ and $\widetilde\epsilon^a = \tau_\subM^*\epsilon^a$. Now, if $\mathfrak{B}_{\mbox{\tiny{$T\M$}}} = \{{\mathcal X}_0, {\mathcal Y}, \mathcal{Y}_n, \mathcal{Z}_1, \mathcal{Z}_2, \partial_{p_0},\partial_{p_Y}, \partial_{p_n}\}$ is the dual basis of $\mathfrak{B}_{\mbox{\tiny{$T^*\M$}}}$, we can write the gauge related bivector field $\pi_\subB$ as 
\begin{equation} \label{Eq:BallSurface:PiB}
   \pi_\subB = {\mathcal X}_0\wedge \partial_{p_0} + \mathcal{Y}\wedge \partial_{p_Y} + \mathcal{Y}_n \wedge \partial_{p_n} + p_n R_{21} \partial_{p_0}\wedge \partial_{p_Y} + p_Y R_{12}\partial_{p_0}\wedge \partial_{p_n}. 
\end{equation}

The bivector field $\pi_\subB$ not only describes the nonholonomic dynamics (as it was proven in \cite{Balseiro2021}), but also it is straightforward to check, using \eqref{Eq:BallSurface:ODE}, that the nonholonomic momentum bundle map $J_\nh:\M\to \g_S^*$ is the momentum map of $\pi_\subB$ associated to $\mathfrak{B}_{\mbox{\tiny{HGS}}}$, since $\pi_\subB^\sharp(dJ_i) = -(\xi_i)_\subM$ for $i=1,2$ (in concordance with Prop.~\ref{Prop:Jnh-MomMap:MBM}).   

\medskip

\noindent {\bf The momentum reduction and the Chaplygin foliation.}
Consider the basis $\mathfrak{B}_{\mbox{\tiny{HGS}}}^{\ast} := \{\mu^{1},\mu^{2}\}$ of $\g_S^*\to Q$ dual basis to $\mathfrak{B}_{\mbox{\tiny{HGS}}}$ and let $\mu := c_{1}\mu^{1}+c_{2}\mu^{2}$ where $c_1, c_2\in \mathbb{R}$. From \eqref{Eq:NHLevelSet}, we can foliate ${\mathcal{M}}$ by the connected components of the $G$-invariant submanifolds ${J}_{\nh}^{-1}(\mu)\subset{\mathcal{M}}$ given by ${J}_{\nh}^{-1}(\mu) = {J}_{1}^{-1}(c_{1})\cap{J}_{2}^{-1}(c_{2}) = \left\{(x,y,g,p_{0},p_{Y},p_{n})\in {\mathcal{M}};\, f_i^Y p_Y + f_i^np_n = c_{i},\,i=1,2\right\}.$
Following Section~\ref{Ss:Nonholonomic_momentum_reduction}, the backward image of $\pi_\subB$ by the inclusion $\iota_\mu:J^{-1}_\nh(\mu)\to \M$ defines the almost Dirac structure 
$$
    L_\mu^\subB = \textup{span}\{(\partial_{p_0}, {\mathcal X}^0), (-\mathcal{X}_0   , d p_0), (-(\xi_1)_\subM, 0 ) , (-(\xi_2)_\subM, 0) , (0 , \epsilon^a)\}  \subseteq \mathbb{T}(J^{-1}_\nh(\mu)),
$$
where we are changing the coordinates so that the bivector field $\pi_\subB$ is written as $\pi_\subB={\mathcal X}_0\wedge \partial_{p_0} + (\xi_i)_\subM\wedge \partial_{J_i}$.

The reduction to the orbit manifold ${\mathcal{M}}/G$ has orbit projection $\rho_{\mbox{\tiny{${\mathcal{M}}$}}} : {\mathcal{M}}\rightarrow {\mathcal{M}}/G$ given by 
$\rho_{\mbox{\tiny{${\mathcal{M}}$}}}(x,y,g,p_{0},p_{Y},p_{n}) = (\tau,{p}_{0},{p}_{Y},{p}_{n}),$
where $\rho_{\mbox{\tiny{${\mathcal{M}}$}}}^{\ast}\tau = x^{2}+y^{2}$.  Let us observe that $T\rho_{\mbox{\tiny{${\mathcal{M}}$}}}(\mathcal{X}_0)= 2 \tau \partial_\tau$ and also $\rho^*_{\mbox{\tiny{${\mathcal{M}}$}}} \tfrac{1}{2\tau} d\tau = \mathcal{X}^0$.   Making the change of coordinates $\tilde \tau= -\tfrac{1}{2}\ln(\tau)$, we explicitly show the diffeomorphism between the reduced manifolds $J_\nh^{-1}(\mu)/G$ and $T^*(\mathbb{R})$. Moreover, the forward image of $L_\mu^\subB$ by the projection $\rho_\mu : J_\nh^{-1}(\mu) \to J_\nh^{-1}(\mu)/G$ defines the reduced Dirac structure $(L^\subB_\mu)_{\red} = \textup{span}\{(-\partial_{p_0},  d\tilde\tau), ( \partial_{\tilde\tau}, dp_{0})\} \subseteq \mathbb{T}(J^{-1}_\nh(\mu)/G),$ which in agreement with Theorem \ref{T:NH_momentum_red} (in particular Prop.~\ref{Prop:L1red_simplectic}), it is a symplectic 2-form.   That is $(J_{\nh}^{-1}(\mu)/G, (L^\subB_\mu)_{\red})$ is diffeomorphic to $(T^*(\mathbb{R}), \omega_{\mbox{\tiny{can}}})$ where 
$$
    \omega_{\mbox{\tiny{can}}} = d\tilde \tau \wedge dp_0.
$$

We finally conclude that the canonical symplectic manifolds $(T^*(\mathbb{R}), \omega_{\mbox{\tiny{can}}})$ are the symplectic leaves of the reduced bivector field $\pi_\red^\subB$, induced by \eqref{Eq:BallSurface:PiB}. Therefore, we conclude that the reduced dynamics of the example presented is described by ``canonical symplectic leaves''.  

\subsection{Chaplygin ball over a fixed sphere}
In this section, we will study the momentum map reduction process of the example that inspired Sec.~\ref{S:general-0level}: the Chaplygin ball rolling over a sphere \cite{BorisovFedorov1995}.  More precisely, consider a sphere with a inhomogeneous distribution of mass of radius $r_0$ whose geometric center coincides with its center of mass.
Following  \cite{Fedorov2008,BorisovMamaev2002, BorisovMamaev2008, Tsiganov2011}, the ball
is allowed to roll without sliding over a fixed sphere of radius $R_0$ in three possible configurations:

\begin{enumerate}
    \item[$(a)$] The moving ball, with center $C$, is inside the fixed sphere, with center $O$, and necessarily $r_{0} < R_{0}$. This means the contact point is on the exterior of the moving ball but inside the fixed sphere; 
    \item[$(b)$] The moving ball is on the exterior of the fixed sphere, meaning the contact point is on the exterior of both spheres; 
    \item[$(c)$] The fixed sphere is inside the moving ball (and necessarily $R_{0} < r_{0}$), meaning the contact point is on the interior of the moving sphere and on the exterior of the fixed sphere. 
\end{enumerate}
\vspace{-0.7cm}
\begin{figure}[h!]
    \centering
    \begin{tikzpicture}
    \begin{scope}[shift={(0,0)}, scale=0.7]
        \fill[black!20] (0,1.25) circle (0.75cm);
        \draw (0,0) circle (2cm); 
        \draw[fill] (0,0) circle (1pt) node[left] {\small$O$}; 
        \draw[dashed] (0,0) -- node[below] {\small$R_{0}$} (2,0); 
        
        \draw (0,1.25) circle (0.75cm); 
        \draw[fill] (0,1.25) circle (1pt) node[above] {\tiny$C$}; 
        \draw[dashed] (0,1.25) -- node[right] {\tiny$\quad r_{0}$} (0.75,1.25); 
        \draw[rotate around={15:(0,1.25)}] (-0.7,1.25) arc (180:240:0.7cm);
        \draw[rotate around={20:(0,1.25)}] (-0.65,1.25) arc (180:220:0.65cm);

        \draw[fill] (0,2) circle (1pt) node[above] {\small$P$}; 
        \draw[fill] (0,-2) circle  node[below] {$(a)\,k>1$}; 
    
    \end{scope}

    \begin{scope}[shift={(5,0)}, scale=0.7]
        \fill[black!20] (0,1.5) circle (0.5cm);
        \draw (0,-0.5) circle (1.5cm); 
        \draw[fill] (0,-0.5) circle (1pt) node[left] {\small$O$}; 
        \draw[dashed] (0,-0.5) -- node[below] {\small$R_{0}$} (1.5,-0.5); 
        
        \draw (0,1.5) circle (0.5cm); 
        \draw[fill] (0,1.5) circle (1pt) node[above] {\tiny$C$}; 
        \draw[dashed] (0,1.5) -- node[right] {\tiny$\;\;r_{0}$} (0.5,1.5); 
        \draw[rotate around={15:(0,1.5)}] (-0.45,1.5) arc (180:240:0.45cm);
        \draw[rotate around={20:(0,1.5)}] (-0.4,1.5) arc (180:220:0.40cm);

        \draw[fill] (0,1) circle (1pt) node[below] {\small$P$}; 
        \draw[fill] (0,-2) circle  node[below] {$(b)\,0<k<1$}; 
    \end{scope}

    \begin{scope}[shift={(10,0)}, scale=0.7]
        \fill[black!20] (0,0) circle (2.1cm);
        \fill[white] (0,0) circle (2cm);
        
        \draw (0,0) circle (2cm); 
        \draw[fill] (0,0) circle (1pt) node[left] {\small$C$}; 
        \draw[dashed] (0,0) -- node[below] {\small$r_{0}$} (2,0); 
        \draw[rotate around={15:(0,0)}] (-1.95,0) arc (180:240:1.95cm);
        \draw[rotate around={20:(0,0)}] (-1.90,0) arc (180:220:1.90cm);
        
        \draw (0,1.25) circle (0.75cm); 
        \draw[fill] (0,1.25) circle (1pt) node[above] {\tiny$O$}; 
        \draw[dashed] (0,1.25) -- node[right] {\tiny$\quad R_{0}$} (0.75,1.25); 

        \draw[fill] (0,2) circle (1pt) node[above] {\small$P$}; 

    \draw[fill] (0,-2) circle  node[below] {$(c)\,k<0$}; 
    \end{scope}
    \end{tikzpicture}
    \vspace{-0.5cm}
    \caption{Configurations of Chaplygin ball over a fixed sphere.}
    \label{fig:Chaplyginballinball}
\end{figure}
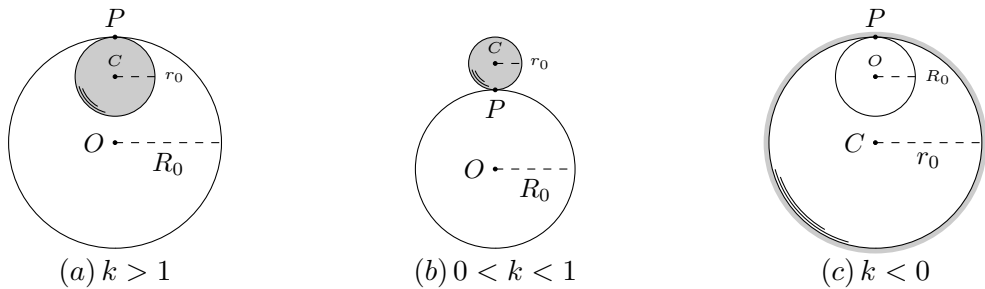
Let us define 
$$
    k := \tfrac{R_{0}}{R_{0} - r_{0}} \textup{ (for cases $(a)$ and $(c)$)} \quad \textup{and} \quad k := \tfrac{R_{0}}{R_{0} + r_{0}} \textup{ (for the case $(b)$)},
$$ 
and we observe that for $(a)$ we have that $k > 1$; for $(b)$ $k$ satisfies that $0<k<1$; and finally $k<0$ for $(c)$. The limit case where $k$ becomes $1$ corresponds to the classical Chaplygin ball on a plane, see \cite{BMChaplygin2002, Chaplygin2002}. 

According to \cite{Fedorov2008, BorisovMamaev2002, BorisovMamaev2008}, for an arbitrary value of $k$,  it is not possible to guarantee (Euler-Jacobi) integrability because, even having an invariant measure, the integrability is achieved by presenting an extra first integral (see a more detailed discussion in \cite{BorisovMamaev2008}).  
For the particular case $k=-1$ (i.e., case $(c)$ with $r_{0} = 2 R_{0}$), it was shown in 
\cite{BorisovFedorov1995, BorisovMamaevMarikhin2008} the explicit existence of an extra first integral and hence obtaining integrability.

The case $k=-1$ was originally studied in \cite{BorisovFedorov1995, BorisovMamaevMarikhin2008} and it became known as the Borisov-Mamaev-Fedorov system. This case is noteworthy because the system is hamiltonizable on the zero level set of the aforementioned first integral. In the present paper we show --from an intrinsic view point and using the techniques form Section \ref{S:general-0level}-- that the reduced 0-level set admits an almost symplectic structure recovering a result in \cite{Jovanovic2019} done in coordinates. Moreover, we put in evidence the Chaplygin character of the (reduced) 0-leaf and we see also how the hamiltonization is a consequence of this fact. 

\paragraph{The dynamics of the system.} This system has two types of constraints: the holonomic constraint which is imposed by the fact that the ball is attached to the surface of the fixed sphere, and the nonholonomic one which is defined by the non-slipping condition. 

As usual we keep the notation of the previous example and in this case the inertia tensor $\mathbb{I}$ has positive entries $I_1, I_2,I_3$. 
Denote by $g\in SO(3)$ the orthogonal matrix relating both frames and by $\mathbf{r} := \overrightarrow{OC}$ the vector that points the center of mass of the moving sphere.  The fact that the ball is balanced and its motion is attached to the surface of the fixed sphere implies that the configuration space $Q$ is given by: 
$$
    Q = \left\{(g,\mathbf{r})\in SO(3)\times\mathbb{R}^{3}, \langle \mathbf{r} , \mathbf{r} \rangle = (R_{0}/k)^{2}\,\right\}\simeq SO(3)\times S^{2}.
$$
In what follows, we will use $(g,\mathbf{r})$ as redundant coordinates of $Q$.
The angular velocity of the ball (in body frame) 
$\Omega=(\Omega_1, \Omega_2,\Omega_3)$ 
are the coordinates associated to the  left invariant frame of vector fields $\{X_{1}^{L}, X_{2}^{L}, X_{3}^{L}\}$ on $SO(3)$. 
The nonholonomic constraint representing the non-slipping condition at the contact point $P$ is described as $\dot{\mathbf{r}} = (1-k) (g\Omega)\times \mathbf{r},$
and the Lagrangian is just the kinetic energy (even for the case of a spherical potential) $L = \frac{1}{2}\langle\mathbb{I}\Omega,\Omega\rangle + \frac{1}{2}m\langle\dot{\mathbf{r}},\dot{\mathbf{r}}\rangle,$
where $m$ is the total mass of the moving sphere, see e.g.,  \cite{Jovanovic2018}. 
The normal vector to the surface of the fixed sphere at the contact point $P$ is given by $\mathbf{n} = \frac{k}{R_{0}}g^{T}\mathbf{r}$. 
Following \cite{BorisovMamaev2008, Jovanovic2018}, the equations describing the motion of the system are
$$
    E_{\bf n}\dot{\Omega} = E_{\bf n}\Omega\times\Omega - mr_{0}^{2}\mathbf{n}\times(\Omega\times\dot{\mathbf{n}})\quad\textup{and}\quad \dot{\mathbf{n}} = -k \Omega \times \mathbf{n},
$$
where $E_{\bf n}:= \mathbb{I}+mr_{0}^{2}(\textup{Id}-\mathbf{n}\otimes \mathbf{n})$.

\paragraph{The geometric setting of the system.} 
Recall that $\{\lambda_1, \lambda_2,\lambda_3\}$ are the left invariant Maurer-Cartan 1-forms, and the constraint 1-forms can be written, for $i=1,2,3$, as 
$$
    \epsilon^{i} := dr_{i}-(1-k)(\mathbf{c}_{j}\times \mathbf{r})_{i}\lambda_{j} = dr_{i}-(1-k)\langle g^{T}e_{i}, \lambda\times g^{T}\mathbf{r}\rangle,
$$
where $c_1, c_2, c_3$ are the columns of the matrix $g$. 
The constraint distribution $D$ on $Q$ is given by 
$$
    D = \textup{span}\left\{X_{i} := X_{i}^{L}+(1-k)\langle \mathbf{c}_{i}\times \mathbf{r} , \partial_{\mathbf{r}}\rangle,\, i = 1,2,3\right\}.
$$
Consider the following basis of $TQ$ and its dual of $T^{\ast}Q$ given, respectively, by $
    \mathfrak{B}_{\mbox{\tiny{$TQ$}}} = \{X_{i},\partial_{r_{i}}\}$ 
    and $\mathfrak{B}_{\mbox{\tiny{$T^*Q$}}} =  \{\lambda_{i},\epsilon^{i}\}$,
 with coordinates $(\Omega, \dot{r})$ of $T_qQ$ and $({\bf M}, {\bf p})=(M_1,M_2,M_3;p_1,p_2,p_3)$ of $T_q^*Q$.
Then the constraint manifold $\mathcal{M} := \kappa^\flat(D)\subset T^{\ast}Q$ is given by
\begin{equation*}
    \mathcal{M} = \left\{(g,\mathbf{r},\mathbf{M},\mathbf{p}) \in T^{\ast}Q;\,p_{j} = m(1-k)(g\Omega\times\mathbf{r})_{j} \right\},
\end{equation*}
with  $M_{j} = (\mathbb{I}\Omega)_{j}+mr_{0}^{2}(\mathbf{n}\times(\Omega\times\mathbf{n}))_{j}$.

Let us denote by $\mathfrak{B}_{\mbox{\tiny{$T^*\M$}}} = \{\tilde \lambda_i, \tilde \epsilon^i, d{M_i}\}$ for $\tau_\subM^*\lambda_i=\tilde \lambda_i$ and $\tau_\subM^*\epsilon^i=\tilde \epsilon^i$, and its dual basis of vector fields $\mathfrak{B}_{\mbox{\tiny{$T\M$}}} = \{\mathcal{X}_i, \partial_{r_i}, \partial_{M_i}\}$.  
Since $\Omega_{\mathcal{C}} := \left.\Omega_\subM \right|_{\mathcal{C}} = \tilde{\lambda}_{i}\wedge dM_{i} - \langle \mathbf{M}+km\,r_{0}^{2}\mathbf{n}\times(\Omega\times \mathbf{n}),d\tilde{\lambda}\rangle|_{\mathcal{C}}$, the nonholonomic bivector field $\pi_{\nh}$ on $\mathcal{M}$ is given by
$$
    \pi_{\nh} = \mathcal{X}_{i}\wedge \partial_{M_{i}} - \textup{cyclic}\left[\,  (\mathbf{M}+kmr_{0}^{2}(\Omega-\langle\mathbf{n},\Omega\rangle\mathbf{n}))_{1}\partial_{M_{2}}\wedge\partial_{M_{3}} \, \right]. 
$$
The Hamiltonian function $
H_{\subM} := \left.H\right|_{\subM} = \frac{1}{2}\langle\mathbf{M},\Omega\rangle,$
determines the nonholonomic vector field 
$$
    X_{\nh} := -\pi_{\nh}^{\sharp}(dH_{\subM}) = \langle\Omega,\mathcal{X}\rangle+\langle\mathbf{M}\times\Omega,\partial_{\mathbf{M}}\rangle,
$$
where $\mathcal{X} = (\mathcal{X}_1, \mathcal{X}_2, \mathcal{X}_3)$.

\subsection*{The BMF-case ($k=-1$)}
Recall that the Borisov-Mamaev-Fedorov case represents the situation $(c)$ with $r_0=2R_0$.  

\paragraph{Reduction by symmetries.} Following \cite{Jovanovic2018,Jovanovic2019}, consider the left diagonal action of the Lie group $G := SO(3)$ on $Q$ given, at each $h\in SO(3)$ and $(g,{\bf r})\in Q$, by $h\cdot (g,\mathbf{r}) = (hg,h\mathbf{r})$.
The action is proper and free and its tangent lift is given by $h\cdot (g,\mathbf{r},\Omega,\dot{\mathbf{r}}) = (hg,h\mathbf{r},\Omega,h\dot{\mathbf{r}})$. It is straightforward to see that the Lagrangian $L$ and $D$ are both $G$-invariant. The Lie algebra $\mathfrak{g} :=  Lie(SO(3))$ is isomorphic to $\mathbb{R}^{3}$, and the infinitesimal generators are given by 
\begin{equation}\label{Eq:BMFinfGen}
    V_{i} := ({\bf e}_{i})_{Q} = -(\mathbf{r}\times\partial_{\mathbf{r}})_{i}+\langle g^{T}e_{i},X\rangle\quad i = 1,2,3,
\end{equation}
for $X=(X_1, X_2,X_3)$, showing that the dimension assumption is satisfied.  Furthermore, note that the distribution ${S} := {D}\cap {V}$ is generated by the vector field $Y:=\langle\mathbf{n},X\rangle$.
The cotangent lift leaves the constraint manifold $\mathcal{M}$ invariant and the corresponding orbit projection 
$\rho : \mathcal{M} \rightarrow \mathcal{M}/G$ is given by 
$\rho(g,\mathbf{r},\mathbf{M}) = (\mathbf{n},\mathbf{M})$ and we also observe that 
$\mathcal{M}/G \simeq S^{2}\times\mathbb{R}^{3}$. 

\paragraph{The conserved quantity.} 
Following \cite{Fedorov2008, BorisovMamaev2002, BorisovMamaev2008, Tsiganov2011}, we define the function $F : \mathcal{M}\rightarrow \mathbb{R}$ given, at $m=(g, {\bf r}, {\bf M}) \in \M$, by
$$
    F(m) := \langle A\mathbf{M},\mathbf{n}\rangle,
$$
where $A$ is the $3\times 3$ matrix given by $A:= tr(E)\textup{Id}-2E$  and $E := \mathbb{I}+mr_{0}^{2}\textup{Id}$. We can check that $F$ is a $G$-invariant conserved quantity of $X_{\nh}$. Moreover $F$ is a $D$-momentum since $F=  {\bf i}_{\langle A\mathbf{n},\mathcal{X}\rangle}\Theta_{\subM}$ with $\langle A\mathbf{n},\mathcal{X}\rangle \in {\mathcal C}_m$, see Def.~\ref{Def:D-momentum}. But observe that $F$ is not a horizontal gauge momentum since $\langle A\mathbf{n},\mathcal{X}\rangle\not\in \mathcal{S}_m$. 

\begin{remark}
    Following \cite{Fasso2008}, a {\it weakly Noetherian function} is a function linear on the fibers that is a first integral of $X_\nh= -\pi_\nh^\sharp(dH_\subM)$ for any potential energy considered in the Hamiltonian function $H_\subM$. Moreover, in \cite{Fasso2008} it is proven that every $G$-invariant linear  weakly Noetherian function is, in fact, a horizontal gauge momentum.  In our case, $F$ is not a horizontal gauge momentum, but we also checked that it is not a weakly Noetherian function.  
\end{remark}

\paragraph{The almost symplectic structure of the zero-level set of $F$.} Even if we  are not under the conditions of the  momentum map reduction procedure described in Sec.~\ref{Ss:Nonholonomic_momentum_reduction}, we can still study the zero-level set of the conserved quantity $F$ as done in Sec.~\ref{S:general-0level}. Using Theorem \ref{T:general0-level}, we recover the almost symplectic structure obtained in \cite{Fedorov2008} from a geometrically intrinsic view-point. 

More precisely, we check that $\widetilde{S}= \textup{span}\{\langle A{\bf n}, {X}\rangle\}$ satisfies the hypothesis of Lemma \ref{L:tildeH}, i.e., $\textup{rank}(\widetilde{S})=\textup{rank}(S)=1$ and $S \cap \tilde S^\perp = \{0\}$.  Therefore the distribution $\widetilde{H}:= \widetilde{S}^\perp \cap D  \subset D$ satisfies $\kappa^{\flat}(\widetilde{H}) = F^{-1}(0) \subset \mathcal{M}$ and we conclude that $TQ = \widetilde{H}\oplus V$.

Consider the $G$-Chaplygin system given by the same Lagrangian $L$ but with constraint distribution given by $\widetilde{H}$. According to \cite{Koiller92}, the manifold $F^{-1}{(0)}/G$ is diffeomorphic to the cotangent manifold $T^{\ast}(Q/G) = T^{\ast}(S^{2})$, and the  reduced dynamics restricted to $F^{-1}{(0)}/G\simeq T^{\ast}(S^{2})$, denoted by $X_{\red}^{0}$, is Hamiltonian with respect to the almost symplectic $2$-form, given by
$$
    \widetilde{\omega} = \omega_{\mbox{\tiny{can}}} - \hat{B}_{\mbox{\tiny{$\langle J\!,\! \!\widetilde{\mathcal{K}}\rangle$}}},
$$
where $\omega_{\mbox{\tiny{can}}}$ is the canonical symplectic form on $T^{\ast}S^{2}$ and $\hat{B}_{\mbox{\tiny{$\langle J\!,\! \!\widetilde{\mathcal{K}}\rangle$}}}$ is defined in Theorem~\ref{T:general0-level}$(ii)$ (see also \cite{Koiller92}) as being $\phi^*\hat{B}_{\mbox{\tiny{$\langle J\!,\! \!\widetilde{\mathcal{K}}\rangle$}}} = B_{\mbox{\tiny{$\langle J\!,\! \!\widetilde{\mathcal{K}}\rangle$}}}$.  Therefore we conclude here that the 0-level set of the $D$-momentum $F$ has also a structure of a Chaplygin leaf.  

According to \cite{Yaroshchuk1992}, the system has an invariant measure, and since $Q/G$ has dimension 2, by the Chaplygin's reducing multiplier Theorem \cite{ChaplyginRedMult2008}, the form $\widetilde{\omega}$ admits a conformal factor. Then, the nonholonomic system on $F^{-1}{(0)}/G$ is symplectic after a time reparametrization. 

For completeness, the reduced nonholonomic bivector $\pi_{\red}$ on $\mathcal{M}/G$ is 
$$
    \pi_{\red} = (\mathbf{n}\times\partial_{\mathbf{n}})_{i}\wedge\partial_{M_{i}}-\textup{cyclic}\left[ \, (\mathbb{I}\Omega)_{1}\partial_{M_{2}}\wedge\partial_{M_{3}} \, \right],
$$
and, moreover, since ${H}_{\red} = \frac{1}{2}\langle\mathbf{M},\Omega\rangle$, the reduced nonholonomic vector field $X_{\red}$ is given by 
$X_{\red} = -\langle\mathbf{n}\times\Omega,\partial_{\mathbf{n}}\rangle+\langle\mathbf{M}\times\Omega,\partial_{\mathbf{M}}\rangle. $

\paragraph{Local coordinate approach.} Now, we use suitable coordinates to formulate explicitly the identification of the reduced zero level $F^{-1}(0)/G$ with the canonical symplectic manifold $T^{\ast}(S^{2})$ plus a magnetic term. First, define the following functions 
$$
    g(\mathbf{n}) := \frac{\langle \mathbb{I}A\mathbf{n},\mathbf{n} \rangle}{\langle\mathbb{I}A\mathbf{n} + mr_{0}^{2}\mathbf{n}\times(A\mathbf{n}\times \mathbf{n}),A\mathbf{n}\rangle}\quad\textup{and}\quad h_{j}(\mathbf{n}) := ((\mathbb{I}\mathbf{n}) \times (\mathbb{I}A\mathbf{n} + mr_{0}^{2}\mathbf{n}\times(A\mathbf{n}\times \mathbf{n})))_{j}, 
$$
and the vector fields $U_{1} := \langle \mathbf{n}-g(\mathbf{n})A\mathbf{n},X\rangle\quad\textup{and}\quad U_{2} := \langle \mathbf{h},X\rangle,$
where $\mathbf{h} := (h_{1},h_{2},h_{3})$.  Second, observe that $\widetilde{H} = \textup{span}\{U_1, U_2\}$ and $D=\widetilde{H}\oplus S = \textup{span}\{U_{1},U_{2},Y\}$.

Locally, if we consider $r_{3}\not=0$, we determine the distribution $W\subset V$, generated by the vector fields $Z_{1} := -V_{2}/r_{3}$ and $Z_{2} := V_{1}/r_{3}$ where $V_{1}$ and $V_{2}$ are given in (\ref{Eq:BMFinfGen}). Then $\widetilde{\mathfrak{B}}_{\mbox{\tiny{$TQ$}}} = \{U_{1},U_{2},Y,Z_{1},Z_{2}\}$ is an adapted basis of the splitting  $TQ = \widetilde{H}\oplus S\oplus W$, with dual basis $\widetilde{\mathfrak{B}}_{\mbox{\tiny{$T^*Q$}}} = \left\{\sigma^{1} ,\sigma^{2}, \sigma^{Y},\epsilon^{1},\epsilon^{2}\right\}$, where 
$$
    \sigma^{1} \!:= -\frac{\langle \mathbf{h},d\mathbf{n}\rangle}{g(\mathbf{n})\langle \mathbf{h},A\mathbf{n}\times \mathbf{n}\rangle}, 
\sigma^{2}\! := -\frac{\langle A\mathbf{n},d\mathbf{n}\rangle}{\langle \mathbf{h},A\mathbf{n}\times\mathbf{n}\rangle}, 
\sigma^{Y}\!\! := \frac{\langle \mathbb{I}A\mathbf{n} + mr_{0}^{2}\mathbf{n}\times(A\mathbf{n}\times \mathbf{n}),r_{3}\lambda+\gamma\times g^{T}\epsilon\rangle}{r_{3}\langle\mathbb{I}A\mathbf{n},\mathbf{n}\rangle}.
$$
We denote by $(v^{1},v^{2},v^{Y},v^{z_{1}},v^{z_{2}})$ and $(p_{1},p_{2},p_{Y},p_{z_{1}},p_{z_{2}})$ the respective coordinates in $TQ$ and $T^{\ast}Q$ associated to the bases $\widetilde{\mathfrak{B}}_{\mbox{\tiny{$TQ$}}}$ and $\widetilde{\mathfrak{B}}_{\mbox{\tiny{$T^*Q$}}}$.

The constraint manifold $\M$ is determined by the coordinates $(g,\mathbf{n}, p_{1},p_{2},p_{Y})$ where $p_{z_{a}} = \frac{\kappa_{1a}}{\kappa_{11}}p_{1}+\frac{\kappa_{2a}}{\kappa_{22}}p_{2}+\frac{(\kappa_{ya}-\kappa_{1a})}{(\kappa_{yy}-\kappa_{11})}(p_{Y}-p_{1})$, for $\kappa_{\alpha a} = \kappa(U_\alpha, Z_a)$ and $\kappa_{ya} = \kappa(Y, Z_a)$; and the Hamiltonian takes the form
$$
    H_{\subM}(g,\mathbf{n}, p_{1},p_{2},p_{Y}) = \frac{1}{2}\left( \frac{p_{1}^{2}}{\kappa_{11}}+\frac{p_{2}^{2}}{\kappa_{22}}+\frac{(p_{Y}-p_{1})^{2}}{\kappa_{yy}-\kappa_{11}}\right).
$$

The conserved quantity is written as $F(g,\mathbf{n}, p_{1},p_{2},p_{Y}) = \frac{(p_{Y}-p_{1})}{g(\mathbf{n})}$ defining a foliation given by the ($G$-invariant) leaves 
$\M_{c} := F^{-1}(c) = \{(g,\mathbf{n}, p_{1},p_{2},p_{Y})\in \mathcal{M};\, p_{Y} = p_{1}+cg(\mathbf{n})\}.$
In particular, for $c=0$  the reduced manifold $F^{-1}(0)/G$ is identified with the coordinates $({\bf n}, p_1,p_2)$ for ${\bf n}\in S^2$ showing that $F^{-1}(0)/G \simeq T^*(S^2)$ in agreement with Sec.~\ref{S:general-0level}. 

Moreover, let us denote by $\widetilde \sigma^i$ the 2-forms on $F^{-1}(0)/G$ so that $\rho_\subQ^*\widetilde \sigma^i = \sigma^i$ for $i=1,2$ and $\rho_\subQ: Q\to Q/G$, and by $C_{12}^Y, C_{12}^{a}$ (for $a=1,2$), the functions on $S^2$ defined by $C_{12}^Y :=\sigma^Y([U_1,U_2])$ and $C_{12}^{a} :=\epsilon^a([U_1,U_2])$. Therefore, the almost symplectic form $\widetilde \omega = \omega_{\mbox{\tiny{can}}} - \hat{B}_{\mbox{\tiny{$\langle J\!,\! \!\widetilde{\mathcal{K}}\rangle$}}}$ defining the reduced dynamics is given, in our chosen basis, by 
$$
    \omega_{\mbox{\tiny{can}}} := \widetilde \sigma^{j}\wedge dp_{j}-p_{j}d\widetilde\sigma^{j} \quad \textup{and}\quad \hat{B}_{\mbox{\tiny{$\langle J\!,\! \!\widetilde{\mathcal{K}}\rangle$}}} := (p_YC_{12}^Y  + p_{a}C_{12}^{a})\widetilde\sigma^{1}\wedge\widetilde\sigma^{2}, \quad \textup{where} \quad a=1,2,
$$
with Hamiltonian function $H_{\red}^0(\mathbf{n}, p_{1},p_{2}) = \frac{1}{2}\left( \frac{p_{1}^{2}}{\kappa_{11}}+\frac{p_{2}^{2}}{\kappa_{22}}\right).$

\begin{small}
\setlength{\itemsep}{-2cm}
\bibliographystyle{acm}
\bibliography{refs}
\end{small}

\end{document}